\DeclareFontFamily{U}{mathx}{\hyphenchar\font45}
\DeclareFontShape{U}{mathx}{m}{n}{
      <5> <6> <7> <8> <9> <10>
      <10.95> <12> <14.4> <17.28> <20.74> <24.88>
      mathx10
      }{}
\DeclareSymbolFont{mathx}{U}{mathx}{m}{n}
\DeclareMathSymbol{\bigtimes}{1}{mathx}{"91}
\definecolor{DarkRed}{rgb}{0.5,0.1,0.1}
\definecolor{DarkBlue}{rgb}{0.1,0.1,0.5}
\definecolor{ForestGreen}{rgb}{0.1333,0.5451,0.1333}
\definecolor{Red}{rgb}{0.9,0,0}
\crefname{property}{property}{Property}
\crefname{equation}{eq}{Eq}
\tikzset{vertex/.style={circle, black, fill=Yellow, line width=1pt, draw, minimum width=8pt, minimum height=8pt, inner sep=0pt}}
\def\BState{\State\hskip-\ALG@thistlm}
\newtheorem{theorem}{Theorem}
\newtheorem{lemma}{Lemma}[section]
\newtheorem{corollary}[lemma]{Corollary}
\newtheorem{claim}[lemma]{Claim}
\newtheorem{fact}[lemma]{Fact}
\newtheorem{definition}[lemma]{Definition}
\newtheorem*{claim*}{Claim}
\newtheorem*{theorem*}{Theorem}
\newtheorem*{proposition*}{Proposition}
\newtheorem*{lemma*}{Lemma}
\newtheorem*{problem*}{Problem}
\crefname{lemma}{Lemma}{Lemmas}
\crefname{claim}{Claim}{Claims}
\newtheorem{mdresult}{Result}
\newenvironment{result}{\begin{mdframed}[backgroundcolor=lightgray!40,topline=false,rightline=false,leftline=false,bottomline=false,innertopmargin=2pt, innerleftmargin=10pt]\begin{mdresult}}{\end{mdresult}\end{mdframed}}
\newtheorem*{remark*}{Remark}
\newtheoremstyle{restate}{}{}{\itshape}{}{\bfseries}{~(restated).}{.5em}{\thmnote{#3}}
\theoremstyle{restate}
\theoremstyle{definition}
\newtheorem{mdalg}{Algorithm}
\newenvironment{Algorithm}{\begin{tbox}\begin{mdalg}}{\end{mdalg}\end{tbox}}
\newtheorem{mddist}{Distribution}
\newenvironment{Distribution}{\begin{tbox}\begin{mddist}}{\end{mddist}\end{tbox}}
\newtheorem*{mdinvariant}{Example}
\DeclareMathOperator*{\argmax}{arg\,max}
\renewcommand{\qed}{\nobreak \ifvmode \relax \else
      \ifdim\lastskip<1.5em \hskip-\lastskip
      \hskip1.5em plus0em minus0.5em \fi \nobreak
      \vrule height0.75em width0.5em depth0.25em\fi}
\newcommand{\cc}{{\mathbf{CC}}}
\newcommand{\ic}{{\mathbf{IC}}}
\newcommand{\bD}{{\mathbf{D}}}
\newcommand{\cB}{{\mathcal{B}}}
\newcommand{\cX}{{\mathcal{X}}}
\newcommand{\cY}{{\mathcal{Y}}}
\newcommand{\rI}{{\rv{I}}}
\newcommand{\ur}{\textbf{UR}^\subset}
\newcommand{\dur}{\textbf{UR}^\subset_{\textrm{dec}}}
\newcommand{\mur}{\textbf{UR}^\subset_{\textrm{min}}}
\newcommand{\dmur}{\textbf{UR}^\subset_{\textrm{min,dec}}}
\newcommand{\mst}{\textbf{MST}}
\newcommand{\atpc}{\textbf{ATPC}}
\newcommand{\hintatpc}{\textbf{Hint-ATPC}}
\newcommand{\sender}{\texttt{sender}}
\newcommand{\receiver}{\texttt{receiver}}
\newcommand{\tvd}[2]{\ensuremath{\norm{#1 - #2}_{\mathrm{tvd}}}}
\newcommand{\Ot}{\ensuremath{\widetilde{O}}}
\newcommand{\eps}{\ensuremath{\varepsilon}}
\newcommand{\bracket}[1]{\left[#1\right]}
\newcommand{\paren}[1]{\ensuremath{\left(#1\right)}\xspace}
\newcommand{\card}[1]{\left\vert{#1}\right\vert}
\newcommand{\IN}{\ensuremath{\mathbb{N}}}
\newcommand{\norm}[1]{\ensuremath{\|#1\|}}
\newcommand{\ceil}[1]{{\left\lceil{#1}\right\rceil}}
\newcommand{\floor}[1]{{\left\lfloor{#1}\right\rfloor}}
\newcommand{\set}[1]{\ensuremath{\left\{ #1 \right\}}}
\newcommand{\poly}{{\mathrm{poly}}}
\DeclareMathOperator*{\Exp}{\ensuremath{{\mathbb{E}}}}
\DeclareMathOperator*{\Prob}{\ensuremath{\textnormal{Pr}}}
\renewcommand{\Pr}{\Prob}
\newcommand{\Ex}{\Exp}
\newenvironment{tbox}{\begin{tcolorbox}[
		enlarge top by=5pt,
		enlarge bottom by=5pt,
		 boxsep=2pt,
                  left=5pt,
                  right=7pt,
                  top=10pt,
                  arc=0pt,
                  boxrule=1pt,toprule=1pt,
                  colback=white
                  ]
	}
{\end{tcolorbox}}
\newcommand{\event}{\ensuremath{\mathcal{E}}}
\newcommand{\rv}[1]{\ensuremath{{\mathsf{#1}}}\xspace}
\newcommand{\rA}{\rv{A}}
\newcommand{\rB}{\rv{B}}
\newcommand{\rC}{\rv{C}}
\newcommand{\rD}{\rv{D}}
\newcommand{\rY}{\rv{Y}}
\newcommand{\supp}[1]{\ensuremath{\textnormal{\text{supp}}(#1)}}
\newcommand{\distribution}[1]{\ensuremath{\textnormal{dist}(#1)}\xspace}
\newcommand{\kl}[2]{\ensuremath{\mathbb{D}(#1~||~#2)}}
\newcommand{\II}{\ensuremath{\mathbb{I}}}
\newcommand{\HH}{\ensuremath{\mathbb{H}}}
\newcommand{\mi}[2]{\ensuremath{\def\mione{#1}\def\mitwo{#2}\mireal}}
\newcommand{\mireal}[1][]{
  \ifx\relax#1\relax%
    \II(\mione \,; \mitwo)%
  \else%
    \II(\mione \,; \mitwo\mid #1)%
  \fi
}
\newcommand{\en}[1]{\ensuremath{\HH(#1)}}
\newcommand{\itfacts}[1]{\Cref{fact:it-facts}-(\ref{part:#1})\xspace}
\newcommand{\cD}{\mathcal{D}}
\newcommand{\bias}{\ensuremath{\mathrm{bias}}}
\newcommand{\rX}{\rv{X}}
\newcommand{\rM}{\rv{M}}
\title{Optimal Multi-Pass Lower Bounds for MST in Dynamic Streams}
\author{Sepehr Assadi\footnote{(\texttt{sepehr@assadi.info})  Cheriton School of Computer Science, University of Waterloo, and Department of Computer
Science, Rutgers University.}\and
Gillat Kol\footnote{(\texttt{gillat.kol@gmail.com}) Department of Computer Science, Princeton University.}  \and 
Zhijun Zhang\footnote{(\texttt{zhijunz@princeton.edu}) Department of Computer Science, Princeton University.}  
}
\date{}
\begin{document}

\maketitle

\pagenumbering{roman}


\begin{abstract}

\medskip

The seminal work of Ahn, Guha, and McGregor in 2012 introduced the graph sketching technique and used it to present the first streaming algorithms
for various graph problems over \emph{dynamic} streams with both insertions and deletions of edges. This includes algorithms for cut sparsification, 
spanners, matchings, and minimum spanning trees (MSTs). These results have since been improved or generalized in various directions, leading to a vastly rich host of efficient algorithms for processing dynamic graph streams. 

\medskip

A curious omission from the list of improvements has been the MST problem. The best algorithm for this problem remains the original AGM algorithm that for every integer $p \geq 1$, uses $n^{1+O(1/p)}$ space in $p$ passes on $n$-vertex graphs, and thus
achieves the  desired {semi-streaming} space of $\Ot(n)$ at a relatively high cost of $O(\frac{\log{n}}{\log\log{n}})$ passes. On the other hand, no lower bounds beyond a folklore one-pass lower bound is known for this problem. 

\medskip

We provide a simple explanation for this lack of improvements: 
\begin{quote}
\emph{The {AGM algorithm for MSTs is optimal} for the entire range of its number of passes!} 
\end{quote}
We prove that 
even for the simplest \emph{decision} version of the problem---deciding whether the weight of MSTs is at least a given threshold or not--- 
any $p$-pass dynamic streaming algorithm requires $n^{1+\Omega(1/p)}$ space. This implies that semi-streaming algorithms do need $\Omega(\frac{\log{n}}{\log\log{n}})$ passes. 

\medskip

Our result relies on proving new \textbf{multi-round} communication complexity lower bounds for a variant of the \emph{universal relation} problem that has been instrumental in proving 
prior lower bounds for \emph{single-pass} dynamic streaming algorithms. The proof also involves proving new composition theorems in communication complexity, including majority lemmas and multi-party XOR lemmas, via information complexity approaches.

\end{abstract}

\clearpage

\setcounter{tocdepth}{3}
\tableofcontents
\clearpage

\pagenumbering{arabic}
\setcounter{page}{1}



%
%
%
%
%
%

\clearpage

\section{Introduction}\label{sec:intro}

In the \textbf{dynamic graph streaming} model, we have a (possibly edge-weighted) graph $G=(V,E)$ with vertices $V := \set{1,2,\ldots,n}$, whose edges and their weights are being defined by a sequence of insertions and deletions in a stream $\sigma := (\sigma_1,\sigma_2,\ldots,\sigma_N)$; here, $N$ is the length of the stream which is typically assumed to be $\poly(n)$. 
Each entry $\sigma_i$ is either of the form $(u_i,v_i,w_i,+)$ for $u_i,v_i \in V$ and $w_i \in \IN$ and is interpreted as the edge $(u_i,v_i)$ with weight $w(u_i,v_i) = w_i$ being inserted to $E$,
or $(u_i,v_i,w_i,-)$ which means the edge $(u_i,v_i)$ with the given weight $w_i$ is being deleted.  We are guaranteed that the stream does not delete an edge which is not inserted, does not insert an edge more than once before deleting it in the middle, and that the weight of a deleted edge matches its weight at the time of insertion\footnote{In particular, no ``partial updates'' to the edge weights are allowed and the stream needs to delete the edge ``fully'' first (and provide its weight) and then re-inserts it possibly with another  weight; see~\cite{ChenKL22} for more details on this.}. 
The goal is to make one or a few {sequential passes} over the stream $\sigma$, use a {limited memory}---ideally, $\Ot(n) := O(n \cdot \poly\!\log{\!(n)})$ bits, referred as the \textbf{semi-streaming space}---and compute an answer to the given problem on $G$ at the \emph{end} of the last pass.

Dynamic streams (not necessarily for graphs) have been studied extensively in the streaming literature  since the introduction of the model in~\cite{AlonMS96}, e.g., for statistical estimation problems~\cite{CharikarCF02} or geometric problems~\cite{FrahlingIS05}. 
However, despite the significant attention graph streams have received since their introduction in~\cite{FeigenbaumKMSZ05}, \emph{dynamic} graph streams were not studied for quite some time due to lack of any techniques
for addressing problems in this domain. 

This state-of-affairs was entirely changed by a seminal work of Ahn, Guha, and McGregor (henceforth, AGM)~\cite{AhnGM12a} who introduced the \emph{graph sketching} technique and used it to devise dynamic graph streaming algorithms
for several fundamental problems, including connectivity, minimum spanning trees, cut sparsifiers, and matchings. This immediately led to a flurry of results on dynamic graph streaming algorithms, \emph{all} using the graph sketching technique%
\footnote{The results in~\cite{LiNW14,AiHLW16} show that this is not a coincidence: any dynamic graph streaming algorithms that can handle triply-exponential long streams and doubly-exponential edge-multiplicities (in the middle of the stream), 
can be turned into a graph sketch. While these restrictions seem quite strong, almost all known graph streaming algorithms can handle such inputs as well. However, in this work, we will \emph{not} rely on this characterization.}, that either improved upon~\cite{AhnGM12a} or extended its results to various other problems; see, e.g.,~\cite{AhnGM12b,AhnGM13,KapralovLMMS14,BhattacharyaHNT15,ChitnisCHM15,McGregorTVV15,GuhaMT15,AhnCGMW15,BuryS15,AssadiKLY16,ChitnisCEHMMV16,HuangP16,FiltserKN21} and references therein.  

One of the very few problems that saw \emph{zero} improvement since~\cite{AhnGM12a} is the \emph{minimum spanning tree (MST)} problem.~\cite{AhnGM12a} designed a dynamic streaming algorithm that for every integer $p \geq 1$, with high probability, 
finds an MST of the input graph using $n^{1+O(1/p)}$ space and $p$ passes. Specifically, this leads to an $O(\frac{\log{n}}{\log\log{n}})$-pass semi-streaming algorithm. No better algorithms have been designed for this problem yet, despite the fact that in \emph{insertion-only} streams, a simple single-pass semi-streaming algorithm has already been known since~\cite{FeigenbaumKMSZ05}.

We provide a simple explanation for this lack of improvements: 
\begin{quote}
\emph{The {AGM algorithm for MSTs is optimal} for the entire range of its number of passes!} 
\end{quote}
Specifically, semi-streaming algorithms for MSTs require $\Omega(\frac{\log{n}}{\log\log{n}})$ passes. Beside settling the complexity of the fundamental MST problem in the semi-streaming model, 
this also constitutes one of the strongest separations between the power of insertion-only streams and dynamic graph streams; see, e.g.~\cite{AssadiKLY16,DarkK20} that prove such separations
only between single-pass algorithms (for the approximate matching problem).

\subsection{Our Contributions}

We now discuss our contributions in more detail. Our main result establishes the optimality of the MST algorithm of~\cite{AhnGM12a}. 

\begin{result}\label{res:main}
	For any integer $p = o(\frac{\log n}{\log\log n})$, any $p$-pass dynamic streaming algorithm on $n$-vertex graphs requires $\tilde{\Omega}(n^{1+\frac{1}{2p-1}})$ space to solve the minimum spanning tree problem with constant probability.
	The lower bound applies even if the edge weights and the length of the stream are both at most $O(n^2)$ and the algorithm only needs to decide whether the weight of minimum spanning trees is at least a given threshold.
\end{result}

Prior to our work, no lower bounds were known for the MST problem in dynamic streams beside a single-pass lower bound of $\Omega(n^2)$ space\footnote{To our knowledge, this lower bound appears to have been folklore and we do not know a reference for it.}. 
Another immediate corollary of our result is a strong limitation on the power of the graph sketching technique. While graph sketching has been extremely successful for problems such as 
cut- or spectral-sparsification~\cite{AhnGM12b,AhnGM13,KapralovLMMS14}, it appears to be quite weak for the MST problem, even when allowed ``many'' rounds of adaptive sketching.  

It is worth mentioning that our lower bound indeed only holds for \emph{exact} MSTs. For the relaxed version of the problem, wherein the goal is to obtain a $(1 + \eps)$-approximation instead,~\cite{AhnGM12a} already presents a 
single-pass semi-streaming algorithm. On the other hand, we prove our lower bound for exact MSTs for the algorithmically easiest \emph{decision} version of the problem: 
given a threshold at the beginning of the stream, decide whether the weight of MSTs is at least as large as this threshold or not. It is also worth mentioning that many problems admit provable separations between their search versus decision variants in the dynamic streaming model; see,~e.g.~\cite{AssadiKL17} for an example of a separation for finding approximate matchings versus estimating the size of the largest matchings via single-pass algorithms (or in~\cite{AssadiKL16} for the streaming set cover problem). 

\paragraph{Our techniques.}~\Cref{res:main} relies on proving a new \emph{multi-round} communication complexity lower bound for a \emph{non-standard composition} of a variant of the \textbf{Universal Relation (UR)} problem. 
UR has been instrumental in proving 
prior lower bounds for \emph{single-pass} dynamic streaming algorithms~\cite{JowhariST2011,KapralovNPWWY17,NelsonY19} (see also~\cite{Yu21}). In this problem, there is a universe $U$ of $m$ elements; Alice receives a set $A \subseteq U$ 
and Bob receives a proper subset $B \subset A$. The communication is only from Alice to Bob. Prior work has shown that in order for Bob to output \emph{any} element from $A \setminus B$, 
Alice needs to communicate $\Omega(\log^2{m})$ bits to succeed with constant probability~\cite{JowhariST2011} or $\Omega(\log^3{m})$ bits for high probability~\cite{KapralovNPWWY17}. 

We start by proving that any $r$-round protocol---wherein Alice and Bob can communicate back and forth at most $r$ times---for outputting the \emph{smallest} element in $A \setminus B$ (as opposed to outputting any one) requires $\Omega_r(m^{1/r})$ communication. We can then combine this with standard direct-sum arguments in communication complexity (see, e.g.~\cite{BarakBCR13}) to obtain that solving $m$ \emph{independent} copies of this problem
requires $\Omega_r(m^{1+1/r})$ communication. We then show how to reduce this to the problem of \emph{finding} MSTs in dynamic streams and prove a lower bound for the latter problem as well. This lower bound however does not extend to the decision problem (which is a common occurrence for other ``direct-sum UR-type'' reductions, e.g., in~\cite{NelsonY19} and~\cite{Yu21}).  

As we will explain in~\Cref{sec:overview}, to be able to extend the lower bound to the decision problem, the key ingredients used in our proof are: 
\begin{enumerate}[align=left]
	\item[\bf Direct sum with ``hint''.] At a high level, we will be dealing with a direct sum of a carefully defined variant of pointer chasing problems on trees.
		It differs from typical direct-sum arguments in that the reduction to MST demands knowing the sum of outputs of all copies, which {\em correlates} the copies.
		Our direct-sum result is obtained by directly carrying this extra bit of knowledge, named \emph{hint}, throughout the proof.
	\item[\bf Majority vs. XOR.] It turns out the most straightforward approach, which guesses the hint and conducts a typical direct-sum argument without the hint, can never work as it involves lower bounding majority computation of multiple copies with super low advantage.
		Simple coin toss examples will show that such a result is impossible.
		We work around this by a connection between majority computation with high advantage and XOR computation with low advantage.
		It enables us to utilize direct-sum results for XOR computation instead.
	\item[\bf Multi-party XOR lemma.] As known results are not strong enough for proving the optimal pass lower bound, we devise a multi-party XOR lemma, mimicking the $2$-party version of~\cite{Yu22}, that improves the dependence of communication in the number of rounds, while leading to a worse advantage decay.
		In particular, suppose each of $k$ pairs of $2$ parties are given $n/k$ instances of a boolean function $f$, and they want to jointly solve the $n$-fold XOR of all $n$ instances.
		We prove the following result which may be of independent interest.
\end{enumerate}

\begin{result}\label{res:xor}
	If any $r$-round, $2$-party protocol that solves $f$ with constant probability, requires $C$ communication, then any $r$-round, $2k$-party protocol that solves the $n$-fold XOR of $f$ with probability $\frac{1}{2}+(\frac{1}{2})^{\Omega(\frac{k}{r})}$, requires $\Omega(\frac{n}{k} \cdot (\frac{C}{r} - O(r)))$ communication.
\end{result}

The rest of this paper is organized as follows.
\Cref{sec:overview} provides a sketch of our proof in more detail.
Then we prove a suboptimal pass lower bound in~\Cref{sec:few} using the $2$-party XOR lemma of~\cite{Yu22}.
Our multi-party XOR lemma is presented in~\Cref{sec:optimal} and used to obtain the full version of our main result.

%
%
%
%
%



\section{Preliminaries}\label{sec:prelim}

\paragraph{Notation.}

For an integer $n \in \mathbb{N}$, $[n]$ is used as a shorthand for the set $\set{1,\ldots,n}$.
For a tuple $X = (X_1,\ldots,X_n)$, we write $X_{\le i} = (X_1,\ldots,X_i)$.
Similarly, we have $X_{\ge i}$ and $X_{< i},X_{> i}$.
We also use $X_{-i} = (X_1,\ldots,X_{i-1},X_{i+1},\ldots,X_n)$.
The XOR operation is denoted by $\oplus$.

Throughout this paper, sans-serif letters are reserved for random variables (e.g. $\rX$) while normal letters are used for realizations of the corresponding random variables (e.g. $x,X$).
For random variables $\rX,\rY$, we denote the \emph{Shannon entropy} of $\rX$ by $\en{\rX}$, the \emph{mutual information} between $\rX,\rY$ by $\mi{\rX}{\rY}$, the \emph{KL-divergence} between $\rX,\rY$ by $\kl{\rX}{\rY}$, and the \emph{total variation distance} between $\rX,\rY$ by $\tvd{\rX}{\rY}$.
\Cref{sec:info} provides necessary background on information theory, including the basic tools used in this paper.

\paragraph{Dynamic graph streaming.}

For a dynamic graph streaming problem, the input is a sequence of insertions and deletions of edges in an underlying graph, initially empty.
In every pass of an algorithm, it processes the operations, one at a time, in the given order.
At the end of the algorithm, it answers some query about the constructed graph resulting from all insertions and deletions.
Only the space requirement between operations is considered in this paper (i.e., unlimited memory is allowed while processing each operation).
We are interested in the problem $\mst_n$, which asks whether the weight of minimum spanning trees of an $n$-vertex graph is at least a given threshold.

\paragraph{Communication model.}

For the standard $2$-party communication model, we assume Alice sends the first message and the receiver of the last message returns the output.
Let $\cc(\pi)$ denote the \emph{communication complexity} of a protocol $\pi$, and $\cc^{(i)}(\pi)$ the communication complexity of the $i$-th round of $\pi$.
We also use $\ic(\pi)$ to denote the \emph{internal information cost} of $\pi$.
The \emph{distributional complexity} of $f$, denoted by $\bD^{(r)}_{\mu,\epsilon}(f)$, is defined as the infimum communication complexity of any $r$-round protocol solving $f$ with probability $\epsilon$ over $\mu$.

The multi-party communication model we use in this paper is formally defined as follows.
There are $2k$ parties named Alice $1,\ldots,k$ and Bob $1,\ldots,k$.
Each Alice has an input from $\cX$ and each Bob has an input from $\cY$.
There is a \emph{blackboard}, initially empty, visible to all parties.
The parties proceed in the \emph{circular} order of Alice $1,\ldots,k$ and Bob $1,\ldots,k$, starting with Alice $1$.
In one's turn, it computes a message given its input as well as the current blackboard, and posts the message to the blackboard.
At the end of the protocol, the last party returns an output (and does not post a message to the blackboard).
The communication complexity is defined as the length of the final blackboard.
The number of rounds is defined as the total number of times Alice $k$ and Bob $k$ post messages to the blackboard.
(So, e.g., a $1$-round protocol in general consists of Alice $1,\ldots,k$ and Bob $1,\ldots,k-1$ posting one message each, and Bob $k$ returning an output.)
In a randomized protocol, each party is allowed to use both public randomness, shared by all parties, and private randomness, known only to itself.
The goal is to compute a function $g$ over $\cX^k \times \cY^k$.
We similarly define the distributional complexity of $g$ in the $2k$-party model and denote it by $\bD^{(r),k}_{\mu,\epsilon}(g)$, where $\mu$ is a distribution over $\cX^k \times \cY^k$.
It can be verified that the multi-party model for $k=1$ coincides with the standard $2$-party model.
Moreover, $\bD^{(r),1}_{\mu,\epsilon}(\cdot) = \bD^{(r)}_{\mu,\epsilon}(\cdot)$.

In this paper, we are interested in the \emph{$k$-fold XOR} of a function $f : \cX \times \cY \to \set{0,1}$, defined as $f^{\oplus k}(x_1,\ldots,x_k,y_1,\ldots,y_k) = \bigoplus_{i \in [k]} f(x_i,y_i)$.
We also consider the \emph{$k$-fold majority}, denoted by $f^{\# k}$, which evaluates to $1$ if $f(x_i,y_i)=1$ for more than $\floor{k/2}$ indices $i \in [k]$, and $0$ otherwise.



\section{Technical Overview}\label{sec:overview}

This section serves as an outline of our proof.
As a starting point, in~\Cref{sec:search}, we first tackle the easier problem of proving a lower bound for the task of finding an MST solution, i.e., outputting the edges of an MST.
We then proceed to identify the primary challenges in extending our technique to give a lower bound for the algorithmically easier task of computing the weight of MSTs or even for the task of deciding whether it exceeds a specified threshold. In Section~\ref{sec:decision}, we discuss some of our initial attempts and their inherent limitations. Finally, we present the ultimate solution in Section~\ref{sec:multi-xor}.

\subsection{The Search Version}\label{sec:search}

\paragraph{Our hard instance.} We start by outlining our lower bound for the easier task of lower bounding the space complexity of steaming algorithms that output the edges of an MST.
To prove our lower bound, we design hard instances inspired by that of~\cite{NelsonY19,Yu21}, that were used to prove lower bounds for the Spanning Forest and Connectivity problems. See~\Cref{fig:search} for an illustration of our hard instances.
Our construction starts with a clique of size $n/2$.
Edges in the clique all have the minimum possible weight, say $0$.
Another $n/2$ vertices are added, one at a time, as follows.
For each non-clique vertex~$v$, it is randomly connected to some vertices in the clique, with distinct, positive edge weights. 
Later in the stream, we remove a proper subset of the edges incident on $v$.
Both the inserted and deleted edges follow some (non-uniform) hard distributions.
The concatenation of the clique edges (of weight $0$), followed by the edge insertions for all non-clique vertices, and then the edge deletions for all non-clique vertices, constitutes the entire stream.

\begin{figure}[!tbh]
	\centering

\begin{tikzpicture}

\tikzset{layer/.style={rectangle, rounded corners=5pt, draw, black, line width=1pt,  fill=black!10, inner sep=4pt}}
\tikzset{vertex/.style={circle, ForestGreen, fill=white, line width=2pt, draw, minimum width=8pt, minimum height=8pt, inner sep=0pt}}
\tikzset{choose/.style={rectangle, line width=1pt, rounded corners = 2pt, draw, minimum width=40pt, minimum height=16pt, fill=black!10}}

\begin{scope}[local bounding box=bb]
	\node[vertex] (w1) {};
	\foreach \i in {2,...,5}
	{
		\pgfmathtruncatemacro{\ip}{\i-1}
		\node[vertex] (w\i) [right=30pt of w\ip] {};
	}
\end{scope}

\begin{scope}[on background layer]
	\draw[layer, fill=gray!25] ($(bb.south west) - (10pt,10pt)$) rectangle ($(bb.north east) + (10pt,10pt)$);
\end{scope}

\node[vertex] (j1) [above=50pt of w1] {};

\foreach \i in {2,...,5}
{
	\pgfmathtruncatemacro{\ip}{\i-1}
	\node[vertex] (j\i) [right=30pt of j\ip] {};
}

\foreach \j in {2,...,5}
{
	\foreach \k in {1,...,5}
	{
		\draw[dashed, line width=1pt, gray, opacity=0.25] (j\j) -- (w\k);
	}
}

\draw[line width=2pt, red] (j1) -- (w1);
\draw[line width=2pt, blue] (j1) -- (w2);
\draw[line width=2pt, red] (j1) -- (w4);
\draw[line width=2pt, blue] (j1) -- (w5);

\end{tikzpicture}
	\caption{An illustration of hard instances for the search version of MST. Bottom vertices are fully connected. Each top vertex is connected to some bottom vertices via red edges (inserted and deleted) and blue edges (inserted but not deleted) -- to avoid clutter, only edges for the first vertex are drawn.}
	\label{fig:search}
\end{figure}
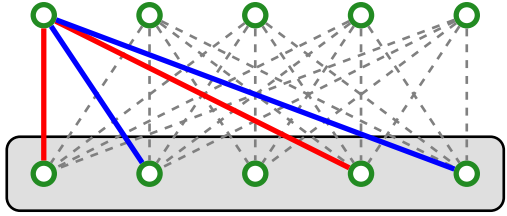

Observe that any MST of the constructed graph must have the following structure: a spanning tree connecting the clique, plus, for each non-clique vertex, the minimum weight edge that is not deleted connecting this vertex to the clique. As a consequence, the problem of finding an MST essentially reduces to the direct sum (i.e., solving multiple copies) of the following subproblem, which we denote by $\mur$: find the minimum element in the difference $A \setminus B$ of two sets $A,B$, where $B$ is promised to be a proper subset of $A$.

The problem $\mur$ can be viewed as an addition to the well-studied family of {\em Universal Relation problems}~\cite{KarchmerRW95}.
The work of \cite{NelsonY19} proves optimal lower bounds for Spanning Forest via one of its variants, $\ur$, in which it is sufficient to find {\em any} element in the difference $A \setminus B$, as opposed to finding the minimum element.
In particular, \cite{NelsonY19} use tight results from~\cite{KapralovNPWWY17} for the {\em one-way} communication complexity of $\ur$. However, this bound is only poly-logarithmic and therefore is too weak for our purposes. 
We prove that $\mur$ is hard even with multiple rounds of communication.
More specifically, we show that it admits an $r$ vs. $\Omega_r(m^{1/r})$ round-communication tradeoff, where $m$ is the size of the universe.
Given the canonical reduction from communication to streaming, this means any direct sum/product result for bounded-round two-party communication (e.g., \cite{JainPY12,BravermanRWY13}) suffices for lower bounding the search version of MST.

\paragraph{Augmented Tree Pointer Chasing.} We prove the round-communication tradeoff for $\mur$ by reduction from an ``augmented'' version of Pointer Chasing on trees\footnote{We note that $\mur$ is introduced here only for the purpose of illustration and to provide a better context. Our proofs in~\Cref{sec:few,sec:optimal} directly deal with the augmented version of Pointer Chasing with no reference to $\mur$. For completeness and since the lower bound for this problem may be of independent interest, we include its proof; see \Cref{cor:dmur}.}.
The starting point is the well-known {\em Augmented Index} problem~\cite{MiltersenNSW98}, in which Alice holds $x \in \set{0,1}^n$ while Bob is required to output $x_i$ given $i \in [n]$ and $x_{< i}$. 
It is an ``augmented'' version of Index in that Bob additionally knows $x_{< i}$, i.e., everything to the left of the pointer $i$.

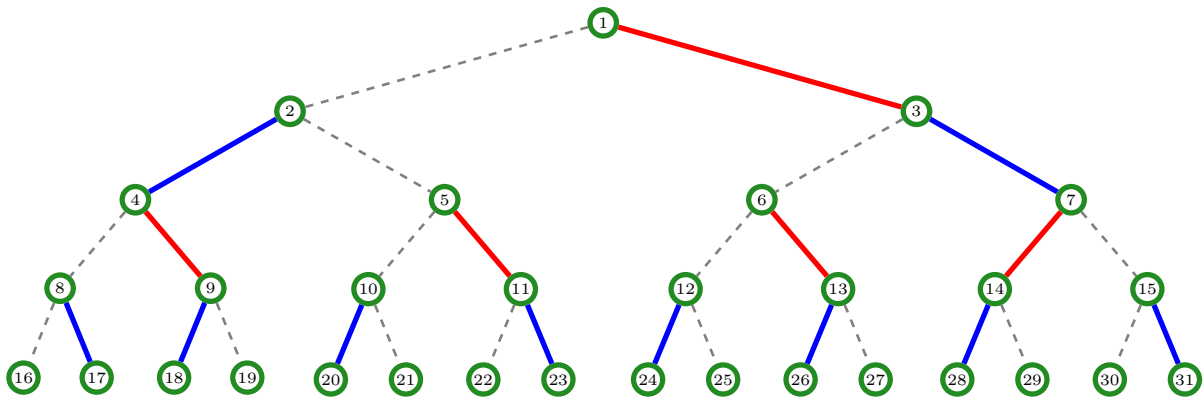
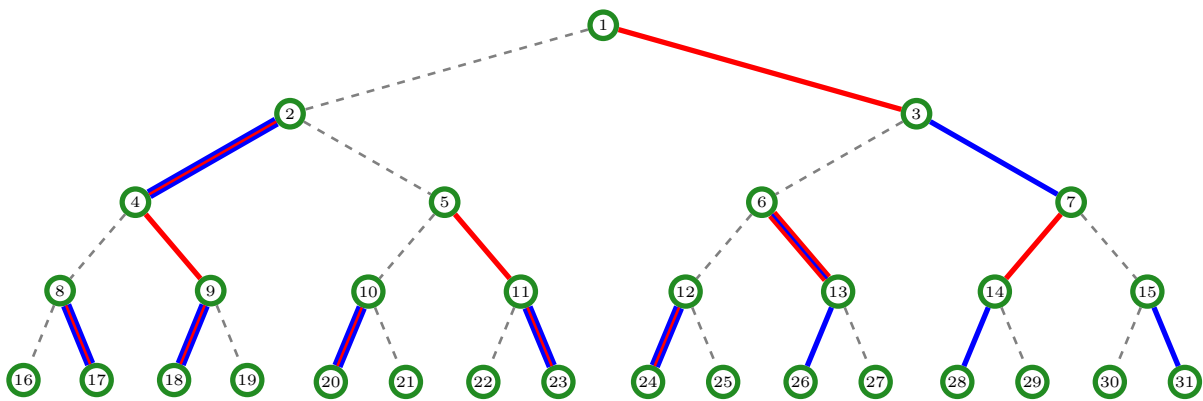
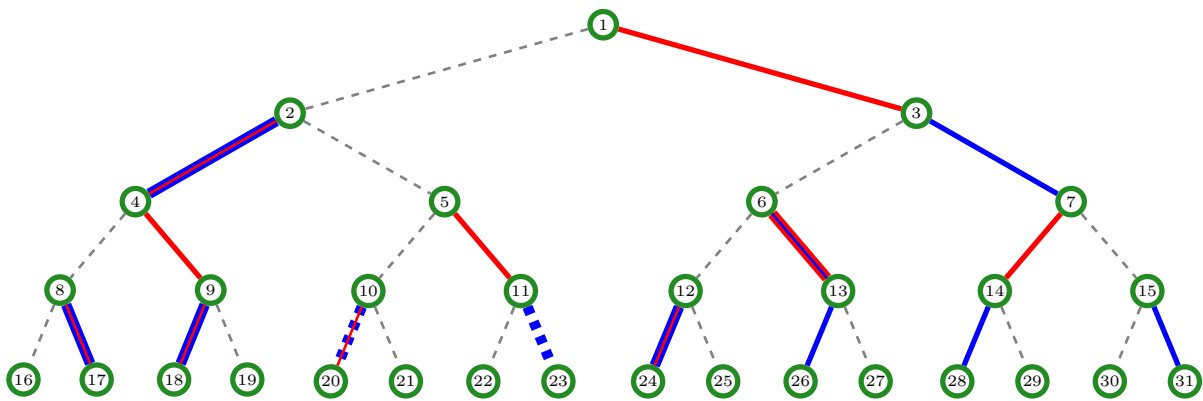
\begin{figure}[!tbh]
	\centering
	\begin{subfigure}{\textwidth}
		\centering

\begin{tikzpicture}

\tikzset{layer/.style={rectangle, rounded corners=5pt, draw, black, line width=1pt,  fill=black!10, inner sep=4pt}}
\tikzset{vertex/.style={circle, ForestGreen, fill=white, line width=2pt, draw, minimum width=10pt, minimum height=10pt, inner sep=1pt}}
\tikzset{choose/.style={rectangle, line width=1pt, rounded corners = 2pt, draw, minimum width=50pt, minimum height=16pt, fill=black!10}}

\node[vertex] (v1) [] {\tiny\color{black}1};

\node[vertex] (v2) [below left=25pt and 110pt of v1] {\tiny\color{black}2};
\node[vertex] (v3) [below right=25pt and 110pt of v1] {\tiny\color{black}3};

\draw[dashed, line width=1pt, gray, opacity=0.25] (v1) -- (v2);
\draw[line width=2pt, red] (v1) -- (v3);

\node[vertex] (v4) [below left=25pt and 50pt of v2] {\tiny\color{black}4};
\node[vertex] (v5) [below right=25pt and 50pt of v2] {\tiny\color{black}5};
\node[vertex] (v6) [below left=25pt and 50pt of v3] {\tiny\color{black}6};
\node[vertex] (v7) [below right=25pt and 50pt of v3] {\tiny\color{black}7};

\draw[line width=2pt, blue] (v2) -- (v4);
\draw[dashed, line width=1pt, gray, opacity=0.25] (v2) -- (v5);
\draw[dashed, line width=1pt, gray, opacity=0.25] (v3) -- (v6);
\draw[line width=2pt, blue] (v3) -- (v7);

\node[vertex] (v8) [below left=25pt and 20pt of v4] {\tiny\color{black}8};
\node[vertex] (v9) [below right=25pt and 20pt of v4] {\tiny\color{black}9};
\node[vertex] (v10) [below left=25pt and 20pt of v5] {\tiny\color{black}10};
\node[vertex] (v11) [below right=25pt and 20pt of v5] {\tiny\color{black}11};
\node[vertex] (v12) [below left=25pt and 20pt of v6] {\tiny\color{black}12};
\node[vertex] (v13) [below right=25pt and 20pt of v6] {\tiny\color{black}13};
\node[vertex] (v14) [below left=25pt and 20pt of v7] {\tiny\color{black}14};
\node[vertex] (v15) [below right=25pt and 20pt of v7] {\tiny\color{black}15};

\draw[dashed, line width=1pt, gray, opacity=0.25] (v4) -- (v8);
\draw[line width=2pt, red] (v4) -- (v9);
\draw[dashed, line width=1pt, gray, opacity=0.25] (v5) -- (v10);
\draw[line width=2pt, red] (v5) -- (v11);
\draw[dashed, line width=1pt, gray, opacity=0.25] (v6) -- (v12);
\draw[line width=2pt, red] (v6) -- (v13);
\draw[line width=2pt, red] (v7) -- (v14);
\draw[dashed, line width=1pt, gray, opacity=0.25] (v7) -- (v15);

\node[vertex] (v16) [below left=25pt and 5pt of v8] {\tiny\color{black}16};
\node[vertex] (v17) [below right=25pt and 5pt of v8] {\tiny\color{black}17};
\node[vertex] (v18) [below left=25pt and 5pt of v9] {\tiny\color{black}18};
\node[vertex] (v19) [below right=25pt and 5pt of v9] {\tiny\color{black}19};
\node[vertex] (v20) [below left=25pt and 5pt of v10] {\tiny\color{black}20};
\node[vertex] (v21) [below right=25pt and 5pt of v10] {\tiny\color{black}21};
\node[vertex] (v22) [below left=25pt and 5pt of v11] {\tiny\color{black}22};
\node[vertex] (v23) [below right=25pt and 5pt of v11] {\tiny\color{black}23};
\node[vertex] (v24) [below left=25pt and 5pt of v12] {\tiny\color{black}24};
\node[vertex] (v25) [below right=25pt and 5pt of v12] {\tiny\color{black}25};
\node[vertex] (v26) [below left=25pt and 5pt of v13] {\tiny\color{black}26};
\node[vertex] (v27) [below right=25pt and 5pt of v13] {\tiny\color{black}27};
\node[vertex] (v28) [below left=25pt and 5pt of v14] {\tiny\color{black}28};
\node[vertex] (v29) [below right=25pt and 5pt of v14] {\tiny\color{black}29};
\node[vertex] (v30) [below left=25pt and 5pt of v15] {\tiny\color{black}30};
\node[vertex] (v31) [below right=25pt and 5pt of v15] {\tiny\color{black}31};

\draw[dashed, line width=1pt, gray, opacity=0.25] (v8) -- (v16);
\draw[line width=2pt, blue] (v8) -- (v17);
\draw[line width=2pt, blue] (v9) -- (v18);
\draw[dashed, line width=1pt, gray, opacity=0.25] (v9) -- (v19);
\draw[line width=2pt, blue] (v10) -- (v20);
\draw[dashed, line width=1pt, gray, opacity=0.25] (v10) -- (v21);
\draw[dashed, line width=1pt, gray, opacity=0.25] (v11) -- (v22);
\draw[line width=2pt, blue] (v11) -- (v23);
\draw[line width=2pt, blue] (v12) -- (v24);
\draw[dashed, line width=1pt, gray, opacity=0.25] (v12) -- (v25);
\draw[line width=2pt, blue] (v13) -- (v26);
\draw[dashed, line width=1pt, gray, opacity=0.25] (v13) -- (v27);
\draw[line width=2pt, blue] (v14) -- (v28);
\draw[dashed, line width=1pt, gray, opacity=0.25] (v14) -- (v29);
\draw[dashed, line width=1pt, gray, opacity=0.25] (v15) -- (v30);
\draw[line width=2pt, blue] (v15) -- (v31);

\end{tikzpicture}
		\caption{A standard Tree Pointer Chasing instance.}
		\label{fig:atpc-raw}
	\end{subfigure}
	\vskip1cm
	\begin{subfigure}{\textwidth}
		\centering

\begin{tikzpicture}

\tikzset{layer/.style={rectangle, rounded corners=5pt, draw, black, line width=1pt,  fill=black!10, inner sep=4pt}}
\tikzset{vertex/.style={circle, ForestGreen, fill=white, line width=2pt, draw, minimum width=10pt, minimum height=10pt, inner sep=1pt}}
\tikzset{choose/.style={rectangle, line width=1pt, rounded corners=2pt, draw, minimum width=50pt, minimum height=16pt, fill=black!10}}

\node[vertex] (v1) [] {\tiny\color{black}1};

\node[vertex] (v2) [below left=25pt and 110pt of v1] {\tiny\color{black}2};
\node[vertex] (v3) [below right=25pt and 110pt of v1] {\tiny\color{black}3};

\draw[dashed, line width=1pt, gray, opacity=0.25] (v1) -- (v2);
\draw[line width=2pt, red] (v1) -- (v3);

\node[vertex] (v4) [below left=25pt and 50pt of v2] {\tiny\color{black}4};
\node[vertex] (v5) [below right=25pt and 50pt of v2] {\tiny\color{black}5};
\node[vertex] (v6) [below left=25pt and 50pt of v3] {\tiny\color{black}6};
\node[vertex] (v7) [below right=25pt and 50pt of v3] {\tiny\color{black}7};

\draw[line width=4pt, blue] (v2) -- (v4);
\draw[line width=1pt, red] (v2) -- (v4);
\draw[dashed, line width=1pt, gray, opacity=0.25] (v2) -- (v5);
\draw[dashed, line width=1pt, gray, opacity=0.25] (v3) -- (v6);
\draw[line width=2pt, blue] (v3) -- (v7);

\node[vertex] (v8) [below left=25pt and 20pt of v4] {\tiny\color{black}8};
\node[vertex] (v9) [below right=25pt and 20pt of v4] {\tiny\color{black}9};
\node[vertex] (v10) [below left=25pt and 20pt of v5] {\tiny\color{black}10};
\node[vertex] (v11) [below right=25pt and 20pt of v5] {\tiny\color{black}11};
\node[vertex] (v12) [below left=25pt and 20pt of v6] {\tiny\color{black}12};
\node[vertex] (v13) [below right=25pt and 20pt of v6] {\tiny\color{black}13};
\node[vertex] (v14) [below left=25pt and 20pt of v7] {\tiny\color{black}14};
\node[vertex] (v15) [below right=25pt and 20pt of v7] {\tiny\color{black}15};

\draw[dashed, line width=1pt, gray, opacity=0.25] (v4) -- (v8);
\draw[line width=2pt, red] (v4) -- (v9);
\draw[dashed, line width=1pt, gray, opacity=0.25] (v5) -- (v10);
\draw[line width=2pt, red] (v5) -- (v11);
\draw[dashed, line width=1pt, gray, opacity=0.25] (v6) -- (v12);
\draw[line width=4pt, red] (v6) -- (v13);
\draw[line width=1pt, blue] (v6) -- (v13);
\draw[line width=2pt, red] (v7) -- (v14);
\draw[dashed, line width=1pt, gray, opacity=0.25] (v7) -- (v15);

\node[vertex] (v16) [below left=25pt and 5pt of v8] {\tiny\color{black}16};
\node[vertex] (v17) [below right=25pt and 5pt of v8] {\tiny\color{black}17};
\node[vertex] (v18) [below left=25pt and 5pt of v9] {\tiny\color{black}18};
\node[vertex] (v19) [below right=25pt and 5pt of v9] {\tiny\color{black}19};
\node[vertex] (v20) [below left=25pt and 5pt of v10] {\tiny\color{black}20};
\node[vertex] (v21) [below right=25pt and 5pt of v10] {\tiny\color{black}21};
\node[vertex] (v22) [below left=25pt and 5pt of v11] {\tiny\color{black}22};
\node[vertex] (v23) [below right=25pt and 5pt of v11] {\tiny\color{black}23};
\node[vertex] (v24) [below left=25pt and 5pt of v12] {\tiny\color{black}24};
\node[vertex] (v25) [below right=25pt and 5pt of v12] {\tiny\color{black}25};
\node[vertex] (v26) [below left=25pt and 5pt of v13] {\tiny\color{black}26};
\node[vertex] (v27) [below right=25pt and 5pt of v13] {\tiny\color{black}27};
\node[vertex] (v28) [below left=25pt and 5pt of v14] {\tiny\color{black}28};
\node[vertex] (v29) [below right=25pt and 5pt of v14] {\tiny\color{black}29};
\node[vertex] (v30) [below left=25pt and 5pt of v15] {\tiny\color{black}30};
\node[vertex] (v31) [below right=25pt and 5pt of v15] {\tiny\color{black}31};

\draw[dashed, line width=1pt, gray, opacity=0.25] (v8) -- (v16);
\draw[line width=4pt, blue] (v8) -- (v17);
\draw[line width=1pt, red] (v8) -- (v17);
\draw[line width=4pt, blue] (v9) -- (v18);
\draw[line width=1pt, red] (v9) -- (v18);
\draw[dashed, line width=1pt, gray, opacity=0.25] (v9) -- (v19);
\draw[line width=4pt, blue] (v10) -- (v20);
\draw[line width=1pt, red] (v10) -- (v20);
\draw[dashed, line width=1pt, gray, opacity=0.25] (v10) -- (v21);
\draw[dashed, line width=1pt, gray, opacity=0.25] (v11) -- (v22);
\draw[line width=4pt, blue] (v11) -- (v23);
\draw[line width=1pt, red] (v11) -- (v23);
\draw[line width=4pt, blue] (v12) -- (v24);
\draw[line width=1pt, red] (v12) -- (v24);
\draw[dashed, line width=1pt, gray, opacity=0.25] (v12) -- (v25);
\draw[line width=2pt, blue] (v13) -- (v26);
\draw[dashed, line width=1pt, gray, opacity=0.25] (v13) -- (v27);
\draw[line width=2pt, blue] (v14) -- (v28);
\draw[dashed, line width=1pt, gray, opacity=0.25] (v14) -- (v29);
\draw[dashed, line width=1pt, gray, opacity=0.25] (v15) -- (v30);
\draw[line width=2pt, blue] (v15) -- (v31);

\end{tikzpicture}
		\caption{The same instance with full knowledge of left subtrees.}
		\label{fig:atpc-ins}
	\end{subfigure}
	\vskip1cm
	\begin{subfigure}{\textwidth}
		\centering

\begin{tikzpicture}

\tikzset{layer/.style={rectangle, rounded corners=5pt, draw, black, line width=1pt,  fill=black!10, inner sep=4pt}}
\tikzset{vertex/.style={circle, ForestGreen, fill=white, line width=2pt, draw, minimum width=10pt, minimum height=10pt, inner sep=1pt}}
\tikzset{choose/.style={rectangle, line width=1pt, rounded corners=2pt, draw, minimum width=50pt, minimum height=16pt, fill=black!10}}

\node[vertex] (v1) [] {\tiny\color{black}1};

\node[vertex] (v2) [below left=25pt and 110pt of v1] {\tiny\color{black}2};
\node[vertex] (v3) [below right=25pt and 110pt of v1] {\tiny\color{black}3};

\draw[dashed, line width=1pt, gray, opacity=0.25] (v1) -- (v2);
\draw[line width=2pt, red] (v1) -- (v3);

\node[vertex] (v4) [below left=25pt and 50pt of v2] {\tiny\color{black}4};
\node[vertex] (v5) [below right=25pt and 50pt of v2] {\tiny\color{black}5};
\node[vertex] (v6) [below left=25pt and 50pt of v3] {\tiny\color{black}6};
\node[vertex] (v7) [below right=25pt and 50pt of v3] {\tiny\color{black}7};

\draw[line width=4pt, blue] (v2) -- (v4);
\draw[line width=1pt, red] (v2) -- (v4);
\draw[dashed, line width=1pt, gray, opacity=0.25] (v2) -- (v5);
\draw[dashed, line width=1pt, gray, opacity=0.25] (v3) -- (v6);
\draw[line width=2pt, blue] (v3) -- (v7);

\node[vertex] (v8) [below left=25pt and 20pt of v4] {\tiny\color{black}8};
\node[vertex] (v9) [below right=25pt and 20pt of v4] {\tiny\color{black}9};
\node[vertex] (v10) [below left=25pt and 20pt of v5] {\tiny\color{black}10};
\node[vertex] (v11) [below right=25pt and 20pt of v5] {\tiny\color{black}11};
\node[vertex] (v12) [below left=25pt and 20pt of v6] {\tiny\color{black}12};
\node[vertex] (v13) [below right=25pt and 20pt of v6] {\tiny\color{black}13};
\node[vertex] (v14) [below left=25pt and 20pt of v7] {\tiny\color{black}14};
\node[vertex] (v15) [below right=25pt and 20pt of v7] {\tiny\color{black}15};

\draw[dashed, line width=1pt, gray, opacity=0.25] (v4) -- (v8);
\draw[line width=2pt, red] (v4) -- (v9);
\draw[dashed, line width=1pt, gray, opacity=0.25] (v5) -- (v10);
\draw[line width=2pt, red] (v5) -- (v11);
\draw[dashed, line width=1pt, gray, opacity=0.25] (v6) -- (v12);
\draw[line width=4pt, red] (v6) -- (v13);
\draw[line width=1pt, blue] (v6) -- (v13);
\draw[line width=2pt, red] (v7) -- (v14);
\draw[dashed, line width=1pt, gray, opacity=0.25] (v7) -- (v15);

\node[vertex] (v16) [below left=25pt and 5pt of v8] {\tiny\color{black}16};
\node[vertex] (v17) [below right=25pt and 5pt of v8] {\tiny\color{black}17};
\node[vertex] (v18) [below left=25pt and 5pt of v9] {\tiny\color{black}18};
\node[vertex] (v19) [below right=25pt and 5pt of v9] {\tiny\color{black}19};
\node[vertex] (v20) [below left=25pt and 5pt of v10] {\tiny\color{black}20};
\node[vertex] (v21) [below right=25pt and 5pt of v10] {\tiny\color{black}21};
\node[vertex] (v22) [below left=25pt and 5pt of v11] {\tiny\color{black}22};
\node[vertex] (v23) [below right=25pt and 5pt of v11] {\tiny\color{black}23};
\node[vertex] (v24) [below left=25pt and 5pt of v12] {\tiny\color{black}24};
\node[vertex] (v25) [below right=25pt and 5pt of v12] {\tiny\color{black}25};
\node[vertex] (v26) [below left=25pt and 5pt of v13] {\tiny\color{black}26};
\node[vertex] (v27) [below right=25pt and 5pt of v13] {\tiny\color{black}27};
\node[vertex] (v28) [below left=25pt and 5pt of v14] {\tiny\color{black}28};
\node[vertex] (v29) [below right=25pt and 5pt of v14] {\tiny\color{black}29};
\node[vertex] (v30) [below left=25pt and 5pt of v15] {\tiny\color{black}30};
\node[vertex] (v31) [below right=25pt and 5pt of v15] {\tiny\color{black}31};

\draw[dashed, line width=1pt, gray, opacity=0.25] (v8) -- (v16);
\draw[line width=4pt, blue] (v8) -- (v17);
\draw[line width=1pt, red] (v8) -- (v17);
\draw[line width=4pt, blue] (v9) -- (v18);
\draw[line width=1pt, red] (v9) -- (v18);
\draw[dashed, line width=1pt, gray, opacity=0.25] (v9) -- (v19);
\draw[line width=4pt, blue, dashed, opacity=0.5] (v10) -- (v20);
\draw[line width=1pt, red] (v10) -- (v20);
\draw[dashed, line width=1pt, gray, opacity=0.25] (v10) -- (v21);
\draw[dashed, line width=1pt, gray, opacity=0.25] (v11) -- (v22);
\draw[line width=4pt, blue, dashed, opacity=0.5] (v11) -- (v23);
\draw[line width=4pt, blue] (v12) -- (v24);
\draw[line width=1pt, red] (v12) -- (v24);
\draw[dashed, line width=1pt, gray, opacity=0.25] (v12) -- (v25);
\draw[line width=2pt, blue] (v13) -- (v26);
\draw[dashed, line width=1pt, gray, opacity=0.25] (v13) -- (v27);
\draw[line width=2pt, blue] (v14) -- (v28);
\draw[dashed, line width=1pt, gray, opacity=0.25] (v14) -- (v29);
\draw[dashed, line width=1pt, gray, opacity=0.25] (v15) -- (v30);
\draw[line width=2pt, blue] (v15) -- (v31);

\end{tikzpicture}
		\caption{The same instance with full knowledge of left subtrees and no knowledge of right subtrees.}
		\label{fig:atpc-del}
	\end{subfigure}
	\caption{An illustration of ATPC instances. Solid, blue edges are known to Alice and solid, red edges are known to Bob. Thick edges are owned in standard Tree Pointer Chasing while thin edges are known via augmentation. (For example, in \Cref{fig:atpc-del}, there are two \emph{overlapping} edges from node $6$ to node $13$. One is red and thick, meaning that Bob owns this edge in standard Tree Pointer Chasing, and the other is blue and thin, meaning that Alice knows this edge via augmentation.) Dashed, light-colored edges are forgotten during augmentation.}
	\label{fig:atpc}
\end{figure}

Note that Index can be viewed as Pointer Chasing on single-level trees.
To generalize it to multi-level trees, recall that in the standard Tree Pointer Chasing problem, one party owns all pointers in odd levels (that is, the first party gets as input an edge going out of each node in an odd level) and the other party owns all pointers in even levels. The parties' goal is to output the unique leaf node that can be reached using the parties' pointers. See~\Cref{fig:atpc-raw} for an example.

A natural attempt is to additionally give the owner of each pointer full knowledge of all the left subtrees, or equivalently all pointers owned by the other party in those subtrees. In other words, if a party has, as part of its input, the pointer connecting vertex $v$ to its $i$-th child, then the same party also gets all the pointers in the other party's input for the subtrees rooted at the first $i-1$ children of $v$. See~\Cref{fig:atpc-ins} for an illustration.
For example, in the illustration, since Bob has the pointer connecting the root to its second child, Bob also knows all Alice's pointers in the entire left subtree of the root.

\paragraph{Forgetting pointers.} We wish to prove a lower bound for the augmented Pointer Chasing problem on trees as described above. However, we next show that there is a subtle issue.
Suppose we want to prove the lower bound using the, by now standard, embedding arguments, showing that a protocol for instances with $r$ levels implies a protocol with one less message for instances with $r-1$ levels.
To do so, we sample an instance with $r$ levels as follows.
We denote by $(A_j,B_j)$ the subinstance corresponding to the $j$-th subtree (of the root) of the $r$-level instance we are sampling. We also denote by $(A',B')$ the input instance with $r-1$ levels that we attempt to solve. 
Alice and Bob publicly sample an index~$i$ and do the embedding by setting $(A_i,B_i) = (A',B')$.
$A_{< i}$ is also publicly sampled (note that this standard sampling respects our augmentation). 
To eliminate the first round of communication, Alice and Bob publicly sample Alice's first message $M_1$ (conditioned on $A_{< i}$).
In order to continue the simulation, the standard embedding argument would have the parties privately sample all remaining parts, namely $A_{> i}, B_{< i}, B_{> i}$.
Unfortunately, $A_{> i}$ and $B_{> i}$ may not be privately sampled (roughly following their original distributions) at the same time, due to possible high correlation.

To rectify the situation, we ``eliminate'' $B_{> i}$ by defining the {\em Augmented Tree Pointer Chasing} (ATPC) problem as follows: for each pointer, the party that owns it, also \begin{inparaenum}[(i)] \item knows \emph{everything} that the other party knows in subtrees to its left; and \item knows \emph{nothing} in subtrees to its right. \end{inparaenum} See~\Cref{fig:atpc-del} for an illustration.
For example, in the illustration, Alice ``forgets'' the pointer from node $10$ because it is in a subtree to the right of the pointer from node $2$.

We also emphasize that ``everything that the other party knows'' may not be equivalent to ``all pointers owned by the other party'', exactly because the other party may forget some of its originally owned pointers.
To see this, consider the pointers from nodes $10$ and $11$ in the illustration.
Before the augmentation, Alice knows both of them and Bob knows neither.
As we perform the augmentation bottom-up, Bob knows the one from node $10$ since it is in a subtree to the left of the pointer from node $5$.
Another level up, Alice forgets both of them due to the pointer from node $2$.
Note, however, that Bob still keeps his knowledge of the pointer from node $10$.
As a result, finally at the top level, Bob has the combined knowledge of both parties, including the pointer from node $10$, but not the one from node $11$.
In other words, Bob does not know the latter even though it is also in a subtree to the left of the pointer from the root.
Moreover, Bob's knowledge of the former is actually coming from himself in lower levels, but not from Alice.

A formal definition of ATPC is given in~\Cref{sec:few}.
Intuitively, the augmentation neither helps nor hurts the parties that attempt to solve an ATPC instance, as both parties should always follow the correct pointers. 
Indeed, we are able to prove an $r$ vs. $\Omega_r(n^{1/r})$ round-communication tradeoff for trees with $n$ leaf nodes, using standard information-theoretic tools.

\paragraph{Reducing ATPC to $\mur$ and the role of augmentation.} 

Next, we wish to show a lower bound for $\mur$ by proving that $\mur$ is even harder than ATPC. The reduction is as follows.
Given an ATPC instance, Alice is constructing the larger set $A$ (corresponding to insertions for MST) and Bob is constructing the smaller set $B$ (corresponding to deletions for MST).
The universe contains all the leaf nodes of the ATPC instance, sequentially ordered from left to right, and the goal is to have $\min(A \setminus B)$ being the leaf node induced by the pointers in the ATPC problem.
Imagine the parties perform the construction of the sets $A$ and $B$ ``bottom-up'' in the following sense.
Suppose the current pointer $i$ is known to Alice (a similar argument applies to the case in which Bob knows the current pointer).
Also, assume that the parties have already constructed $A_1,\ldots,A_w$ and $B_1,\ldots,B_w$ where $w$ is the arity of the ATPC tree and $(A_j,B_j)$ is the $\mur$ instance constructed for the $j$-th subtree of the ATPC tree, and they want to combine all these sets to obtain $A,B$. 

Now, we may wish for Bob to set $B = B_1 \cup \cdots \cup B_i$ and for Alice to set $A = B_1 \cup \cdots \cup B_{i-1} \cup A_i$, as this would imply $A \setminus B = A_i \setminus B_i$, while the promise that $B \subseteq A$ in the definition of $\mur$ is satisfied. 
Note that Alice can indeed compute this set $A$ {\em thanks to the augmentation} that gives her $B_1, \cdots, B_{i-1}$. In fact, this is the exact reason for the augmentation.
Unfortunately, though, Bob cannot compute $B$ as he does not know $i$.
Nevertheless, it can be easily remedied by setting $A = B_1 \cup \cdots \cup B_{i-1} \cup A_i \cup A'$ and $B = B_1 \cup \cdots \cup B_w$, where $A'$ is the set of all leaf nodes in subtrees $i+1,\ldots,w$. 

\paragraph{Weights.}
Since the number of weights in our MST instances is essentially the number of leaf nodes in the ATPC instances, our MST construction only uses polynomially many integer weights.
We note that this is necessary due to the result of~\cite{AhnGM12a}, as otherwise there is a single pass streaming algorithm that finds an MST in $n^{1+o(1)}$ space.
Specifically, an MST can be incrementally found by considering all edges of weight $i$ and applying the Spanning Forest algorithm of~\cite{AhnGM12a} at the $i$-th step.
This can be implemented in a single pass by maintaining $W$ independent copies of the sketch used for the Spanning Forest algorithm, resulting in an $\tilde{O}(nW)$-space algorithm.

\paragraph{Computing the MST weight with large edge weights.} So far, we are able to lower bound the search version of MST.
We note that the construction shown in~\Cref{fig:search} can be readily adapted for computing the weight of MSTs if \emph{exponential} edge weights were allowed\footnote{This is not an issue for the search version of MST as even linear edge weights are sufficient to ensure a unique MST, up to edges in the clique.}: edges incident on the $j$-th non-clique vertex have weights in the order of $n^j$, so that the minimum weight edge that is not deleted, for each non-clique vertex, can be uniquely recovered from the MST weight alone.
However, exponential edge weights would lead to a polynomial overhead in space requirement, which is unaffordable for streaming algorithms.
So, we explore the decision version of MST in the following, while keeping the edge weights polynomial.

\subsection{The Decision Version}\label{sec:decision}

\paragraph{Decisional $\mur$.} We next proceed to outline our lower bound for the algorithmically-easier decision version of the MST problem. 
Since there exist efficient algorithms, even with a single pass, for \emph{approximating} the weight of MSTs (e.g.~\cite{AhnGM12a}), we should expect hard instances for the decision version to have MST weights concentrated within a small range.
So the following attempt seems plausible.
Let $e_j$ be the minimum edge weight for the $j$-th non-clique vertex, and $z_j$ the parity of $e_j$.
Also let $T = \sum_j e_j - \sum_j z_j$.
Then the weight of MSTs is always between $T$ and $T+k$, where $k=n/2$ is the number of non-clique vertices.
In the above, we have argued that finding $e_j$ is hard for a fixed $j$.
With little additional effort we can show that computing $z_j$ is also hard.\footnote{Recall that the lower bound on $\mur$ is derived via ATPC. Roughly speaking, we may view the bottom level as a composition of two sublevels, one of which is binary.}
We denote by $\dmur$ the corresponding decisional universal relation problem, where one needs to compute the parity of the minimum element in $A \setminus B$.
We remark that this attempt is in line with~\cite{Yu21}, in which a decision version of Universal Relation, $\dur$, is utilized to obtain optimal lower bounds for Connectivity.

\paragraph{A majority lemma?} One may hope that our final result would again follow from a direct-sum (or, more accurately, ``{\em majority lemma}'') type argument: hardness of some boolean function $f$ implies hardness of computing the {\em majority} of $k$ copies of $f$\footnote{For simplicity, we may assume throughout this section that $f$ is ``balanced'' in the sense that it evaluates to $0$ on exactly half of possible inputs and to $1$ on the other half.}.
This is because given such a majority lemma, we can simply set the threshold to be $T+k/2$.
It is easy to see that the weight of MSTs exceeds $T+k/2$ if and only if the majority of the $k$ parity bits $z_j$ is $1$.

\paragraph{Fixing the threshold at the price of correlating the $\dmur$ instances.} To our disappointment, this approach has major problems.
One notable issue is that $T$ is ``instance dependent'', and is not a predetermined value, and therefore the threshold $T+k/2$ is also instance dependent.
This is indeed a problem as, in the reduction in \Cref{fig:search}, the parties would not know the threshold value required for the streaming MST instance.
In other words, we don't even have a well-defined input for the decision version of MST! 
To circumvent this, we add one special edge of weight $T'=C-T$ to the graph, where $C$ is a sufficiently large number to ensure $T'$ is positive. 
This way, we are always comparing the weight of the MSTs with a {\em fixed} number $C+k/2$.

In the communication setting, this addition is equivalent to {\em revealing $T$} to both parities (implemented as an extra part of input), which {\em correlates all $k$ copies of $\dmur$}.
Since a direct-sum style argument typically deals with independent copies, we now need to ``get rid'' of $T$.
Note that $T=\poly(n)=\poly(k)$.
This renders it impossible to brute force over all possible values of $T$ due to the communication constraint.

Another way of getting rid of $T$ would be to make a random guess at $T$ and output randomly if the guess is wrong (with very small communication overhead for verifying the guess). However, this approach has the following major shortcoming:
the random guessing reduces the advantage (over $1/2$) by a factor of $T=\poly(k)$ but {\em a majority lemma can never hold in such a low advantage regime!}
Specifically, the following may not be true:
\begin{quote}
	\textit{If computing $f$ with success probability $3/4$ requires $C$ communication, then computing the majority of $k$ copies of $f$ with success probability $1/2+1/k$ requires $\tilde{\Omega}(kC)$ communication.}
\end{quote}
What's even worse, is that $C$ communication is sufficient to achieve success probability $1/2+\Theta(1/\sqrt{k})$.
To see this, suppose $f$ evaluates to $0$ on exactly half of the first $k-1$ copies and $1$ on the other half, and then the majority is solely determined by the output of the last copy.
Now consider the protocol that simply computes the value of the last copy and outputs it as the majority.
It succeeds whenever the single copy protocol succeeds and thus has constant advantage ($3/4-1/2=1/4$ to be exact) in the above case, which occurs with probability $\Theta(1/\sqrt{k})$ due to properties of binomial distributions, and is equivalent to a random guess in all other cases as the majority is already determined by the first $k-1$ copies (recall that we assume $f$ to be balanced). 
So we cannot hope for a majority lemma that works with advantage well below $\Theta(1/\sqrt{k})$.
This dooms our attempt as we are requiring even much lower advantage. 

\paragraph{Majority Lemma with hint via XOR Lemma with hint.}

We work around the above limitation by a different approach.
Instead of directly getting rid of $T$ and seeking a majority lemma with low advantage (which turns out to be nonexistent), we convert majority computation into XOR computation by a simple process (with $T$ revealed).
Only after that, we again guess $T$ and then utilize an XOR lemma with low advantage which indeed exists.
As will be seen later, this alternative approach can be viewed as a majority lemma with high advantage (close to $1/2$).


To prove this latter majority lemma, we start from the beautiful recent work~\cite{Yu22} that provides a strong XOR lemma in which advantage decreases exponentially in $k$. 
We then consider the following process for computing XOR from majority.
If the number of $1$'s is at most $k/2$ (so the majority is $0$), return the parity of $k/2$ (assume that $k$ is even), and otherwise return the parity of $k/2+1$.
Intuitively, the probability of having exactly $i$ $1$'s is slightly larger than having $i-1$, for $i \le k/2$.
So this process should have certain advantage over $1/2$.
Indeed, again by properties of binomial distributions, this advantage can be shown to be $\Theta(1/\sqrt{k})$, assuming that the computation of majority is perfect.
In general, we can prove that a protocol for computing majority with success probability $1-\epsilon$ implies a protocol for computing XOR with success probability $1/2-\epsilon+\Theta(1/\sqrt{k})$. 
Since the XOR lemma of \cite{Yu22} proves that a protocol for computing the XOR with success probability $1/2-\epsilon+\Theta(1/\sqrt{k})$ (or even $1/2+\exp(-k)$) is costly, it also implies that the computation of the majority with success probability $1-\epsilon$ is costly. Our entire proof now works as follows.
 
\begin{enumerate}
	\item Prove a lower bound on $\dmur$.
	\item Apply the XOR lemma of~\cite{Yu22} to show it is also hard to compute the XOR of $k$ copies of $\dmur$, with success probability $1/2+1/\poly(k)$.
		This hardness continues to hold with $T$ revealed, which we call a ``hint'' in our proof.
	\item Using the above process, we get a lower bound for computing the majority of $k$ copies of $\dmur$, with success probability $1-1/\poly(k)$,  and also with hint $T$.
	\item Finally, a streaming lower bound for the decision version of MST is derived by our reduction (up to logarithmic factors resulted from boosting the success probability).
\end{enumerate}
All the above ideas are formalized in~\Cref{sec:few}.
At a high level, what we really use, is roughly a majority lemma of the following form, which has a very weak probability guarantee that is enough for us:
\begin{quote}
	\textit{If computing $f$ with success probability $3/4$ requires $C$ communication, then computing the majority of $k$ copies of $f$ with success probability $1-1/\poly(k)$ requires $\tilde{\Omega}(kC)$ communication.}
\end{quote}
We note, however, that we need such a lemma that also works when $T$ is revealed. As we claimed before, to prove a majority lemma that works when $T$ is revealed, we can guess $T$, but then need to prove a majority lemma with a very small advantage. Likewise, to show an XOR lemma that works when $T$ is revealed, we can guess $T$ and prove an XOR lemma for very small advantages. Luckily, unlike the case for majority, such an XOR lemma can be proved. Indeed, the XOR lemma of \cite{Yu22} has a strong enough probability guarantee.  


Unfortunately, all the above has not yet led to an optimal pass lower bound.
Specifically, the XOR lemma of~\cite{Yu22} shows that computing the XOR of $k$ copies requires roughly $k/r^{O(r)}$ times the communication for computing a single copy, where $r$ is the number of communication rounds.
This loss of an $r^{O(r)}$ factor is problematic as the final lower bound that can be obtained is roughly $n^{1+1/r}/r^{O(r)}$, which only works for $r$ up to $\sqrt{\log n / \log\log n}$.
For comparison, \cite{AhnGM12a} presents a semi-streaming algorithm of $O(\log n / \log\log n)$ passes.
So, there is still a gap between the lower and the upper bounds.
Furthermore, the loss in communication turns out to be the sole barrier for closing this gap, in the sense that we can prove a tight lower bound if the $r^{O(r)}$ factor could be reduced to $\poly(r)$.
We address this challenge in the rest of this section, by proving a {\em multi-party} XOR lemma (rather than a two-party one) with a better dependence on the number of rounds. 

\subsection{Multi-Party XOR Lemma}\label{sec:multi-xor}

\paragraph{The XOR lemma we need.} As indicated above, an ideal XOR lemma (in the standard two-party setting) that is sufficient for our purpose is of the following form: computing the XOR of $k$ copies requires $k/\poly(r)$ times the communication for computing a single copy, to achieve $1/\poly(k)$ advantage.
Note that such an XOR lemma does not necessarily improve upon~\cite{Yu22} as it only requires a polynomial advantage decay.
Nevertheless, to the best of our knowledge, the existence of such an XOR lemma is still unknown.

We prove such a lemma in the multi-party setting. We note that we opt not to restrict ourselves in the two-party setting as our ultimate goal is to prove streaming lower bounds and multi-party settings are usually easier to work with.
Nevertheless, our multi-party XOR lemma may be of independent interest as well since it works entirely in the communication setting, with no reference to streaming.

\paragraph{Separating amplification of communication and of advantage.}
To get our multi-party XOR lemma, we decompose it into two \emph{independent} parts: {\em amplification of communication} and {\em amplification of advantage}.
More specifically, up to $\poly(r)$ factors, amplification of communication means:
\begin{quote}
	\textit{If computing $f$ with success probability $1/2+\delta$ requires $C$ communication, then computing the XOR of $k_1$ copies of $f$ with success probability $1/2+\delta+\epsilon$ requires $\tilde{\Omega}_\epsilon(k_1C)$ communication,}
\end{quote}
and amplification of advantage means:
\begin{quote}
	\textit{If computing $f$ with success probability $3/4$ requires $C$ communication, then computing the XOR of $k_2$ copies of $f$ with success probability $1/2+\exp(-\Omega(k_2))$ requires $\tilde{\Omega}(C)$ communication.}
\end{quote}
Intuitively, this decomposition is possible because the XOR of many XOR computations is equivalent to a single XOR computation.
Furthermore, our desired XOR lemma, up to logarithmic factors, follows from combining amplification of communication with $k_1 = \Theta(k/\log k)$ and amplification of advantage with $k_2 = \Theta(\log k)$.

\paragraph{Amplification of communication.} As to amplification of communication, it can be accomplished using known (round-preserving) compression schemes (e.g., \cite{JainPY12,BravermanRWY13}) in the standard two-party setting.
However, we do emphasize that known compression schemes all seem to have a linear (or even polynomial) dependence on $1/\epsilon$ in the communication if we want an $\epsilon$-simulation.
This essentially means amplification of communication has to be performed before amplification of advantage.
Otherwise, communication would suffer a polynomial blowup in order to preserve the already amplified advantage.

\paragraph{Amplification of advantage.} For amplification of advantage, inspiration is drawn from the streaming XOR lemma by~\cite{AssadiN21}.
Their result shows that computing the XOR of $k$ copies in the streaming setting with the same space constraint as for a single copy, can only achieve advantage exponentially small in $k$.
Moreover, streaming algorithms are viewed as multi-party communication protocols in their proof.
This enables us to adapt their techniques to prove a multi-party communication version: computing the XOR of $k$ copies with ($2k$ parties and) the same total communication as for a single copy (with two parties), can only achieve advantage exponentially small in $k$.
Combined with amplification of communication, it finally yields a multi-party XOR lemma with the desired parameters.


We also remark that the streaming XOR lemma of~\cite{AssadiN21} applies to streams in which $k$ copies arrive \emph{sequentially}, i.e., one complete stream followed by another.
For our MST construction, this means insertions of the first non-clique vertex is followed by deletions of the same vertex, and then insertions and deletions of the second non-clique vertex and so on.
In contrast, our version for multi-party communication has an ``interleaved'' input order in the sense that part of the first copy (insertions for the first non-clique vertex) is followed by part of the second copy (insertions for the second non-clique vertex) and so on for all other copies, and the remaining part of the first copy (deletions for the first non-clique vertex) only comes after that.
Put it another way, all the Alices communicate before all the Bobs.
Consequently, the streams resulted from our proof have the {\em simplest form}: all insertions arrive before all deletions.




\section{A Lower Bound in Few Passes}\label{sec:few}

As a warmup, we first prove the following weaker version of~\Cref{res:main} for only few passes.
It already contains many of the critical ideas for fully proving~\Cref{res:main}, while also identifying the key barrier in getting a proof for even more passes.

\begin{theorem}[Weaker version of~\Cref{res:main}]\label{thm:lb-few}
	For $p = o(\sqrt{\frac{\log n}{\log\log n}})$, any $p$-pass dynamic streaming algorithm for solving $\mst_n$ with probability $2/3$ requires $\Omega(\frac{n^{1+\frac{1}{2p-1}}}{p^{O(p)}\log n})$ space.
\end{theorem}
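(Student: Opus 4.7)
The plan is to follow the four-stage roadmap outlined in \Cref{sec:overview}, assembling a chain of reductions that starts from a bare information-theoretic lower bound and ends with a streaming bound for $\mst_n$. Concretely, I would first formalize the Augmented Tree Pointer Chasing problem $\atpc$ on a $w$-ary, $r$-level tree with $m = w^r$ leaves, and prove an $r$ vs. $\Omega(m^{1/r}/r^{O(r)})$ round-communication tradeoff via a standard round-elimination argument: given an $r$-round protocol, sample a public index $i$ and public prefix $A_{<i}$, use public randomness to fix the first message $M_1$ conditioned on $A_{<i}$, and privately sample the missing parts to embed an $(r{-}1)$-level instance. The crucial point is that the ``forgetting'' built into $\atpc$ (each owner knows everything the other party knows in left subtrees but nothing in right subtrees) makes $A_{>i}$ and $B_{>i}$ privately samplable without destroying their conditional distributions, which is exactly what the embedding needs. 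Bounding the KL-divergence between the true and simulated transcripts with a cutting-lemma/Pinsker argument then yields the per-level $w^{1/r}$-factor loss, giving the claimed tradeoff.

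Next I would reduce $\atpc$ to $\dmur$ exactly as sketched, with Alice constructing the ``larger'' set $A$ and Bob the ``smaller'' set $B$ so that $\min(A \setminus B)$ encodes the leaf the pointer chase arrives at, and then reduce to a \emph{hint} version of $\dmur$ where both parties additionally receive a value $T$ that summarizes the correlation introduced by the MST construction. On top of this, I would invoke the two-party XOR lemma of~\cite{Yu22} on $k$ independent copies of $\hintatpc$ (so also of $\dmur$), obtaining hardness of computing the XOR of the $k$ parity bits with advantage $1/2 + \exp(-\Omega(k))$, at a cost of $k/r^{O(r)}$ times the single-copy communication. Converting this XOR hardness into hardness of majority is done via the ``binomial trick'' in the overview: a majority protocol succeeding with probability $1 - \epsilon$ yields an XOR protocol succeeding with probability $1/2 - \epsilon + \Theta(1/\sqrt{k})$; contrapositive applied to~\cite{Yu22} then forces majority to be hard with success probability $1 - 1/\poly(k)$.

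The final step is the reduction to $\mst_n$. Following~\Cref{fig:search}, I would build a graph on $n$ vertices consisting of a weight-$0$ clique on $n/2$ nodes, together with $k = n/2$ ``non-clique'' vertices whose incident insertions/deletions encode $k$ independent $\hintatpc$-type instances using polynomially bounded weights. The hint $T$ is injected as a single extra edge of weight $T' = C - T$ for a fixed global constant $C$, so that the streaming algorithm is asked to decide whether the MST weight exceeds the fixed threshold $C + k/2$, which happens exactly when the majority of the $k$ parity bits is $1$. Standard simulation-of-streaming-by-communication turns an $s$-space, $p$-pass dynamic streaming algorithm into a $2p$-round protocol of total communication $O(ps)$ among the $2k$ parties (each party holding the updates of one non-clique vertex, split into insertions and deletions), and after amplifying constant success probability to $1 - 1/\poly(k)$ with an $O(\log n)$ overhead the majority lower bound yields $s \geq \Omega(n^{1 + 1/(2p-1)} / (p^{O(p)} \log n))$.

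The main obstacle is the $r^{O(r)}$ loss in the two-party XOR lemma of~\cite{Yu22}, which is what restricts the bound to $p = o(\sqrt{\log n / \log\log n})$: the per-round cost in communication already eats into the $m^{1/r}$ factor from $\atpc$, and the composition with $\log n$ advantage-boosting can only be absorbed as long as $p^{O(p)} \ll n^{1/(2p-1)}$. Beyond this regime the chain collapses, and removing this barrier is precisely what the multi-party XOR lemma of~\Cref{sec:optimal} is designed to achieve; for the present theorem, though, a careful bookkeeping of these factors through the four reductions above suffices.
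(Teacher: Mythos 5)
Your roadmap is the paper's own: ATPC with a clean round-elimination tradeoff, the XOR lemma of~\cite{Yu22}, the binomial majority-to-XOR conversion, and the clique-plus-pendant-vertices reduction with a single hint-encoding edge. Two points, however, need repair before the chain actually closes.

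First, the step ``invoke the two-party XOR lemma of~\cite{Yu22} on $k$ independent copies of $\hintatpc$'' does not work as stated: once both parties receive the hint $T=\sum_j t_j$, the $k$ copies are \emph{not} independent, and the XOR lemma only applies to a product input distribution $\mu^k$. The paper resolves this with an explicit intermediate reduction (\Cref{clm:atpc-xor-hint-atpc-xor}): the parties publicly guess $T'$ uniformly from $[kw^d]$, run the hinted protocol with the guess, append the correct pointers to each message (an extra $kd\log w$ bits) so the last receiver can verify $T'=T$, and output a random bit on a wrong guess. This multiplies the advantage by $1/(kw^d)$, so the XOR lemma must be applied at advantage $\Theta(1/(k^{3/2}w^d))$ rather than $\exp(-\Omega(k))$ --- which is exactly why the condition $k=\omega(d\log w)$ appears and why an XOR lemma (with its exponentially small advantage guarantee) is usable here while a direct majority lemma is not. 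You acknowledge that $T$ correlates the copies, but without the guess-and-verify step your application of~\cite{Yu22} is to a non-product distribution and the lemma simply does not apply. Note also that the correct order is majority-with-hint $\to$ XOR-with-hint $\to$ XOR-without-hint $\to$ single copy, not XOR first and hint later.

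Second, for this warm-up theorem the streaming-to-communication simulation must be \emph{two-party} (Alice holds all insertions, Bob all deletions, giving a $(2p-1)$-round, $(2p-1)S$-bit protocol), since~\cite{Yu22} is a two-party lemma; your description of a $2k$-party protocol with one party per non-clique vertex is the setup for the optimal-pass argument of~\Cref{sec:optimal} and would require the multi-party XOR lemma instead. A smaller quibble: the paper's round elimination for $\atpc$ loses only a factor $d$ (yielding $\epsilon^2 w/d$), not $r^{O(r)}$; the superexponential loss enters solely through~\cite{Yu22}, as you correctly note at the end.
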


\noindent
We remark that the upper bound on edge weights in~\Cref{res:main} will be seen in the proof of~\Cref{clm:hint-atpc-maj-mst}.

\subsection{Augmented Tree Pointer Chasing}

The proof of~\Cref{thm:lb-few} is via a communication problem named Augmented Tree Pointer Chasing.

\begin{definition}
	For $d,w \ge 1$, the two-party problem $\atpc_{d,w}$ is defined recursively as follows.
	\begin{enumerate}
		\item For $d=1$, Alice is given as input $A^{(1)} \in \set{0,1}^w$ and Bob is given as input $B^{(1)} = (i^{(1)},A^{(1)}_{< i^{(1)}})$, where $i^{(1)} \in [w]$. They are required to output $A^{(1)}_{i^{(1)}}$.
		\item For $d>1$, Alice is given as input $A^{(d)} = b^{(d-1)}_{\le w}$ and Bob is given as input $B^{(d)} = (i^{(d)},a^{(d-1)}_{\le i^{(d)}},b^{(d-1)}_{< i^{(d)}})$, where $i^{(d)} \in [w]$ and $(a^{(d-1)}_j,b^{(d-1)}_j)$ for $j \in [w]$ is an instance of $\atpc_{d-1,w}$\footnote{For $j > i^{(d)}$, $a^{(d-1)}_j$ is imaginary and given to neither party.}. They are required to output the answer to $(a^{(d-1)}_{i^{(d)}},b^{(d-1)}_{i^{(d)}})$ as an instance of $\atpc_{d-1,w}$.
	\end{enumerate}
	For $k \ge 1$, $\atpc^{\oplus k}_{d,w}$ denotes the $k$-fold XOR version of $\atpc_{d,w}$, and similarly $\atpc^{\# k}_{d,w}$ denotes the $k$-fold majority version of $\atpc_{d,w}$.

	Each $\atpc_{d,w}$ instance can be naturally visualized on a depth-$d$, $w$-ary tree with $i$'s being pointers of corresponding levels.
	Suppose the leaf nodes are numbered from $1$ to $w^d$.
	Starting from the root and following the pointers will lead to a unique leaf node $t \in [w^d]$, which is called the \emph{target} of this instance.
	For either of the $k$-fold versions of $\atpc_{d,w}$, both parties may additionally be given the \emph{hint} $T = \sum_{j \in [k]} t_{j}$ as part of their input, where $t_{j}$ is the target of the $j$-th $\atpc_{d,w}$ instance.
	The resulting problems are denoted by $\hintatpc^{\oplus k}_{d,w}$ and $\hintatpc^{\# k}_{d,w}$, respectively.
\end{definition}

At a high level, in an instance of $\atpc_{d,w}$, Alice owns all pointers at even levels while Bob owns all pointers at odd levels.
It only differs from the stardard Tree Pointer Chasing problem by performing the following modification to each internal node: the owner of a pointer is additionally given the other party's knowledge of subtrees to the left of the pointer, while losing any knowledge of subtrees to the right of the pointer.
The modification is performed bottom-up.
In other words, the effect of ancestors supersedes that of descendants.
Intuitively, the extra information about subtrees to the left cannot help while the lost information about subtrees to the right cannot hurt, as the owner of the current node should always follow the pointer.
Also note that $\atpc_{1,w}$ is exactly the same as the well-studied Augmented Index problem.
For $d > 1$, $\atpc_{d,w}$ can be naturally viewed as its multi-round generalization.

The hard input distributions and corresponding lower bounds are as follows.

\begin{Distribution}\label{dist:atpc}
	For $d,w \ge 1$, the hard input distribution $\cD_{d,w}$ is defined recursively as follows.
	\begin{enumerate}
		\item For $d=1$, Alice is given as input $A^{(1)}$ and Bob is given as input $B^{(1)} = (i^{(1)},A^{(1)}_{< i^{(1)}})$, where $i^{(1)}$ is sampled from $[w]$ uniformly at random and $A^{(1)}$ is independently sampled from $\set{0,1}^w$ uniformly at random.
		\item For $d>1$, Alice is given as input $A^{(d)} = b^{(d-1)}_{\le w}$ and Bob is given as input $B^{(d)} = (i^{(d)},a^{(d-1)}_{\le i^{(d)}},b^{(d-1)}_{< i^{(d)}})$, where $i^{(d)}$ is sampled from $[w]$ uniformly at random and $(a^{(d-1)}_j,b^{(d-1)}_j)$ for $j \in [w]$ is independently sampled from $\cD_{d-1,w}$.
	\end{enumerate}
\end{Distribution}

\begin{lemma}\label{lem:lb-atpc}
	For $d,w \ge 1$ and $\epsilon \in [0,1/2]$, it holds that
	\[
		\bD^{(d)}_{\cD_{d,w},\frac{1}{2}+\epsilon}(\atpc_{d,w}) \ge \frac{\epsilon^2 w}{d}.
	\]
\end{lemma}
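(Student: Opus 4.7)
The plan is to induct on $d$, using Augmented Index as the base case and round elimination via embedding plus public resampling of the first message for the inductive step. For $d=1$, the problem $\atpc_{1,w}$ coincides with Augmented Index on $w$ bits, and I would invoke the standard information-theoretic lower bound --- proved by combining Pinsker's inequality with the chain rule for mutual information and the pointwise bound $1-H_2(\tfrac12-\epsilon)\ge 2\epsilon^2$ --- to obtain $\bD^{(1)}_{\cD_{1,w},\frac12+\epsilon}(\atpc_{1,w})\ge 2\epsilon^2 w$, which already beats the claimed $\epsilon^2 w$. For the inductive step, assuming the bound for depth $d-1$, I would start from any $d$-round protocol $\pi$ for $\atpc_{d,w}$ on $\cD_{d,w}$ with communication $C$, success $\tfrac12+\epsilon$, and first-message length $c_1$, and construct a $(d-1)$-round protocol for $\atpc_{d-1,w}$.

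Given an instance $(a^*,b^*)\sim\cD_{d-1,w}$, the parties publicly sample $j\in[w]$ together with $(a^{(d-1)}_{<j},b^{(d-1)}_{<j})$ from their correct marginals, and set $i^{(d)}=j$ and $(a^{(d-1)}_j,b^{(d-1)}_j)=(a^*,b^*)$. Since the ``Alice-input/Bob-input'' roles swap from level $d-1$ to level $d$, Original Bob plays Outer Alice (assembling her input $b^{(d-1)}_{\le w}$ from the public prefix, $b^{(d-1)}_j=b^*$, and privately sampled $b^{(d-1)}_{>j}$) while Original Alice plays Outer Bob (whose input $(j,a^{(d-1)}_{\le j},b^{(d-1)}_{<j})$ consists entirely of public coordinates and $a^*$). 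Crucially, the ATPC ``forgetting'' ensures that Outer Bob is never augmented with any $a^{(d-1)}_{>j}$, so no one needs those coordinates, and the embedded superinstance has distribution exactly $\cD_{d,w}$. Next, the parties publicly resample $M_1$ from its marginal distribution under the embedding conditional on the public information, and simulate the remaining $d-1$ rounds of $\pi$ using at most $C-c_1$ communication.

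For the error analysis, independence of the coordinates of $b^{(d-1)}_{\le w}$ combined with the chain rule gives
\begin{equation*}
\sum_{j\in[w]} I\!\bigl(M_1;\,b^{(d-1)}_j \,\big|\, b^{(d-1)}_{<j}\bigr) \;=\; I\!\bigl(M_1;\,b^{(d-1)}_{\le w}\bigr) \;\le\; H(M_1) \;\le\; c_1,
\end{equation*}
so Pinsker and Jensen bound the average (over $j$) TV-distance between the true and resampled $M_1$ by $\sqrt{c_1/(2w)}$; fixing a good $j$ produces a $(d-1)$-round protocol of advantage at least $\epsilon-\sqrt{c_1/(2w)}$ and communication at most $C-c_1$. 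When this advantage is positive, applying the inductive hypothesis yields $C-c_1\ge(\epsilon-\sqrt{c_1/(2w)})^2 w/(d-1)$, and optimizing over $c_1$ (optimum at $c_1=2w\epsilon^2/(2d-1)^2$) gives $C\ge 2\epsilon^2 w/(2d-1)\ge\epsilon^2 w/d$; the alternative case $c_1\ge 2w\epsilon^2$ trivially gives $C\ge c_1\ge\epsilon^2 w/d$. The main obstacle I foresee is twofold: first, verifying that the embedding faithfully reproduces $\cD_{d,w}$ and that public resampling of $M_1$ preserves consistency of the remaining simulation (this relies essentially on the ATPC ``forgetting'' of right subtrees and is precisely why the augmentation was designed that way); second, arranging the quadratic balance between $c_1$ and $\sqrt{c_1/(2w)}$ so that the optimization above preserves the tight $1/d$ denominator through all $d$ levels of induction --- a coarser bookkeeping would accumulate a constant loss per level and collapse this to $1/2^d$.
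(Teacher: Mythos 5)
Your proposal is correct and follows essentially the same route as the paper: the base case is the information-theoretic Augmented Index bound, and the inductive step is exactly the paper's round-elimination claim (public sampling of the index, the left prefix, and the first message, with the advantage loss bounded by $\sqrt{\cc^{(1)}(\pi)/w}$ via the chain rule for mutual information and Pinsker's inequality, relying on the same ``forgetting of right subtrees'' to make the private completion of the embedding consistent). The only difference is bookkeeping: the paper unrolls all $d$ eliminations and applies concavity to $\sum_{r}\sqrt{\cc^{(r)}(\pi)/w}$ once at the end, whereas you optimize the first-message budget at each level of the induction --- the two are equivalent and both preserve the $1/d$ denominator.
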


\begin{Distribution}\label{dist:atpc-k}
	For $k,d,w \ge 1$, the hard input distribution $\cD^k_{d,w}$ is defined as follows.
	Alice is given as input $A = a^{(d)}_{\le k}$ and Bob is given as input $B = b^{(d)}_{\le k}$, where $(a^{(d)}_j,b^{(d)}_j)$ for $j \in [k]$ is independently sampled from $\cD_{d,w}$.
\end{Distribution}

\begin{lemma}\label{lem:lb-hint-atpc-maj}
	For $w \ge 1$, $d = o(\log w / \log\log w)$, and $k = \omega(d\log w)$, it holds that
	\[
		\bD^{(d)}_{\cD^k_{d,w},\frac{2}{3}}(\hintatpc^{\# k}_{d,w}) = \Omega\paren{\frac{kw}{d^{O(d)}\log k}}.
	\]
\end{lemma}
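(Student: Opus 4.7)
The plan is to prove \Cref{lem:lb-hint-atpc-maj} via a chain of three black-box reductions that trace the hint-aware majority problem back to the single-copy $\atpc_{d,w}$, and then invoke the two-party XOR lemma of~\cite{Yu22} together with the single-copy lower bound of \Cref{lem:lb-atpc}.

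First, given a $d$-round protocol $\pi_M$ solving $\hintatpc^{\# k}_{d,w}$ over $\cD^k_{d,w}$ with success $2/3$ and communication $C$, I would amplify the success probability to $1-1/\mathrm{poly}(k)$ by $O(\log k)$ independent repetitions with majority voting, inflating communication by an $O(\log k)$ factor while preserving the round count. Second, I would convert the boosted majority protocol into a protocol $\pi_X$ for $\hintatpc^{\oplus k}_{d,w}$ via the binomial trick of \Cref{sec:decision}: since each single-copy answer under $\cD_{d,w}$ has uniform parity, the number of $1$'s among the $k$ answers is $\mathrm{Bin}(k,1/2)$; I would output the parity of $\lfloor k/2\rfloor$ if the estimated majority is $0$ and the parity of $\lfloor k/2\rfloor+1$ otherwise. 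The symmetry-plus-central-term calculation $\Pr[S\le k/2]-1/2=\Theta(1/\sqrt{k})$ for $S\sim\mathrm{Bin}(k,1/2)$, combined with the $1/\mathrm{poly}(k)$ error of the boosted majority estimate, yields advantage $\Omega(1/\sqrt{k})$ for XOR at no additional communication cost.

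Third, I would strip the hint by guessing. Let $M=O(kw^d)$ be the support size of $T=\sum_j t_j$. Using public randomness, the parties sample $T'\in[M]$ and run $\pi_X$ with hint $T'$; a pigeonhole or symmetrization argument then produces a $d$-round protocol for $\atpc^{\oplus k}_{d,w}$ (without hint) whose advantage over $\cD^k_{d,w}$ is at least $\Omega(1/(M\sqrt{k}))$, at the cost of $O(\log M)$ extra communication. The advantage is minuscule but nonzero, which is exactly where the strength of the XOR lemma kicks in. By the $2$-party XOR lemma of~\cite{Yu22} applied to the single-copy bound $\bD^{(d)}_{\cD_{d,w},2/3}(\atpc_{d,w})=\Omega(w/d)$ from \Cref{lem:lb-atpc}, any $d$-round protocol for $\atpc^{\oplus k}_{d,w}$ achieving advantage $\exp(-\Omega(k/d^{O(d)}))$ requires $\Omega(kw/d^{O(d)})$ communication. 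The condition $1/(M\sqrt{k})\ge \exp(-\Omega(k/d^{O(d)}))$ becomes $d^{O(d)}(d\log w+\log k)=O(k)$, which is guaranteed by the hypotheses $d=o(\log w/\log\log w)$ (so that $d^{O(d)}=w^{o(1)}$) and $k=\omega(d\log w)$. Equating the two communication estimates gives $C\cdot O(\log k)=\Omega(kw/d^{O(d)})$, i.e., $C=\Omega(kw/(d^{O(d)}\log k))$, as claimed.

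The main technical obstacle is the hint-removal in the third step: a hint-aware protocol is guaranteed to behave well only when fed the true $T=T(A,B)$, and its behavior under a mismatched hint is entirely unconstrained, so averaging success probability over a uniformly guessed $T'$ does not automatically land near $1/2+\delta/M$. One must argue more carefully, for instance by pigeonhole to extract a fixed $t^*\in[M]$ for which $\pi_X(\cdot,\cdot,t^*)$ retains advantage $\Omega(\delta/M)$ over the unconditional distribution $\cD^k_{d,w}$, or by composing with a ``run $\pi_X$ with probability $1/M$, else output a uniform public bit'' gadget to neutralize adversarial bias on wrong hints. Once this delicate step is executed, all the other ingredients slot in cleanly.
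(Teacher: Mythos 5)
Your overall route is the same as the paper's: boost the majority protocol from $2/3$ to $1-1/\poly(k)$ success by $O(\log k)$ repetitions, convert majority to XOR via the binomial central-term trick (this is exactly \Cref{clm:hint-atpc-xor-hint-atpc-maj}), remove the hint by guessing $T'$, and finish with \Cref{lem:xor-yu} applied on top of \Cref{lem:lb-atpc}. The first, second, and fourth steps are fine. The gap is in the hint-removal step, and it is precisely the obstacle you flag at the end: neither of your proposed workarounds closes it. The pigeonhole fix extracts a $t^*$ for which the protocol has advantage only over the \emph{conditional} distribution $\cD^k_{d,w}\mid T=t^*$, which is not a product distribution, so the XOR lemma no longer applies; and over the unconditional $\cD^k_{d,w}$ the fixed-hint protocol can have arbitrarily negative advantage on the $T\ne t^*$ mass. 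The ``run with probability $1/M$, else output a random bit'' gadget does not help either: whenever you do run the protocol, the guessed hint is still almost surely wrong, and on wrong hints the output may be adversarially anticorrelated with the answer, so the naive guess-and-run protocol can have success probability as low as roughly $(1/2+\delta)/M$, far below $1/2$.

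What makes the step work in the paper (\Cref{clm:atpc-xor-hint-atpc-xor}) is that the guess can be \emph{verified} inside the protocol by exploiting the structure of $\atpc$: in each of the $d$ rounds the sender additionally transmits the correct current-level pointers of all $k$ instances (costing $kd\log w$ extra bits in total, not $O(\log M)$), so the receiver of the last message can reconstruct all $k$ targets, compute the true $T$, and compare it with the guess $T'$. If they disagree, the protocol outputs a fresh uniform bit, which has advantage exactly $0$; if they agree (an event of probability $1/(kw^d)$ that is independent of the input, so the conditional input distribution is still the product $\cD^k_{d,w}$), the full advantage $\Theta(1/\sqrt{k})$ is retained. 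This yields advantage $\Theta(1/(k^{3/2}w^d))$ over the honest product distribution with communication $C\cdot O(\log k)+kd\log w$, and the extra term is negligible since $d=o(\log w/\log\log w)$ gives $d^{O(d)}=w^{o(1)}$. Without some such verification mechanism tied to the specific problem, the reduction from the hinted to the unhinted XOR problem does not go through, so as written your proof is incomplete at this point.
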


We first prove~\Cref{thm:lb-few} in~\Cref{sec:proof-lb-few}, assuming the lower bounds for Augmented Tree Pointer Chasing.
Proofs of the above lower bounds are shown in \Cref{sec:proof-lb-atpc,sec:proof-lb-hint-atpc-maj}, respectively.

\subsection{Proof of~\Cref{thm:lb-few}}\label{sec:proof-lb-few}

In this section, we present a proof of~\Cref{thm:lb-few} via the following claim.

\begin{claim}\label{clm:hint-atpc-maj-mst}
	For $k,p,w,S \ge 1$ and $\epsilon \in [0,1]$, if there exists a $p$-pass, $S$-space dynamic streaming algorithm for solving $\mst_{k+w^{2p-1}+1}$ with probability $\epsilon$, then there also exists a $(2p-1)$-round, $(2p-1)S$-communication protocol for solving $\hintatpc^{\# k}_{2p-1,w}$ with probability $\epsilon$ over $\cD^k_{2p-1,w}$.
\end{claim}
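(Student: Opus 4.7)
The plan is to prove the claim by reducing $\hintatpc^{\#k}_{2p-1,w}$ to $\mst_{k+w^{2p-1}+1}$. Given a $p$-pass, $S$-space streaming algorithm $\mathcal{A}$ for the MST problem, the protocol will construct a $(2p-1)$-round, $(2p-1)S$-communication protocol for $\hintatpc^{\#k}_{2p-1,w}$ under $\cD^k_{2p-1,w}$. The MST instance uses $w^{2p-1}$ ``clique'' vertices $u_l$ indexed by potential ATPC leaves, $k$ ``copy'' vertices $v_j$ (one per ATPC instance), and one ``threshold'' vertex $v^*$. The stream consists of (i) all $\binom{w^{2p-1}}{2}$ clique edges at weight $0$; (ii) for each copy $j$, insertions and deletions of edges $(v_j,u_l)$ with weight $2l+z_l^j$ that collectively realize sets $A^j,B^j\subseteq[w^{2p-1}]$ with $\min(A^j\setminus B^j)=t_j$, where $z_l^j$ is the bit that the ATPC instance stores at leaf $l$ of copy $j$ and $t_j$ is that copy's target; and finally (iii) a single insertion of $(v^*,u_1)$ with weight $C-2T$, where $T$ is the hint and $C:=2kw^{2p-1}+1$ keeps this weight positive.

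The key step is to construct $A^j,B^j$ recursively via the ``combine with left-subtree augmentation'' recipe from the technical overview, applied level by level to the depth-$(2p-1)$ ATPC tree of each copy: at each internal node the non-pointer-owner takes the union of all sub-instance $B$-sets, while the pointer-owner takes (a) the left-subtree $B$-sets known to it via augmentation, (b) the $A$-set at the pointed sub-instance, and (c) all leaves in the right subtrees. A straightforward induction on tree depth gives $\min(A^j\setminus B^j)=t_j$, and the ATPC augmentation ensures that whenever Alice must insert $(v_j,u_l)$ or Bob must delete it, the party in question has the bit $z_l^j$ in its input and can therefore compute the correct weight. Since weights are strictly increasing in $l$ (as $z_l^j\in\{0,1\}$), the minimum-weight surviving edge at $v_j$ has weight $e_j=2t_j+z_j$, where $z_j$ is the output of the $j$-th ATPC instance.

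Putting it together, the MST weight is $0+\sum_j(2t_j+z_j)+(C-2T)=C+\sum_j z_j$, since the clique is spanned at weight $0$, each $v_j$ contributes its minimum edge, and the single $v^*$-edge is a forced bridge. Setting the algorithm's threshold to $C+\lfloor k/2\rfloor+1$, the MST weight meets it if and only if the majority of the $z_j$'s is $1$, which is exactly the output of $\hintatpc^{\#k}_{2p-1,w}$. All edge weights and the stream length remain $O(n^2)$ for $n=k+w^{2p-1}+1$, matching the bounds asserted in the main theorem.

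For the simulation, Alice emits the clique edges (known to her) and her insertions, while Bob emits his deletions and the final $v^*$-edge (whose weight depends only on $T$). One pass of $\mathcal{A}$ is simulated by a single Alice$\to$Bob transmission of the $S$-bit algorithm state; moving between consecutive passes uses one Bob$\to$Alice transmission. Over $p$ passes this totals $p+(p-1)=2p-1$ messages of $S$ bits each, after which Bob reports the algorithm's output as the majority answer. The main obstacle is verifying the compatibility between the recursive $A^j,B^j$ construction and the bit-in-weight encoding: one must check, level by level, that whoever inserts or deletes an edge always knows the bit $z_l^j$ needed to specify its weight. This is bookkeeping-heavy but conceptually routine, and falls out of the standard ATPC-augmentation property that the pointer-owner at each level inherits precisely the other party's knowledge of strictly-left subtrees.
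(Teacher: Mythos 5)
Your proposal follows the paper's reduction essentially step for step: the same vertex layout (a weight\nobreakdash-$0$/$1$ clique on $w^{2p-1}$ vertices indexed by leaves, one vertex per $\atpc$ copy, one extra vertex carrying an edge of weight $\approx C-2T$ that converts the instance-dependent offset into a fixed threshold), the same recursive ``left subtrees via augmentation, pointed subtree via the sub-instance, right subtrees in full'' construction of $A^j\supsetneq B^j$, the same arithmetic showing the MST weight is a constant plus $\sum_j z_j$ so that the threshold $C+\floor{k/2}+1$ tests the majority, and the same $p+(p-1)=2p-1$ exchanges of the $S$-bit state.

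There is, however, one genuine gap in your encoding. You take the universe to be the set of leaves and encode the answer bit into the edge weight $2l+z^j_l$, asserting that ``whoever inserts or deletes an edge always knows the bit $z^j_l$'' by the augmentation property. This fails exactly for item (c) of your own recipe: to preserve $B^j\subsetneq A^j$, the pointer-owner must insert edges for \emph{all} leaves in the subtrees to the \emph{right} of its pointer (the other party's deletions range over all $w$ subtrees), but the defining feature of $\atpc$ is that the pointer-owner knows \emph{nothing} about those right subtrees---augmentation only grants knowledge of left subtrees and strips the right ones. So the inserting party cannot compute $2l+z^j_l$ there, and it cannot fudge the weight either, since a dynamic stream requires a later deletion to quote the exact insertion weight. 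The paper avoids this by doubling the universe to $[2w^d]$ (see $\sender$/$\receiver$ and the weights $t+1$ for $t\in[2w^d]$): each leaf corresponds to two candidate elements/parallel edges, the known bit selects one of them when the bit is known, and for right-subtree leaves the pointer-owner simply inserts \emph{both}, which requires no knowledge and still sits above the minimum. Your argument goes through once you adopt this doubled encoding (or equivalently insert both parallel edges per unknown leaf); as written, the ``conceptually routine bookkeeping'' you defer is precisely the point at which the naive bit-in-weight scheme breaks.
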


Before proving~\Cref{clm:hint-atpc-maj-mst}, we show that it indeed implies~\Cref{thm:lb-few}.

\begin{proof}[Proof of~\Cref{thm:lb-few}]
	Fix a $p$-pass dynamic streaming algorithm for solving $\mst_n$ with probability $2/3$ that has space $S$.
	Let $k = (n-1)/2$, $d = 2p-1$, $w = (n-k-1)^{1/d}$, and $C = dS$.
	Applying the reduction of~\Cref{clm:hint-atpc-maj-mst}, we get a $d$-round protocol for solving $\hintatpc^{\# k}_{d,w}$ with probability $2/3$ over $\cD^k_{d,w}$ that has communication $C$.
	On the other hand, \Cref{lem:lb-hint-atpc-maj} implies
	\[
		C = \Omega\paren{\frac{kw}{d^{O(d)}\log k}},
	\]
	or equivalently,
	\[
		S = \Omega\paren{\frac{n^{1+\frac{1}{2p-1}}}{p^{O(p)}\log n}},
	\]
	as claimed.
\end{proof}

We remark that the assumption $p = o(\sqrt{\log n / \log\log n})$ of~\Cref{thm:lb-few} is nessesary in the above proof for satisfying the condition $d = o(\log w / \log\log w)$ of~\Cref{lem:lb-hint-atpc-maj}.
Moreover, a closer look at the proofs in~\Cref{sec:proof-lb-atpc,sec:proof-lb-hint-atpc-maj}  will reveal that this constraint comes solely from the $r^{O(r)}$-fold decrease in communication when using the XOR lemma of~\cite{Yu22}.
If the loss factor were reduced to $\poly(r)$ (meaning a better XOR lemma), the constraint would then be relaxed to $d = w^{o(1)}$.
In turn, this would be sufficient for proving a lower bound for up to $p = o(\log n / \log\log n)$ passes (and thus the full verion of our main result).
Nevertheless, as will be seen in~\Cref{sec:optimal}, we actually take a two-step approach in the absence of such an ideal XOR lemma.
At a high level, we will perform the amplification of communication and error probability separately.

The rest of this section constitutes a proof of~\Cref{clm:hint-atpc-maj-mst}.
Let $d=2p-1$, $C=dS$, and $n=k+w^d+1$.
Fix a dynamic streaming algorithm $\pi$ as described in the claim.
In the following, we construct a protocol $\tau$ for solving $\hintatpc^{\# k}_{d,w}$ with the desired properties.
On input $((A=a^{(d)}_{\le k},T),(B=b^{(d)}_{\le k},T))$, $\tau$ simulates $\pi$ on the following dynamic stream, where $\sender$ and $\receiver$ are defined in~\Cref{alg:sender} and~\Cref{alg:receiver}, respectively ($p=0$ represents Alice and $p=1$ represents Bob); see~\Cref{fig:sender-receiver} for an illustration of the functions and~\Cref{fig:mst} for an illustration of the reduction.
The threshold given to $\pi$ is to be determined.
\begin{enumerate}
	\item Insert an edge $(1,n)$ with weight $2kw^d-2T+1$.
	\item For $u < v \in [w^d]$, insert an edge $(k+u,k+v)$ with weight $1$.
	\item For $j \in [k]$ and $t \in \sender(d,w,a^{(d)}_j,0)$, insert an edge $(j,k+\ceil{t/2})$ with weight $t+1$.
	\item For $j \in [k]$ and $t \in \receiver(d,w,b^{(d)}_j,1)$, delete the edge $(j,k+\ceil{t/2})$ with weight $t+1$.
\end{enumerate}

\begin{Algorithm}\label{alg:sender}
	The function $\sender(d,w,A^{(d)},p)$.
	\begin{itemize}
		\item $d=1$: We have $A^{(1)} \in \set{0,1}^w$.
			\begin{itemize}
				\item $p=0$: Return
					\[
						\set{2j-A^{(1)}_j \mid j \in [w]}.
					\]
				\item $p=1$: Return
					\[
						[2w] \setminus \set{2j-A^{(1)}_j \mid j \in [w]}.
					\]
			\end{itemize}
		\item $d>1$: We have $A^{(d)} = b^{(d-1)}_{\le w}$, where $b^{(d-1)}_j$ for $j \in [w]$ is a valid input to Bob for $\atpc_{d-1,w}$. Return
			\[
				\bigcup_{j \in [w]} \set{2(j-1) \cdot w^{d-1} + t \mid t \in \receiver(d-1,w,b^{(d-1)}_j,p)}.
			\]
	\end{itemize}
\end{Algorithm}

\begin{Algorithm}\label{alg:receiver}
	The function $\receiver(d,w,B^{(d)},p)$.
	\begin{itemize}
		\item $d=1$: We have $B^{(1)} = (i^{(1)},A^{(1)}_{< i^{(1)}})$, where $i^{(1)} \in [w]$ and $A^{(1)}_j \in \set{0,1}$ for $j \in [i^{(1)}-1]$.
			\begin{itemize}
				\item $p=0$: Return
					\[
						[2w] \setminus \set{2j-A^{(1)}_j \mid j \in [i^{(1)}-1]}.
					\]
				\item $p=1$: Return
					\[
						\set{2j-A^{(1)}_j \mid j \in [i^{(1)}-1]}.
					\]
			\end{itemize}
		\item $d>1$: We have $B^{(d)} = (i^{(d)},a^{(d-1)}_{\le i^{(d)}},b^{(d-1)}_{< i^{(d)}})$, where $i^{(d)} \in [w]$, $(a^{(d-1)}_j,b^{(d-1)}_j)$ for $j \in [i^{(d-1)}-1]$ is a valid instance of $\atpc_{d-1,w}$, and $a^{(d-1)}_{i^{(d)}}$ is a valid input to Alice for $\atpc_{d-1,w}$. If $p=0$, return
			\begin{align*}
				& \paren{\bigcup_{j \in [i^{(d)}-1]} \set{2(j-1) \cdot w^{d-1} + t \mid t \in \receiver(d-1,w,b^{(d-1)}_j,1)}}\\
				& \hspace{1cm} \cup \set{2(i^{(d)}-1) \cdot w^{d-1} + t \mid t \in \sender(d-1,w,a^{(d-1)}_{i^{(d)}},0)}\\
				& \hspace{1cm} \cup [2i^{(d)} \cdot w^{d-1} + 1, 2w^d],
			\end{align*}
			and if $p=1$, return
			\begin{align*}
				& \paren{\bigcup_{j \in [i^{(d)}-1]} \set{2(j-1) \cdot w^{d-1} + t \mid t \in \receiver(d-1,w,b^{(d-1)}_j,0)}}\\
				& \hspace{1cm} \cup \set{2(i^{(d)}-1) \cdot w^{d-1} + t \mid t \in \sender(d-1,w,a^{(d-1)}_{i^{(d)}},1)}.
			\end{align*}
	\end{itemize}
\end{Algorithm}

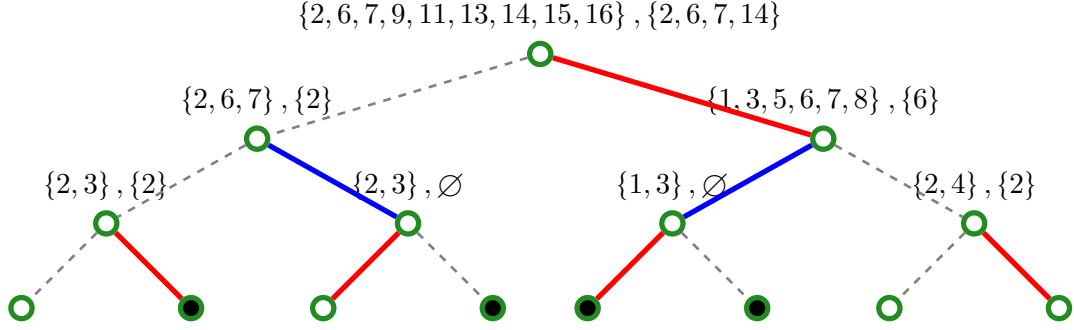
\begin{figure}[!tbh]
	\centering

\begin{tikzpicture}

\tikzset{layer/.style={rectangle, rounded corners=5pt, draw, black, line width=1pt,  fill=black!10, inner sep=4pt}}
\tikzset{vertex/.style={circle, ForestGreen, fill=white, line width=2pt, draw, minimum width=8pt, minimum height=8pt, inner sep=0pt}}
\tikzset{choose/.style={rectangle, line width=1pt, rounded corners = 2pt, draw, minimum width=40pt, minimum height=16pt, fill=black!10}}

\node[vertex, label={$\set{2,6,7,9,11,13,14,15,16},\set{2,6,7,14}$}] (v1) [] {};

\node[vertex, label={$\set{2,6,7},\set{2}$}] (v2) [below left=25pt and 100pt of v1] {};
\node[vertex, label={$\set{1,3,5,6,7,8},\set{6}$}] (v3) [below right=25pt and 100pt of v1] {};

\draw[dashed, line width=1pt, gray, opacity=0.25] (v1) -- (v2);
\draw[line width=2pt, red] (v1) -- (v3);

\node[vertex, label={$\set{2,3},\set{2}$}] (v4) [below left=25pt and 50pt of v2] {};
\node[vertex, label={$\set{2,3},\emptyset$}] (v5) [below right=25pt and 50pt of v2] {};
\node[vertex, label={$\set{1,3},\emptyset$}] (v6) [below left=25pt and 50pt of v3] {};
\node[vertex, label={$\set{2,4},\set{2}$}] (v7) [below right=25pt and 50pt of v3] {};

\draw[dashed, line width=1pt, gray, opacity=0.25] (v2) -- (v4);
\draw[line width=2pt, blue] (v2) -- (v5);
\draw[line width=2pt, blue] (v3) -- (v6);
\draw[dashed, line width=1pt, gray, opacity=0.25] (v3) -- (v7);

\node[vertex, fill=white] (v8) [below left=25pt and 25pt of v4] {};
\node[vertex, fill=black] (v9) [below right=25pt and 25pt of v4] {};
\node[vertex, fill=white] (v10) [below left=25pt and 25pt of v5] {};
\node[vertex, fill=black] (v11) [below right=25pt and 25pt of v5] {};
\node[vertex, fill=black] (v12) [below left=25pt and 25pt of v6] {};
\node[vertex, fill=black] (v13) [below right=25pt and 25pt of v6] {};
\node[vertex, fill=white] (v14) [below left=25pt and 25pt of v7] {};
\node[vertex, fill=white] (v15) [below right=25pt and 25pt of v7] {};

\draw[dashed, line width=1pt, gray, opacity=0.25] (v4) -- (v8);
\draw[line width=2pt, red] (v4) -- (v9);
\draw[line width=2pt, red] (v5) -- (v10);
\draw[dashed, line width=1pt, gray, opacity=0.25] (v5) -- (v11);
\draw[line width=2pt, red] (v6) -- (v12);
\draw[dashed, line width=1pt, gray, opacity=0.25] (v6) -- (v13);
\draw[dashed, line width=1pt, gray, opacity=0.25] (v7) -- (v14);
\draw[line width=2pt, red] (v7) -- (v15);

\end{tikzpicture}
	\caption{An illustration of functions $\sender$ and $\receiver$, for $d=3$ and $w=2$. Blue edges are the pointers owned by Alice (not via augmentation) while red edges are the pointers owned by Bob (not via augmentation). Unfilled leaf nodes have value $0$ while filled leaf nodes have value $1$. Each internal node is labeled by the set of insertions (for Alice), followed by the set of deletions (for Bob), with respect to the subinstance represented by its subtree, where the owner of the pointer computes $\receiver$ and the other party computes $\sender$.}
	\label{fig:sender-receiver}
\end{figure}

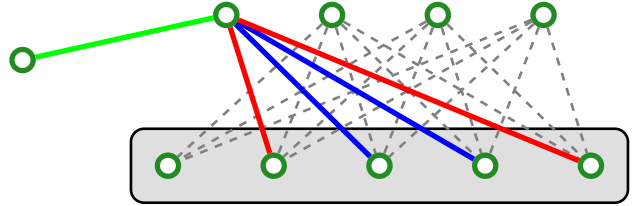
\begin{figure}[!tbh]
	\centering

\begin{tikzpicture}

\tikzset{layer/.style={rectangle, rounded corners=5pt, draw, black, line width=1pt,  fill=black!10, inner sep=4pt}}
\tikzset{vertex/.style={circle, ForestGreen, fill=white, line width=2pt, draw, minimum width=8pt, minimum height=8pt, inner sep=0pt}}
\tikzset{choose/.style={rectangle, line width=1pt, rounded corners = 2pt, draw, minimum width=40pt, minimum height=16pt, fill=black!10}}

\begin{scope}[local bounding box=bb]
	\node[vertex] (w1) {};
	\foreach \i in {2,...,5}
	{
		\pgfmathtruncatemacro{\ip}{\i-1}
		\node[vertex] (w\i) [right=30pt of w\ip] {};
	}
\end{scope}

\begin{scope}[on background layer]
	\draw[layer, fill=gray!25] ($(bb.south west) - (10pt,10pt)$) rectangle ($(bb.north east) + (10pt,10pt)$);
\end{scope}

\node[vertex] (j1) [above right=50pt and 15pt of w1] {};

\foreach \i in {2,...,4}
{
	\pgfmathtruncatemacro{\ip}{\i-1}
	\node[vertex] (j\i) [right=30pt of j\ip] {};
}

\node[vertex] (n) [below left=10pt and 70pt of j1] {};

\foreach \j in {2,...,4}
{
	\foreach \k in {1,...,5}
	{
		\draw[dashed, line width=1pt, gray, opacity=0.25] (j\j) -- (w\k);
	}
}

\draw[line width=2pt, green] (n) -- (j1);

\draw[line width=2pt, red] (j1) -- (w2);
\draw[line width=2pt, blue] (j1) -- (w3);
\draw[line width=2pt, blue] (j1) -- (w4);
\draw[line width=2pt, red] (j1) -- (w5);

\end{tikzpicture}
	\caption{An illustration of the reduction from $\hintatpc^{\# k}_{d,w}$ to $\mst_n$, for $k=4$, $w^d=5$, and $n=10$. Bottom vertices (encircled in gray) represent the elements of $[w^d]$, which are fully connected as a clique, while each of the top vertices represents an $\atpc_{d,w}$ instance. Red edges correspond to the deletions while each of the blue edges is inserted but not deleted -- to avoid clutter, only the edges for $j=1$ are drawn. The green edge is $(1,n)$.}
	\label{fig:mst}
\end{figure}

At a high level, Alice and Bob jointly encode the target of each $\atpc_{d,w}$ instance as the minimum weight edge incident on a unique vertex.
To do this, the sender of a message (who does not own the current pointer) has no choice but to collect and merge its insertions/deletions from all subtrees (offset properly to make them disjoint).
On the other hand, the receiver owns the current pointer and thus has full knowledge of the subtrees to the left of the pointer, enabling perfect simulation of both parties in all these subtrees.
Suppose the receiver performs opposite operations on exactly the same subset of elements as the sender, meaning effectively no edge is inserted/deleted, in each of these subtrees.
As a result, the output of the larger instance always corresponds to the output of the smaller instance determined by the current pointer.
Besides, to ensure a proper inclusion, Alice as the receiver will insert everything to the right of the pointer while Bob as the receiver will delete nothing to the right of the pointer, as can be seen in the second case of~\Cref{alg:receiver}.

It can be verified that Alice is able to compute all insertions on her own and Bob is able to compute all deletions on his own.
So the $p$-pass dynamic streaming algorithm $\pi$ can be simulated by the protocol $\tau$, using $d$ rounds and $C$ communication, in the canonical way of exchanging memory states.
Now, it remains to formally show the correctness of the reduction.
This is done with the help of the following technical claim.

\begin{claim}\label{clm:sender-receiver}
	For any $\atpc_{d,w}$ instance $(A^{(d)},B^{(d)})$, it holds that
	\[
		\sender(d,w,A^{(d)},0) \supsetneq \receiver(d,w,B^{(d)},1),
	\]
	and
	\[
		\receiver(d,w,B^{(d)},0) \supsetneq \sender(d,w,A^{(d)},1).
	\]
	Furthermore, it also holds that
	\[
		\min(\sender(d,w,A^{(d)},0) \setminus \receiver(d,w,B^{(d)},1)) = 2t-z,
	\]
	and
	\[
		\min(\receiver(d,w,B^{(d)},0) \setminus \sender(d,w,A^{(d)},1)) = 2t-z,
	\]
	where $t$ is the target of the instance, and $z$ the output.
\end{claim}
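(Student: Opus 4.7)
The plan is to prove the claim by induction on $d$, tracking both containment and the minimum-element identity simultaneously. The base case $d=1$ will be a direct computation from the definitions in \Cref{alg:sender,alg:receiver}: $\sender(1,w,A^{(1)},0) = \{2j - A^{(1)}_j : j \in [w]\}$ and $\receiver(1,w,B^{(1)},1) = \{2j - A^{(1)}_j : j \in [i^{(1)}-1]\}$, so the difference is exactly $\{2j - A^{(1)}_j : j \in [i^{(1)}, w]\}$, which is nonempty (giving strict containment) and has minimum $2i^{(1)} - A^{(1)}_{i^{(1)}} = 2t - z$ since the target at $d=1$ is $t = i^{(1)}$ and the output is $z = A^{(1)}_{i^{(1)}}$. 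The second pair is symmetric: both are complements in $[2w]$, so $\receiver(1,w,B^{(1)},0) \setminus \sender(1,w,A^{(1)},1)$ coincides with the same set $\{2j - A^{(1)}_j : j \in [i^{(1)},w]\}$.

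For the inductive step, I would write each of the four sets as a disjoint union indexed by $j \in [w]$, using the fact that the shift $2(j-1)w^{d-1}$ places the $j$-th block into the interval $(2(j-1)w^{d-1}, 2jw^{d-1}]$. The key observation is that the sender's $j$-th block and the receiver's $j$-th block match up in a very structured way according to whether $j < i^{(d)}$, $j = i^{(d)}$, or $j > i^{(d)}$. Specifically, for the first pair, when $j < i^{(d)}$ both functions contribute the same shifted copy of $\receiver(d-1,w,b^{(d-1)}_j,0)$, so those blocks cancel; when $j = i^{(d)}$ the inductive hypothesis applied to $(a^{(d-1)}_{i^{(d)}}, b^{(d-1)}_{i^{(d)}})$ gives strict containment with minimum-of-difference $2t_{d-1} - z$, where $t_{d-1}$ is the target of the subinstance at the pointed child; and when $j > i^{(d)}$ the sender has a nonempty shifted block while the receiver contributes nothing.

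Combining the blocks, the total difference for the first pair is nonempty (so strict containment holds), and its minimum is achieved in the $j = i^{(d)}$ block at value $2(i^{(d)}-1)w^{d-1} + (2t_{d-1} - z) = 2t - z$, using the identity $t = (i^{(d)}-1)w^{d-1} + t_{d-1}$ for the overall target. The second pair is handled analogously: for $j < i^{(d)}$ both sides contribute shifted $\receiver(d-1,w,b^{(d-1)}_j,1)$ and cancel; for $j = i^{(d)}$ the inductive hypothesis for the other direction gives strict containment with the correct minimum; and for $j > i^{(d)}$ the receiver's block is the full interval $[2(j-1)w^{d-1}+1, 2jw^{d-1}]$, which strictly contains the sender's shifted $\receiver(d-1,w,b^{(d-1)}_j,1)$. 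Again, only the $j = i^{(d)}$ block contributes the global minimum $2t - z$.

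I do not expect any major obstacle; the proof is essentially a careful accounting exercise. The one subtle point that warrants attention is verifying that the range $[2i^{(d)}w^{d-1}+1, 2w^d]$ appearing in the $p=0$ branch of $\receiver$ correctly covers all positions of the sender's blocks for $j > i^{(d)}$, and that these positions are strictly above the minimum-contributing block at $j = i^{(d)}$. Once the block-by-block matching is laid out, both strict containment and the minimum identity fall out from the induction hypothesis in one shot.
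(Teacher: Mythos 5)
Your proposal is correct and follows essentially the same route as the paper: induction on $d$, with the base case read off from the definitions and the inductive step applying the hypothesis to the subinstance $(a^{(d-1)}_{i^{(d)}},b^{(d-1)}_{i^{(d)}})$ at the pointed child, after observing that the blocks for $j<i^{(d)}$ cancel and the blocks for $j>i^{(d)}$ only enlarge the difference above the minimum. The block-by-block accounting you spell out is exactly what the paper's (terser) proof relies on implicitly.
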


Assume the above claim for now.
Applying it to $(a^{(d)}_j,b^{(d)}_j)$ for $j \in [k]$, we know that the constructed dynamic stream is well-defined and any minimum spanning tree of the constructed graph must consist of the following edges.
\begin{enumerate}
	\item The edge $(1,n)$ with weight $2kw^d-2T+1$.
	\item $w^d-1$ edges connecting $[k+1,k+w^d]$, each with weight $1$.
	\item The edge $(j,k+t_j)$ with weight $2t_j-z_j+1$, where $t_j$ is the target of $(a^{(d)}_j,b^{(d)}_j)$ and $z_j$ is the output of the same instance, for $j \in [k]$.
\end{enumerate}
Therefore, the weight of minimum spanning trees is
\[
	2kw^d-2T+1 + w^d-1 + \sum_{j \in [k]} (2t_j-z_j+1) = 2kw^d+w^d+k - \sum_{j \in [k]} z_j.
\]
In other words, $\tau$ will output $1$ (i.e., more than $\floor{k/2}$ out of the $k$ instances of $\atpc_{d,w}$ output $1$) if and only if $\pi$ outputs $0$ given threshold $2kw^d+w^d+k-\floor{k/2}$ (i.e., the weight of minimum spanning trees is less than the given threshold).
So the success probability remains the same.

We conclude this section with a proof of~\Cref{clm:sender-receiver}.

\begin{proof}[Proof of~\Cref{clm:sender-receiver}]
The proof is by induction on $d$.
The base case of $d=1$ is a direct consequence of the first cases of~\Cref{alg:sender} and~\Cref{alg:receiver}.
For $d>1$, we can get
\[
	\sender(d,w,A^{(d)},0) \supsetneq \receiver(d,w,B^{(d)},1),
\]
because by the inductive hypothesis, we have that
\[
	\receiver(d-1,w,b^{(d-1)}_{i^{(d)}},0) \supsetneq \sender(d-1,w,a^{(d-1)}_{i^{(d)}},1).
\]
Furthermore, we actually know
\begin{align*}
	& \min(\sender(d,w,A^{(d)},0) \setminus \receiver(d,w,B^{(d)},1))\\
	& \hspace{1cm} = 2(i^{(d)}-1) \cdot w^{d-1} + \min(\receiver(d-1,w,b^{(d-1)}_{i^{(d)}},0) \setminus \sender(d-1,w,a^{(d-1)}_{i^{(d)}},1))\\
	& \hspace{1cm} = 2(i^{(d)}-1) \cdot w^{d-1} + 2t'-z'\\
	& \hspace{1cm} = 2t-z,
\end{align*}
where $t'$ is the target of $(a^{(d-1)}_{i^{(d)}},b^{(d-1)}_{i^{(d)}})$ and $z'$ is the output of the same instance.

Similarly, we can also get
\[
	\receiver(d,w,B^{(d)},0) \supsetneq \sender(d,w,A^{(d)},1),
\]
because by the inductive hypothesis, we have that
\[
	\sender(d-1,w,a^{(d-1)}_{i^{(d)}},0) \supsetneq \receiver(d-1,w,b^{(d-1)}_{i^{(d)}},1).
\]
Furthermore, we actually know
\begin{align*}
	& \min(\receiver(d,w,B^{(d)},0) \setminus \sender(d,w,A^{(d)},1))\\
	& \hspace{1cm} = 2(i^{(d)}-1) \cdot w^{d-1} + \min(\sender(d-1,w,a^{(d-1)}_{i^{(d)}},0) \setminus \receiver(d-1,w,b^{(d-1)}_{i^{(d)}},1))\\
	& \hspace{1cm} = 2(i^{(d)}-1) \cdot w^{d-1} + 2t'-z'\\
	& \hspace{1cm} = 2t-z,
\end{align*}
where $t'$ is the target of $(a^{(d-1)}_{i^{(d)}},b^{(d-1)}_{i^{(d)}})$ and $z'$ is the output of the same instance.
This concludes the proof.
\end{proof}

We remark that~\Cref{clm:sender-receiver} can also be viewed as a reduction from $\atpc_{d,w}$ to $\dmur$ over a universe of size $m=2w^d$, by Alice computing the set $\sender(d,w,A^{(d)},0)$ and Bob computing the set $\receiver(d,w,B^{(d)},1)$.
So the following corollary follows immediately from the lower bound for $\atpc_{d,w}$ (\Cref{lem:lb-atpc}).

\begin{corollary}\label{cor:dmur}
	For $m,r \ge 1$, and $\epsilon \in [0,1/2]$, any $r$-round (randomized) protocol that solves $\dmur$ with probability $1/2+\epsilon$ over a universe of size $m$, requires $\Omega(\epsilon^2 m^{1/r}/r)$ communication.
\end{corollary}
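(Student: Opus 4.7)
The plan is to package the reduction that is already hidden inside~\Cref{clm:sender-receiver}, together with the $\atpc$ lower bound~\Cref{lem:lb-atpc}, into a single one-line argument. Suppose we are handed any $r$-round $\dmur$ protocol $\pi$ on a universe of size $m$, with communication $C$ and success probability $1/2+\epsilon$. I would pick $w = \lfloor (m/2)^{1/r} \rfloor$ so that $2w^r \le m$, and use $\pi$ to build an $r$-round protocol for $\atpc_{r,w}$ as follows: on an input $(A^{(r)},B^{(r)})$ drawn from $\cD_{r,w}$, Alice locally forms $A = \sender(r,w,A^{(r)},0)$ and Bob locally forms $B = \receiver(r,w,B^{(r)},1)$, each viewed as a subset of $[2w^r] \subseteq [m]$; no bits are exchanged during this encoding step. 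The parties then simulate $\pi$ on $(A,B)$ and return its output.

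Correctness of the reduction is an immediate consequence of~\Cref{clm:sender-receiver}. That claim guarantees $A \supsetneq B$, so the $\dmur$ promise is satisfied, and also that $\min(A \setminus B) = 2t - z$ where $t$ is the target of the $\atpc_{r,w}$ instance and $z$ is its output bit. Since $2t$ is even and $z \in \set{0,1}$, the parity of $2t - z$ equals $z$, so solving $\dmur$ on $(A,B)$ returns exactly the $\atpc_{r,w}$ answer. Hence the simulated protocol solves $\atpc_{r,w}$ under $\cD_{r,w}$ with the same number of rounds $r$, communication $C$, and success probability $1/2+\epsilon$. Plugging into~\Cref{lem:lb-atpc} then forces
$$C \;\ge\; \bD^{(r)}_{\cD_{r,w},\,1/2+\epsilon}(\atpc_{r,w}) \;\ge\; \frac{\epsilon^2 w}{r} \;=\; \Omega\!\paren{\frac{\epsilon^2 m^{1/r}}{r}},$$
which is the claimed bound.

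I do not foresee any real obstacle here, since both the reduction (\Cref{clm:sender-receiver}) and the communication lower bound it invokes (\Cref{lem:lb-atpc}) are already in hand. The only cosmetic wrinkle is that $(m/2)^{1/r}$ need not be an integer, but taking the floor loses at most a constant factor in $w$, which is absorbed by the asymptotic notation.
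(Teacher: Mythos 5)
Your proposal is correct and matches the paper's own argument: the paper proves this corollary precisely by observing that \Cref{clm:sender-receiver} gives a zero-communication reduction from $\atpc_{d,w}$ to $\dmur$ over a universe of size $2w^d$ (Alice computes $\sender(d,w,A^{(d)},0)$, Bob computes $\receiver(d,w,B^{(d)},1)$, and the parity of $\min(A\setminus B)=2t-z$ is the $\atpc$ answer $z$), and then invokes \Cref{lem:lb-atpc}. The rounding of $w$ is the only cosmetic difference and, as you note, it is absorbed by the asymptotic notation.
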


\subsection{Lower Bound for $\atpc_{d,w}$}\label{sec:proof-lb-atpc}

We derive~\Cref{lem:lb-atpc} by round elimination in this section.
\Cref{clm:base-case,clm:round-elim} take care of the base case and each round elimination step, respectively.

\begin{claim}[Base case]\label{clm:base-case}
	For $w \ge 1$, any one-way protocol $\pi$ for solving $\atpc_{1,w}$ succeeds with probability at most $1/2+\sqrt{\cc(\pi)/w}$ over $\cD_{1,w}$.
\end{claim}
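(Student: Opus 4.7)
The plan is to prove this by a standard information-theoretic argument for the Augmented Index problem. Let $\rM = \pi(\rA^{(1)})$ denote Alice's (randomized) message, which has $\en{\rM} \le \cc(\pi)$. Since $\pi$ is one-way, Bob's guess is a function of $(\rM, \rI^{(1)}, \rA^{(1)}_{< \rI^{(1)}})$ together with his private randomness (which we can freely condition on without increasing the success probability). For each $i \in [w]$, write $\epsilon_i$ for the advantage of $\pi$ over random guessing conditioned on $\rI^{(1)} = i$; since $\rI^{(1)}$ is uniform in $[w]$, the overall advantage is $\epsilon = \frac{1}{w} \sum_{i \in [w]} \epsilon_i$, and it suffices to show $\epsilon \le \sqrt{\cc(\pi)/w}$.

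First I would translate the per-coordinate advantage into mutual information. Since $\rA^{(1)}_i$ is a uniform bit independent of $\rA^{(1)}_{<i}$, predicting $\rA^{(1)}_i$ from $(\rM, \rA^{(1)}_{<i})$ with advantage $\epsilon_i$ is equivalent (after maximizing over the predictor) to saying that the expected total variation distance between the posterior $P(\rA^{(1)}_i \mid \rM, \rA^{(1)}_{<i})$ and the uniform distribution is at least $\epsilon_i$. Combining Pinsker's inequality with Jensen's inequality then yields
\begin{equation*}
\mi{\rM}{\rA^{(1)}_i}[\rA^{(1)}_{<i}] \;\ge\; 2 \epsilon_i^2.
\end{equation*}

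Next, I would apply the chain rule of mutual information together with the fact that the bits $\rA^{(1)}_1,\ldots,\rA^{(1)}_w$ are i.i.d.\ uniform:
\begin{equation*}
\sum_{i \in [w]} \mi{\rM}{\rA^{(1)}_i}[\rA^{(1)}_{<i}] \;=\; \mi{\rM}{\rA^{(1)}} \;\le\; \en{\rM} \;\le\; \cc(\pi),
\end{equation*}
so that $\sum_{i \in [w]} \epsilon_i^2 \le \cc(\pi)/2$. Finally, Cauchy--Schwarz gives $\bigl(\sum_i \epsilon_i\bigr)^2 \le w \sum_i \epsilon_i^2 \le w\,\cc(\pi)/2$, and hence $\epsilon \le \sqrt{\cc(\pi)/(2w)} \le \sqrt{\cc(\pi)/w}$, establishing the claim.

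There is no significant obstacle here: every step is textbook. The only mildly delicate point is the Pinsker-plus-Jensen step that converts the predictive advantage $\epsilon_i$ into a mutual-information lower bound of order $\epsilon_i^2$; this is where the square on the advantage in the final bound comes from, and it is tight up to constants. Handling private randomness by averaging (or by fixing a best value) and formally turning Bob's function into a maximum-likelihood predictor are routine and do not affect the argument.
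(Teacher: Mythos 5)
Your proof is correct and follows essentially the same route as the paper: bound the information that Alice's message reveals about $\rA^{(1)}_{\rI^{(1)}}$ given $\rA^{(1)}_{<\rI^{(1)}}$ via the chain rule (yielding the $1/w$ factor), then convert to predictive advantage via Pinsker plus a convexity argument. The only cosmetic difference is that you apply Pinsker per coordinate and recombine with Cauchy--Schwarz, whereas the paper averages over $\rI^{(1)}$ first and applies Pinsker and Jensen once; this even nets you the slightly sharper constant $\sqrt{\cc(\pi)/(2w)}$.
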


\begin{proof}
	Throughout the proof, all superscripts on random variables will be temporarily omitted for conciseness.
	Let $\rM$ be the message sent from Alice to Bob.
	We can get
	\begin{align*}
		\mi{\rM,\rA_{< \rI},\rI}{\rA_\rI}
		& = \mi{\rA_{< \rI},\rI}{\rA_\rI} + \mi{\rM}{\rA_\rI \mid \rA_{< \rI},\rI} \tag{by chain rule of mutual information (\itfacts{chain-rule})}\\
		& = \mi{\rM}{\rA_\rI \mid \rA_{< \rI},\rI} \tag{as $\rA_{< \rI},\rI \perp \rA_\rI$}\\
		& = \frac{1}{w} \cdot \sum_{i \in [w]} \mi{\rM}{\rA_i \mid \rA_{< i},\rI=i} \tag{as $\rI$ is uniform}\\
		& = \frac{1}{w} \cdot \sum_{i \in [w]} \mi{\rM}{\rA_i \mid \rA_{< i}} \tag{as $\rM,\rA_{\le i} \perp \rI=i$}\\
		& = \frac{\mi{\rM}{\rA}}{w} \tag{by chain rule of mutual information (\itfacts{chain-rule})}\\
		& \le \frac{\cc(\pi)}{w}.
	\end{align*}
	Furthermore, we have
	\begin{align*}
		& \Exp_{\rM,\rA_{< \rI},\rI} \tvd{\distribution{\rA_\rI \mid \rM,\rA_{< \rI},\rI}}{\distribution{\rA_\rI}}\\
		& \hspace{1cm} \le \Exp_{\rM,\rA_{< \rI},\rI} \sqrt{\kl{\distribution{\rA_\rI \mid \rM,\rA_{< \rI},\rI}}{\distribution{\rA_\rI}}} \tag{by Pinsker's inequality (\Cref{fact:pinskers})}\\
		& \hspace{1cm} \le \sqrt{\Exp_{\rM,\rA_{< \rI},\rI} \kl{\distribution{\rA_\rI \mid \rM,\rA_{< \rI},\rI}}{\distribution{\rA_\rI}}} \tag{by concavity of $\sqrt{\cdot}$}\\
		& \hspace{1cm} = \sqrt{\mi{\rA_{\rI}}{\rM,\rA_{< \rI},\rI}} \tag{by~\Cref{fact:kl-info}}\\
		& \hspace{1cm} \le \sqrt{\frac{\cc(\pi)}{w}}.
	\end{align*}
	Since $\rA_\rI$ is initially uniformly random, the probability that Bob can correctly guess $\rA_\rI$ from his perspective (i.e., given $\rM,\rA_{< \rI},\rI$) is then at most $1/2+\sqrt{\cc(\pi)/w}$, as claimed.
\end{proof}

\begin{claim}[Round elimination]\label{clm:round-elim}
	For $d,w \ge 1$ and $\epsilon \in [0,1]$, if there exists a $(d+1)$-round protocol $\pi$ for solving $\atpc_{d+1,w}$ with probability $\epsilon$ over $\cD_{d+1,w}$, then there also exists a $d$-round protocol $\tau$ for solving $\atpc_{d,w}$ with probability $\epsilon-\sqrt{\cc^{(1)}(\pi)/w}$ over $\cD_{d,w}$ and communication $\cc(\tau) = \cc^{(> 1)}(\pi)$.
\end{claim}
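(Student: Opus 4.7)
The plan is a standard embedding-plus-round-elimination argument: given a $(d+1)$-round protocol $\pi$ for $\atpc_{d+1,w}$, construct a $d$-round protocol $\tau$ for $\atpc_{d,w}$ by embedding an $\atpc_{d,w}$ instance at a random position of the top level and then simulating away $\pi$'s first message. On input $(A,B) \sim \cD_{d,w}$, the parties publicly sample $i \in [w]$ uniformly and $\{(a^{(d)}_j,b^{(d)}_j)\}_{j<i}$ from $\cD_{d,w}$; Bob-of-$\tau$ sets $b^{(d)}_i = B$ and privately samples $\{b^{(d)}_j\}_{j>i}$ from the marginal; Alice-of-$\tau$ sets $a^{(d)}_i = A$. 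One checks against \Cref{dist:atpc} that the resulting $\atpc_{d+1,w}$ instance is distributed as $\cD_{d+1,w}$, that it is consistent with the augmentation structure (Bob-of-$\pi$ receives exactly $(i, a^{(d)}_{\le i}, b^{(d)}_{<i})$, all of which he knows), and that its answer equals the answer to $(A,B)$. Note that the speaker roles swap at each level: Alice-of-$\pi$ corresponds to Bob-of-$\tau$, which is why eliminating $\pi$'s first round leaves a $d$-round protocol with Alice-of-$\tau$ speaking first, as required.

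The first round of $\pi$ is the message $M_1$ from Alice-of-$\pi$ (equivalently Bob-of-$\tau$); in $\tau$ both parties replace it by a sample $\widetilde M_1$ drawn from $P(M_1 \mid i, b^{(d)}_{<i})$ using public randomness, and then execute rounds $2,\dots,d+1$ of $\pi$ honestly. Each party can carry out those rounds: Bob-of-$\tau$ holds all of $b^{(d)}_{\le w}$ (public samples, his own $B$, and his private samples), while Alice-of-$\tau$ holds $a^{(d)}_{\le i}$ together with the public $b^{(d)}_{<i}$. The total communication is therefore $\cc^{(>1)}(\pi)$ and there are $d$ rounds.

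The main technical point—and the place where the loss $\sqrt{\cc^{(1)}(\pi)/w}$ enters—is bounding how much this substitution degrades the success probability. For any fixing of $(i, b^{(d)}_{<i})$, the true and simulated joint distributions of the transcript differ only in how $M_1$ is sampled, so by Pinsker and \Cref{fact:kl-info} the induced TVD is at most $\sqrt{\tfrac{1}{2}\II(M_1; b^{(d)}_i \mid i, b^{(d)}_{<i})}$. Because $i$ is uniform, independent of $b^{(d)}_{\le w}$, the chain rule for mutual information gives
\[
\Exp_i\, \II(M_1 ; b^{(d)}_i \mid i, b^{(d)}_{<i}) \;=\; \frac{1}{w}\sum_{j=1}^{w} \II(M_1 ; b^{(d)}_j \mid b^{(d)}_{<j}) \;=\; \frac{\II(M_1 ; b^{(d)}_{\le w})}{w} \;\le\; \frac{\cc^{(1)}(\pi)}{w},
\]
and Jensen's inequality then yields expected TVD at most $\sqrt{\cc^{(1)}(\pi)/w}$ (absorbing the constant $1/2$). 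Since the simulation outputs whatever $\pi$ outputs, the success probability of $\tau$ on $\cD_{d,w}$ is at least $\epsilon - \sqrt{\cc^{(1)}(\pi)/w}$.

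The key obstacle is that Bob-of-$\tau$'s input $B$ genuinely influences the distribution of $M_1$, so one cannot literally public-sample $M_1$ correctly; the argument succeeds only because $B$ is hidden as one of $w$ independent coordinates of $b^{(d)}_{\le w}$, limiting how much of its identity $M_1$ can encode. This is the same mechanism (and essentially the same calculation) that drives the base case in \Cref{clm:base-case}, which is why it is natural to reuse the combination of the chain rule and Pinsker's inequality here. Together, the base case and this round-elimination step then yield \Cref{lem:lb-atpc} by induction on $d$.
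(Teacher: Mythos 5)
Your overall strategy --- embed the $\atpc_{d,w}$ instance at a uniformly random top-level coordinate $i$, replace $\pi$'s first message by a public sample, and charge the loss to $\II(M_1 ; b^{(d)}_i \mid i, b^{(d)}_{<i})$ via the chain rule and Pinsker --- is exactly the paper's, and your displayed mutual-information calculation is essentially the paper's \Cref{clm:mi-public}. However, there is a genuine gap in the order in which you sample things. You have Bob-of-$\tau$ draw $b^{(d)}_{>i}$ privately from the \emph{unconditioned} marginal, and only afterwards draw $\widetilde M_1$ from $P(M_1 \mid i, b^{(d)}_{<i})$, so in your simulated joint distribution $\widetilde M_1$ is independent of $b^{(d)}_{\geq i}$ given $(i, b^{(d)}_{<i})$. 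In the true execution $M_1$ is a function of \emph{all} of $b^{(d)}_{\leq w}$, and the continuation of $\pi$ from round $2$ is computed by Bob-of-$\tau$ using his full $b^{(d)}_{\leq w}$, so the discrepancy you must control is $\Exp_{i}\,\II(M_1 ; b^{(d)}_{\geq i} \mid b^{(d)}_{<i})$, not $\Exp_{i}\,\II(M_1 ; b^{(d)}_{i} \mid b^{(d)}_{<i})$. The former equals $\frac{1}{w}\sum_{j} j\cdot \II(M_1 ; b^{(d)}_j \mid b^{(d)}_{<j})$, which can be as large as $\cc^{(1)}(\pi)$ --- there is no $1/w$ saving. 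Concretely, take any good $\pi$ and modify it so that its first message additionally contains one balanced bit $g(b^{(d)}_w)$, and a later message of the same party outputs garbage unless the transcript's first bit equals $g(b^{(d)}_w)$. This still solves $\atpc_{d+1,w}$ with probability $\epsilon$ and increases $\cc^{(1)}(\pi)$ by one bit, yet your simulation fails the consistency check with probability about $1/2$, far exceeding $\sqrt{\cc^{(1)}(\pi)/w}$.

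The fix is the paper's ordering: publicly sample $(\rM,\rB_{<\rI},\rI)$ jointly from their true joint distribution \emph{first}; embed $(A,B)$ at coordinate $\rI$ --- this is the only lossy step, costing $\Exp\tvd{\distribution{\rA_\rI,\rB_\rI \mid \rM,\rB_{<\rI},\rI}}{\distribution{\rA_\rI,\rB_\rI}} \le \sqrt{\cc^{(1)}(\pi)/w}$, which is the calculation you already have; then have Alice privately sample $\rA_{<\rI}$ conditioned on $(\rM,\rA_\rI,\rB_{<\rI},\rI)$ and Bob privately sample $\rB_{>\rI}$ conditioned on $(\rM,\rB_{\le\rI},\rI)$. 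One must then verify that each party can carry out its private sampling using only its own knowledge and still reproduce the correct joint conditional, i.e., that $\rA_{<\rI}\perp\rB_\rI\mid \rM,\rA_\rI,\rB_{<\rI},\rI$ and $\rB_{>\rI}\perp\rA_{\le\rI}\mid\rM,\rB_{\le\rI},\rI$; this is the content of the paper's \Cref{clm:mi-private} (proved by data processing, using that $\rM$ is a function of $\rB$ alone) and is a step your write-up omits entirely.
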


\begin{proof}
	Throughout the proof, all superscripts on random variables will be temporarily omitted for conciseness.
	When there is ambiguity, unprimed quantities represent the $d$-round versions while primed ones are reserved for $d+1$ rounds.
	Let $\rM$ be the first-round message of $\pi$.
	On input $(A,B)$, $\tau$ will simulate $\pi$ as shown in~\Cref{alg:round-elim}, where all random variables are with respect to $\pi$.

	\begin{Algorithm}\label{alg:round-elim}
		The $d$-round protocol $\tau$ for solving $\atpc_{d,w}$ on input $(A,B)$.
		\begin{enumerate}
			\item Alice and Bob publicly sample $\rM,\rB_{< \rI},\rI$.
			\item Alice sets $\rA_\rI = A$ and Bob sets $\rB_\rI = B$.
			\item Alice privately samples $\rA_{< \rI}$ conditioned on $\rM,\rA_\rI,\rB_{< \rI},\rI$, and sets $\rB' = (\rI,\rA_{\le \rI},\rB_{< \rI})$.
			\item Bob privately samples $\rB_{> \rI}$ conditioned on $\rM,\rB_{\le \rI},\rI$, and sets $\rA' = \rB$.
			\item Alice and Bob simulate $\pi$ on input $(\rB',\rA')$ from the second round, assuming the first-round message is $\rM$, with Alice playing the role of Bob and Bob playing the role of Alice.
			\item Return the output of $\pi$.
		\end{enumerate}
	\end{Algorithm}

	It can be verified that $(\rA',\rB')$ is a valid instance of $\atpc_{d+1,w}$ and thus $\tau$ is indeed a $d$-round protocol for solving $\atpc_{d,w}$, with the claimed communication complexity.
	We also want to emphasize that with Alice and Bob playing the roles of each other, $\rM$ is the first-round message of $\pi$ from Bob to Alice, so it is a function of $\rA'=\rB$.
	Now, it remains to calculate the success probability of $\tau$.
	To this end, the following two technical claims show that all the random variables as sampled in $\tau$ almost perfectly follow their distribution in $\pi$.

	\begin{claim}\label{clm:mi-public}
		It holds that
		\[
			\mi{\rA_\rI,\rB_\rI}{\rM,\rB_{< \rI},\rI} \le \frac{\cc^{(1)}(\pi)}{w}.
		\]
	\end{claim}

	\begin{proof}
		Observe that
		\begin{align*}
			\mi{\rA_\rI,\rB_\rI}{\rM,\rB_{< \rI},\rI}
			& = \mi{\rA_\rI,\rB_\rI}{\rB_{< \rI},\rI} + \mi{\rA_\rI,\rB_\rI}{\rM \mid \rB_{< \rI},\rI} \tag{by chain rule of mutual information (\itfacts{chain-rule})}\\
			& = \mi{\rA_\rI,\rB_\rI}{\rM \mid \rB_{< \rI},\rI} \tag{as $\rA_\rI,\rB_\rI \perp \rB_{< \rI},\rI$}\\
			& = \mi{\rB_\rI}{\rM \mid \rB_{< \rI},\rI} + \mi{\rA_\rI}{\rM \mid \rB_{\le \rI},\rI} \tag{by chain rule of mutual information (\itfacts{chain-rule})}\\
			& \le \mi{\rB_\rI}{\rM \mid \rB_{< \rI},\rI} + \mi{\rA_\rI}{\rB_{> \rI} \mid \rB_{\le \rI},\rI} \tag{by data processing inequality (\itfacts{data-processing}) as $\rM$ is a function of $\rB_{> \rI},\rB_{\le \rI}$}\\
			& = \mi{\rB_\rI}{\rM \mid \rB_{< \rI},\rI} \tag{as $\rA_\rI \perp \rB_{> \rI} \mid \rB_{\le \rI},\rI$}\\
			& = \frac{1}{w} \cdot \sum_{i \in [w]} \mi{\rB_i}{\rM \mid \rB_{< i},\rI=i} \tag{as $\rI$ is uniform}\\
			& = \frac{1}{w} \cdot \sum_{i \in [w]} \mi{\rB_i}{\rM \mid \rB_{< i}} \tag{as $\rM,\rB_{\le i} \perp \rI=i$}\\
			& = \frac{\mi{\rB}{\rM}}{w} \tag{by chain rule of mutual information (\itfacts{chain-rule})}\\
			& \le \frac{\cc^{(1)}(\pi)}{w},
		\end{align*}
		as claimed.
	\end{proof}

	\begin{claim}\label{clm:mi-private}
		It holds that
		\[
			\rA_{< \rI} \perp \rB_\rI \mid \rM,\rA_\rI,\rB_{< \rI},\rI,
		\]
		and
		\[
			\rB_{> \rI} \perp \rA_{\le \rI} \mid \rM,\rB_{\le \rI},\rI.
		\]
	\end{claim}

	\begin{proof}
		Observe that
		\begin{align*}
			\mi{\rA_{< \rI}}{\rB_\rI \mid \rM,\rA_\rI,\rB_{< \rI},\rI}
			& \le \mi{\rA_{< \rI}}{\rM,\rB_\rI \mid \rA_\rI,\rB_{< \rI},\rI}\\
			& \le \mi{\rA_{< \rI}}{\rB_{\ge \rI} \mid \rA_\rI,\rB_{< \rI},\rI} \tag{by data processing inequality (\itfacts{data-processing}) as $\rM$ is a function of $\rB_{\ge \rI},\rB_{< \rI}$}\\
			& = 0, \tag{as $\rA_{< \rI} \perp \rB_{\ge \rI} \mid \rA_\rI,\rB_{< \rI},\rI$}
		\end{align*}
		and that
		\begin{align*}
			\mi{\rB_{> \rI}}{\rA_{\le \rI} \mid \rM,\rB_{\le \rI},\rI}
			& \le \mi{\rM,\rB_{> \rI}}{\rA_{\le \rI} \mid \rB_{\le \rI},\rI}\\
			& \le \mi{\rB_{> \rI}}{\rA_{\le \rI} \mid \rB_{\le \rI},\rI} \tag{by data processing inequality (\itfacts{data-processing}) as $\rM$ is a function of $\rB_{> \rI},\rB_{\le \rI}$}\\
			& = 0. \tag{as $\rB_{> \rI} \perp \rA_{\le \rI} \mid \rB_{\le \rI},\rI$}
		\end{align*}
		This concludes the proof.
	\end{proof}

	Note that $\tau$ succeeds on $(\rA_\rI,\rB_\rI)$ so long as $\pi$ does on $(\rA',\rB')$.
	Intuitively, $\tau$ samples $(\rA',\rB')$ closely following $\cD_{d+1,w}$ so its success probability should also be close to the success probability of $\pi$.
	This can be formally argued as follows.
	By chain rule of total variation distance (\Cref{fact:tvd-chain-rule}), together with~\Cref{clm:mi-public,clm:mi-private}, we can get that the distribution sampled by $\tau$ of all random variables ($\rM,\rA_{\le \rI},\rB,\rI$) and their distribution in $\pi$ have a total variation distribution upper bounded by
	\begin{align*}
		& \Exp_{\rM,\rB_{< \rI},\rI} \tvd{\distribution{\rA_\rI,\rB_\rI \mid \rM,\rB_{< \rI,\rI}}}{\distribution{\rA_\rI,\rB_\rI}}\\
		& \hspace{1cm} \le \Exp_{\rM,\rB_{< \rI},\rI} \sqrt{\kl{\distribution{\rA_\rI,\rB_\rI \mid \rM,\rB_{< \rI,\rI}}}{\distribution{\rA_\rI,\rB_\rI}}} \tag{by Pinsker's inequality (\Cref{fact:pinskers})}\\
		& \hspace{1cm} \le \sqrt{\Exp_{\rM,\rB_{< \rI},\rI} \kl{\distribution{\rA_\rI,\rB_\rI \mid \rM,\rB_{< \rI,\rI}}}{\distribution{\rA_\rI,\rB_\rI}}} \tag{by concavity of $\sqrt{\cdot}$}\\
		& \hspace{1cm} = \sqrt{\mi{\rA_\rI,\rB_\rI}{\rM,\rB_{< \rI,\rI}}} \tag{by~\Cref{fact:kl-info}}\\
		& \hspace{1cm} \le \sqrt{\frac{\cc^{(1)}(\pi)}{w}}.
	\end{align*}
	As a result, the overall success probability of $\tau$ is less than that of $\pi$ by at most $\sqrt{\cc^{(1)}(\pi)/w}$ due to~\Cref{fact:tvd-small}, concluding the proof.
\end{proof}

Now, we are ready to prove~\Cref{lem:lb-atpc}.

\begin{proof}[Proof of~\Cref{lem:lb-atpc}]
	Fix a $d$-round protocol $\pi$ for solving $\atpc_{d,w}$ with probability $1/2+\epsilon$ over $\cD_{d,w}$ that has optimal communication.
	By applying the round elimination step (\Cref{clm:round-elim}) repeatedly for $d-1$ times, we get a one-way protocol for solving $\atpc_{1,w}$ with probability $1/2+\epsilon - \sum_{r \in [d-1]} \sqrt{\cc^{(r)}(\pi)/w}$ over $\cD_{1,w}$ and communication $\cc^{(d)}(\pi)$.
	On the other hand, \Cref{clm:base-case} implies that
	\begin{align*}
		\epsilon
		& \le \sum_{r \in [d]} \sqrt{\frac{\cc^{(r)}(\pi)}{w}}\\
		& = d \cdot \sum_{r \in [d]} \frac{1}{d} \cdot \sqrt{\frac{\cc^{(r)}(\pi)}{w}}\\
		& \le d \cdot \sqrt{\sum_{r \in [d]} \frac{1}{d} \cdot \frac{\cc^{(r)}(\pi)}{w}} \tag{by concavity of $\sqrt{\cdot}$}\\
		& = \sqrt{\frac{d}{w} \cdot \cc(\pi)}.
	\end{align*}
	The lower bound is derived by rearranging the terms.
\end{proof}

\subsection{Lower Bound for $\hintatpc^{\# k}_{d,w}$}\label{sec:proof-lb-hint-atpc-maj}

We generalize the lower bound for $\atpc_{d,w}$ to its $k$-fold verisons in this section.
The hardness of $\hintatpc^{\# k}_{d,w}$ will be proved by a series of reductions from $\atpc^{\oplus k}_{d,w}$.
Indeed, \Cref{clm:atpc-xor-hint-atpc-xor} deals with the hint, namely the sum of targets, while \Cref{clm:hint-atpc-xor-hint-atpc-maj} relates the XOR and the majority versions of the problem.
The hardness of $\atpc^{\oplus k}_{d,w}$ will in turn be derived from the hardness of $\atpc_{d,w}$ using the XOR lemma of~\cite{Yu22} (\Cref{lem:xor-yu}).

\begin{claim}\label{clm:atpc-xor-hint-atpc-xor}
	For $k,d,w \ge 1$ and $\epsilon \in [0,1/2]$, it holds that
	\[
		\bD^{(d)}_{\cD^k_{d,w},\frac{1}{2}+\frac{\epsilon}{kw^d}}(\atpc^{\oplus k}_{d,w}) \le \bD^{(d)}_{\cD^k_{d,w},\frac{1}{2}+\epsilon}(\hintatpc^{\oplus k}_{d,w}) + kd\log w.
	\]
\end{claim}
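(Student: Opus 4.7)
The plan is a guess-and-verify reduction. Given a $d$-round protocol $\pi$ for $\hintatpc^{\oplus k}_{d,w}$ with success probability $\tfrac12+\epsilon$, I would build a $d$-round protocol $\pi'$ for $\atpc^{\oplus k}_{d,w}$ as follows: Alice and Bob first use public randomness to sample a guess $T'\in[kw^d]$ uniformly; they then run $\pi$ with $T'$ substituted for the true hint, piggybacking on $\pi$'s messages enough pointer information that by the end of round $d$ both parties can locally reconstruct the true $T=\sum_{j=1}^k t_j$; finally, if $T=T'$ they output $\pi$'s answer, and otherwise they output a fresh public random bit.

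For the success analysis, a routine induction on $d$ shows that under $\cD_{d,w}$ each target $t$ is uniform on $[w^d]$, so under $\cD^k_{d,w}$ the true hint $T$ takes values in $[kw^d]$, and since $T'$ is sampled independently and uniformly on $[kw^d]$ one has $\Pr[T'=T]=1/(kw^d)$. Moreover, conditioning on the event $T'=T$ does not change the distribution of the inputs (this is immediate from Bayes' rule and the independence of $T'$). Hence, conditional on $T'=T$, $\pi$ sees the correct hint and returns the correct XOR with probability $\tfrac12+\epsilon$, while conditional on $T'\neq T$, $\pi'$ emits an independent uniform bit and is correct with probability $\tfrac12$. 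Combining,
\[
\Pr[\pi'\text{ correct}]\ge \tfrac{1}{kw^d}\cdot\bigl(\tfrac12+\epsilon\bigr)+\bigl(1-\tfrac{1}{kw^d}\bigr)\cdot\tfrac12=\tfrac12+\tfrac{\epsilon}{kw^d},
\]
exactly as required. For the communication, each $t_j$ is determined by the $d$ root-to-leaf pointers of the $j$-th instance, of $\log w$ bits each, so revealing all of them costs $k d\log w$ bits in total; these are scheduled one level per round, with the round's speaker appending $k\log w$ bits for the pointers they own at the current level (using the pointers revealed in earlier rounds to identify the relevant subtree for each instance). After round $d$ both parties have reconstructed every $t_j$ and can compute and compare $T$ against $T'$ locally, so no additional communication is needed for the verification step.

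The main obstacle is aligning the pointer-revelation schedule with the alternation of $\pi$: since the level-$d$ (root) pointer is owned by Bob, the level-$(d-1)$ pointer by Alice, and so on, the revelation must proceed top-down as $B,A,B,A,\dots$. To reconcile this with $\pi$'s own alternation I would appeal to the convention that, in $d$-round protocols for an ATPC-type problem, one may assume without loss of generality that the party holding the root pointer speaks first (this is the natural alternation dictated by the recursive structure of ATPC and is consistent with the round-elimination argument of~\Cref{clm:round-elim}). With this convention the piggyback thread interleaves cleanly with $\pi$'s messages inside the same $d$ rounds, and the total communication of $\pi'$ is $\cc(\pi)+kd\log w$, completing the construction.
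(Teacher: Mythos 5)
Your reduction is the same guess-and-verify construction the paper uses, and the probability accounting (uniform $T'\in[kw^d]$, independence of $T'$ from the inputs, overall success $\tfrac{1}{2}+\tfrac{\epsilon}{kw^d}$) is correct. The gap is in how you resolve the scheduling obstacle. You cannot assume ``without loss of generality'' that the party holding the root pointer speaks first: for pointer chasing, the speaking order is exactly what makes the problem hard. If Bob (who owns $i^{(d)}$) were allowed to speak first, then $\atpc_{d,w}$ itself would be solvable deterministically in $d$ rounds with only $d\log w$ bits of communication --- Bob sends $i^{(d)}$, Alice replies with the relevant $i^{(d-1)}$, and so on down the tree --- contradicting \Cref{lem:lb-atpc}. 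Since \Cref{clm:atpc-xor-hint-atpc-xor} is ultimately combined with that lower bound (through the XOR lemma), your reduction must output a protocol in the same Alice-first model in which the lower bound is proved; a Bob-first protocol establishes nothing about $\bD^{(d)}_{\cD^k_{d,w},\cdot}(\atpc^{\oplus k}_{d,w})$ as defined. Nor is the proposed convention consistent with \Cref{clm:round-elim}: there, too, the first speaker is the party who does \emph{not} hold the root pointer (Alice of $\atpc_{d+1,w}$ holds $b^{(d)}_{\le w}$, not $i^{(d+1)}$).

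The fix is simple and is what the paper does: keep Alice speaking first, attach no pointers to her round-$1$ message, and for $r\in[2,d]$ let the sender of round $r$ append the $k$ correct pointers one level further down the trees (Bob reveals the $k$ root pointers in round $2$, Alice the relevant level-$(d-1)$ pointers in round $3$, and so on). After round $d$, the receiver of the last message has the top $d-1$ levels of each root-to-leaf path and \emph{owns} the leaf-level pointers, so it can identify the relevant ones itself, reconstruct every target $t_j$, compute $T$, and compare it to $T'$ locally --- the bottom level never needs to be transmitted, and only the outputting party (not both parties) needs to know $T$. This costs $(d-1)k\log w\le kd\log w$ extra bits and stays within $d$ rounds.
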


\begin{proof}
	Fix a $d$-round protocol $\pi$ for solving $\hintatpc^{\oplus k}_{d,w}$ with probability $1/2+\epsilon$ over $\cD^k_{d,w}$ that has optimal communication.
	In the following, we construct a protocol $\tau$ for solving $\atpc^{\oplus k}_{d,w}$ with the claimed properties.
	On input $(A,B)$, $\tau$ starts by publicly guessing $T'$ from $[kw^d]$ uniformly at random and simulating $\pi$ on input $((A,T'),(B,T'))$.
	Then, Alice sends the first message of $\pi$ to Bob.
	For $r \in [2,d]$, the sender of the $r$-th round will transmit the $r$-th message of $\pi$, which can be simulated using the current transcript, as well as the correct depth-$(r-1)$ pointers for each of the $k$ instances of $\atpc_{d,w}$.
	So the receiver of the $r$-th round will know what the correct depth-$r$ pointers are.
	At the end of the protocol, the receiver of the last message has full knowledge required to compute the sum $T$ of all $k$ targets and can verify whether $T'=T$.
	If so, return the output of $\pi$, and otherwise a uniformly random bit.

	Note that apart from messages of $\pi$, $\tau$ transmits $k$ pointers, each of which costs $\log w$ bits, in each of the $d$ rounds, so the additional communication is $kd\log w$ in total.
	To see its success probability, observe that $T'$ is a uniformly random guess independent of $(A,B)$.
	Thus, it correctly hits $T$ with probability $1/(kw^d)$ and $\tau$ succeeds with probability $1/2+\epsilon$, the same as $\pi$, conditioned on this event.
	Otherwise, $\tau$ succeeds with probability $1/2$ as a uniformly random bit is output.
	The overall success probability then follows as claimed.
\end{proof}

\begin{claim}\label{clm:hint-atpc-xor-hint-atpc-maj}
	For $k,d,w \ge 1$ and $\epsilon \in [0,1/2]$, it holds that
	\[
		\bD^{(d)}_{\cD^k_{d,w},\frac{1}{2}-\epsilon+\Theta(\frac{1}{\sqrt{k}})}(\hintatpc^{\oplus k}_{d,w}) \le \bD^{(d)}_{\cD^k_{d,w},1-\epsilon}(\hintatpc^{\# k}_{d,w}).
	\]
\end{claim}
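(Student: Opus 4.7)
The plan is to reduce solving the XOR version to solving the majority version by post-processing the single-bit output of a majority protocol with a fixed deterministic function. Concretely, given a $d$-round protocol $\pi'$ solving $\hintatpc^{\# k}_{d,w}$ with probability $1-\epsilon$ over $\cD^k_{d,w}$, I will run $\pi'$ on the same input $((A,T),(B,T))$ and output $\lfloor k/2 \rfloor \bmod 2$ if $\pi'$ returns $0$, and $(\lfloor k/2 \rfloor + 1) \bmod 2$ if $\pi'$ returns $1$. No additional communication or rounds are needed, and the hint $T$ is passed through unchanged. So the only work is to show the resulting protocol $\tau$ solves $\hintatpc^{\oplus k}_{d,w}$ with probability $1/2 - \epsilon + \Theta(1/\sqrt{k})$ over $\cD^k_{d,w}$.

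To analyze correctness, first observe that under $\cD^k_{d,w}$ the $k$ targets $z_1, \ldots, z_k$ of the subinstances are i.i.d.\ uniform bits (this is immediate from $\cD_{d,w}$ by induction on $d$, since each recursive level samples $A^{(1)}$ and $i^{(1)}$ independently and uniformly). Let $J = z_1 + \cdots + z_k$, so $J \sim \text{Bin}(k, 1/2)$; write $\text{Maj}^*(z)$ for the true majority ($1$ iff $J > k/2$) and $\text{XOR}(z) = J \bmod 2$. Define $f(0) = \lfloor k/2 \rfloor \bmod 2$, $f(1) = (\lfloor k/2 \rfloor+1) \bmod 2$, and let $A$ be the event $\{f(\text{Maj}^*(z)) = \text{XOR}(z)\}$. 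The technical core is the combinatorial claim
\[
\Pr[A] \;=\; \tfrac{1}{2} \,+\, \Theta(1/\sqrt{k}).
\]
This is shown by pairing the outcome $j$ with $k-j$: for $j \neq k/2$ the pair contributes zero to $\Pr[A] - 1/2$ because the two summands $c_j$ and $c_{k-j}$ (with $c_j = \binom{k}{j}/2^k$) cancel, while the single unpaired central term contributes $\binom{k}{\lfloor k/2 \rfloor}/2^{k+1} = \Theta(1/\sqrt{k})$ by Stirling. (Handling even and odd $k$ separately gives the same $\Theta(1/\sqrt{k})$ bound; in the odd case the pairing still cancels, and the leftover comes from the two central coefficients $\binom{k}{(k-1)/2}$ and $\binom{k}{(k+1)/2}$.)

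Finally, let $B$ denote the event that $\pi'$ outputs the correct majority, so $\Pr[B] \ge 1 - \epsilon$. The output of $\tau$ equals $\text{XOR}(z)$ precisely when either both $A$ and $B$ hold (majority right and induced XOR agrees) or both fail (majority wrong so we flip $f$, which agrees exactly when the true $f(\text{Maj}^*)$ would have disagreed). Hence
\[
\Pr[\tau \text{ correct}] \;=\; \Pr[A \cap B] + \Pr[A^c \cap B^c] \;\ge\; \Pr[A] - \Pr[B^c] \;\ge\; \tfrac{1}{2} - \epsilon + \Theta(1/\sqrt{k}),
\]
where the first inequality uses $\Pr[A \cap B] \ge \Pr[A] - \Pr[B^c]$ and $\Pr[A^c \cap B^c] \ge 0$. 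This is the bound in the statement. The only subtle part is the combinatorial step; the robustness argument is a two-line union-type bound and the reduction itself is essentially free in communication.
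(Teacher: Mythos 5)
Your reduction and robustness bound are exactly the paper's: run the majority protocol on the same input and apply the fixed map $0 \mapsto \lfloor k/2\rfloor \bmod 2$, $1 \mapsto (\lfloor k/2\rfloor{+}1) \bmod 2$, then pay at most $\epsilon$ for protocol error. Your framing of the error term via $\Pr[A\cap B]+\Pr[A^c\cap B^c]$ is a slightly more explicit version of the paper's ``by a union bound'' sentence, and your pairing argument for even $k$ is a pleasant elementary substitute for the paper's alternating-binomial identity $\sum_{j=0}^{m}(-1)^{m-j}\binom{k}{j}=\binom{k-1}{m}$: for even $k$ the conditions at $j$ and $k-j$ are indeed opposite, so the pairs cancel and only the central term survives, and $\binom{k}{k/2}/2^{k+1}=\binom{k-1}{\lfloor k/2\rfloor}/2^{k}$ matches the paper's expression.

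However, the parenthetical odd-$k$ claim is wrong. For odd $k$ the two members of a pair $(j,k-j)$ satisfy the \emph{same} condition, not opposite ones: the condition for $j\le(k-1)/2$ is $j\equiv(k-1)/2\pmod 2$ and the condition for $k-j\ge(k+1)/2$ is $k-j\equiv(k+1)/2\pmod 2$, which reduces to the same congruence $j\equiv(k-1)/2\pmod 2$. So each pair contributes $\pm\binom{k}{j}/2^{k}$ (with alternating sign as $j$ decreases from $(k-1)/2$), not $0$. Concretely for $k=3$ the pair $(1,2)$ contributes $+3/8$ and $(0,3)$ contributes $-1/8$, summing to $1/4$; your account would predict cancellation of both. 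The correct total is the alternating sum $\sum_{j=0}^{(k-1)/2}(-1)^{(k-1)/2-j}\binom{k}{j}/2^{k}=\binom{k-1}{\lfloor k/2\rfloor}/2^{k}=\Theta(1/\sqrt{k})$, but establishing this needs the telescoping identity (as the paper does), not a ``pairs cancel, central term remains'' picture. This is a local bug rather than a structural one --- the conclusion and the rest of the reduction stand --- but as written the odd-$k$ case of your combinatorial step does not go through.
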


\begin{proof}
	Fix a $d$-round protocol $\pi$ for solving $\hintatpc^{\# k}_{d,w}$ with probability $1-\epsilon$ over $\cD^k_{d,w}$ that has optimal communication.
	In the following, we construct a protocol $\tau$ for solving $\hintatpc^{\oplus k}_{d,w}$ with the claimed properties.
	On input $((A,T),(B,T))$, $\tau$ simulates $\pi$ on input $((A,T),(B,T))$, and outputs the parity of $\floor{k/2}+1$ if $\pi$ outputs $1$ and outputs the parity of $\floor{k/2}$ otherwise.
	So it remains to calculate the success probability of $\tau$.

	Assume for now that $\pi$ were perfectly correct (i.e., $\epsilon=0$).
	Note that the answer to an instance drawn from $\cD_{d,w}$ is simply a uniformly random bit.
	Therefore, $\tau$ succeeds with probability
	\begin{align*}
		& \sum_{j=0}^{\floor{k/2}} \mathbf{1}[\text{$\floor{k/2}-j$ is even}] \cdot \frac{\binom{k}{j}}{2^k} + \sum_{j=\floor{k/2}+1}^k \mathbf{1}[\text{$j-\floor{k/2}-1$ is even}] \cdot \frac{\binom{k}{j}}{2^k}\\
		& \hspace{1cm} = \frac{1}{2} \cdot \bracket{\sum_{j=0}^{\floor{k/2}} \frac{\binom{k}{j}}{2^k} + \sum_{j=0}^{\floor{k/2}} (-1)^{\floor{k/2}-j} \cdot \frac{\binom{k}{j}}{2^k}}\\
		& \hspace{2cm} + \frac{1}{2} \cdot \bracket{\sum_{j=\floor{k/2}+1}^k \frac{\binom{k}{j}}{2^k} + \sum_{j=\floor{k/2}+1}^k (-1)^{j-\floor{k/2}-1} \cdot \frac{\binom{k}{j}}{2^k}}\\
		& \hspace{1cm} = \frac{1}{2} \cdot \bracket{1 + \sum_{j=0}^{\floor{k/2}} (-1)^{\floor{k/2}-j} \cdot \frac{\binom{k}{j}}{2^k} + \sum_{j=\floor{k/2}+1}^k (-1)^{j-\floor{k/2}-1} \cdot \frac{\binom{k}{j}}{2^k}}\\
		& \hspace{1cm} = \frac{1}{2} \cdot \bracket{1 + \frac{\binom{k-1}{\floor{k/2}}}{2^k} + \frac{\binom{k-1}{\floor{k/2}}}{2^k}}\\
		& \hspace{1cm} = \frac{1}{2} + \frac{\binom{k-1}{\floor{k/2}}}{2^k}\\
		& \hspace{1cm} = \frac{1}{2} + \Theta\paren{\frac{1}{\sqrt{k}}},
	\end{align*}
	by Stirling's approximation.
	For $\pi$ with error probability $\epsilon$, by a union bound, the success probability of $\tau$ is reduced by at most $\epsilon$.
	The claim hence follows.
\end{proof}

We remark that the reductions in proving~\Cref{clm:atpc-xor-hint-atpc-xor,clm:hint-atpc-xor-hint-atpc-maj} actually have little to do with the base problem $\atpc_{d,w}$ itself.
In fact, almost identical reductions will also be used in~\Cref{sec:optimal} for proving the full version of our main result.

Now, we are ready to prove~\Cref{lem:lb-hint-atpc-maj} using the following XOR lemma of~\cite{Yu22}\footnote{Technically, the main result (Theorem 1) of~\cite{Yu22} is an XOR lemma for randomized communication complexity, while a distributional version is required in our proof. Nevertheless, \cite{Yu22} proves the main result via another one (Theorem 2) with asymmetirc communication, which directly works in the distributional model. The distributional version we need is a natural byproduct of the simple argument from Theorem 2 to Theorem 1; see Section 4 in the full version of~\cite{Yu22} for more details.}.

\begin{lemma}[\cite{Yu22}]\label{lem:xor-yu}
	For $k,r \ge 1$, Boolean function $f$, and input distance $\mu$, it holds that
	\[
		\bD^{(r)}_{\mu^k,\frac{1}{2}+\frac{1}{2^k}}(f^{\oplus k}) \ge k \cdot \paren{\frac{1}{r^{O(r)}} \cdot \bD^{(r)}_{\mu,\frac{2}{3}}(f) - 1}.
	\]
\end{lemma}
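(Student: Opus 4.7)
The plan is to prove this XOR lemma by following the standard three-stage recipe for strong XOR lemmas in the bounded-round distributional setting: (i) an information-theoretic direct sum showing that low advantage on $f^{\oplus k}$ forces linear-in-$k$ internal information cost, (ii) a round-preserving compression theorem converting information into communication with a multiplicative $r^{O(r)}$ loss, and (iii) a bookkeeping step that absorbs error terms into the additive $-1$.

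For Stage (i), fix any $r$-round protocol $\pi$ for $f^{\oplus k}$ over $\mu^k$ achieving success $\tfrac{1}{2}+\tfrac{1}{2^k}$. I would first argue via a standard Fourier/correlation identity that the bias $\Pr[\pi(X,Y)=f^{\oplus k}(X,Y)]-\tfrac{1}{2}$ factors coordinate-wise across the product structure of $\mu^k$, so that having $k$-fold bias $\tfrac{1}{2^k}$ forces, in an amortized sense, a bias of roughly $\tfrac{1}{2}$ per coordinate for a suitable ``induced'' predictor. To turn this into an information-cost statement, I would embed a single-copy instance $(X,Y)\sim\mu$ at a uniformly random coordinate $i\in[k]$, while the parties sample the remaining coordinates via public randomness (respecting $\mu$ on each marginal so that Alice knows all $X_j$'s and Bob knows all $Y_j$'s). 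Running $\pi$ and XORing its output with the partial XOR on the other coordinates yields a protocol for $f$ on coordinate $i$. The chain rule for mutual information $\mathbb{I}(X,Y;\Pi)=\sum_i\mathbb{I}(X_i,Y_i;\Pi\mid X_{<i},Y_{<i})$ then lets us conclude $\ic(\pi)\geq \Omega(k)\cdot \ic^{(r)}_{\mu,2/3}(f)$, once Stage (i) is combined with a standard self-correction of the per-coordinate advantage from $\tfrac{1}{2}+\Omega(1)$ down to the $\tfrac{2}{3}$ threshold.

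For Stage (ii), I would invoke an off-the-shelf round-preserving compression theorem (e.g.\ in the style of Barak--Braverman--Chen--Rao or Braverman--Rao) that converts any $r$-round protocol with internal information cost $I$ over $\mu$ into an $r$-round protocol with communication $O(I\cdot\mathrm{poly}(r)^{r})$ simulating the original up to small statistical error. This is the step responsible for the $r^{O(r)}$ factor in the statement: compression is necessarily lossy in rounds, and the round-preserving variants are known to blow up by exactly such a factor. Composing with Stage (i) yields $\cc(\pi)\geq \Omega(k/r^{O(r)})\cdot \bD^{(r)}_{\mu,2/3}(f)$. The additive $-k$ correction in the theorem statement accounts for the small simulation error of compression and the Markov-type slack when translating distributional bounds.

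The main obstacle is Stage (i): going from advantage as small as $1/2^k$ to linear-in-$k$ information cost. A naive hybrid argument loses an extra factor exponential in $k$, and the classical Braverman--Rao-style XOR lemma for information cost handles only polynomially small advantage. The crux of the Yu22 argument is a recursive decomposition of the advantage across coordinates that avoids both pitfalls: by carefully conditioning on public samples of subsets of other coordinates and arguing via a martingale-like potential, one extracts $\Omega(1)$ information per coordinate from the tiny global advantage, without the $r^{O(r)}$ loss leaking into the $k$-dependence. If I were to reconstruct this, the most delicate point would be showing that the per-coordinate ``induced'' predictors remain valid protocols in the same number of rounds and over essentially the same distribution, which is exactly where the product-distribution assumption $\mu^k$ is used critically.
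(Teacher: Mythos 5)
First, note that the paper does not prove this lemma at all: it is quoted verbatim from~\cite{Yu22} (with a footnote explaining that the distributional form follows from Theorem~2 there), so the only honest comparison is between your sketch and the actual argument of~\cite{Yu22}, and your sketch does not reconstruct it. The fatal gap is your Stage~(i). Your claim that the advantage ``factors coordinate-wise across the product structure of $\mu^k$'' is false in this two-party setting: here a single Alice holds all of $x_1,\ldots,x_k$ and a single Bob holds all of $y_1,\ldots,y_k$, so conditioning on the transcript makes $X$ and $Y$ independent of each other (rectangle property) but leaves the coordinates $(X_i,Y_i)$ arbitrarily correlated across $i$; hence $\bias\bigl(\bigoplus_i f(X_i,Y_i)\mid \text{transcript}\bigr)$ does \emph{not} split as a product of per-coordinate biases. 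That factorization is exactly what does hold in the paper's \emph{multi-party} model (Claim~\ref{clm:rectangle} and Claim~\ref{clm:advantage}, used to prove Lemma~\ref{lem:multi-xor}), because there each coordinate belongs to a separate Alice/Bob pair --- importing it into the two-party problem of Lemma~\ref{lem:xor-yu} is precisely the step that fails, and is why the paper needed to invent the multi-party lemma rather than derive its parameters from a two-party argument.

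Second, even setting the factorization aside, the passage from global advantage $2^{-k}$ to ``per-coordinate advantage $\tfrac12+\Omega(1)$ for an induced predictor, hence $\Omega(k)$ information'' is exactly the content of the theorem being proved, and your proposal defers it to an unspecified ``recursive decomposition / martingale-like potential.'' The standard embed-one-coordinate-and-publicly-sample-the-rest argument plus the chain rule (which is what Lemma~\ref{lem:xor-direct-sum} via \cite{BarakBBCR13-style} compression gives) only transfers advantage \emph{additively}: it shows hardness of $f^{\oplus k}$ at advantage $\epsilon+\delta$ from hardness of $f$ at advantage $\epsilon$, paying $1/\delta$ in the simulation, and therefore cannot reach advantage $2^{-k}$ --- the paper makes this point explicitly when explaining why amplification of communication and of advantage must be handled separately. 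So as written, your Stage~(i) is either incorrect (the factorization) or circular (assuming the per-coordinate advantage), and Stages~(ii)--(iii) are routine bookkeeping that cannot repair it; the genuinely hard inductive/round-by-round argument of~\cite{Yu22}, which is where the $r^{O(r)}$ loss actually arises, is missing.
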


\begin{proof}[Proof of~\Cref{lem:lb-hint-atpc-maj}]
	Fix a $d$-round protocol for solving $\hintatpc^{\# k}_{d,w}$ with probability $1-1/\poly(k)$ over $\cD^k_{d,w}$ that has optimal communication $C$.
	Applying~\Cref{clm:hint-atpc-xor-hint-atpc-maj} and~\Cref{clm:atpc-xor-hint-atpc-xor} in sequence, we get a $d$-round protocol for solving $\atpc^{\oplus k}_{d,w}$ with probability $1/2+\Theta(1/(k^{3/2}w^d))$ over $\cD^k_{d,w}$ that has communication $C+kd\log w$.

	On the other hand, \Cref{lem:lb-atpc}, together with \Cref{lem:xor-yu}, implies that
	\[
		\bD^{(d)}_{\cD^k_{d,w},\frac{1}{2}+\frac{1}{2^k}}(\atpc^{\oplus k}_{d,w}) = \Omega\paren{\frac{kw}{d^{O(d)}}},
	\]
	under the assumption $d = o(\log w / \log\log w)$.
	Since $1/2^k \ll 1/(k^{3/2}w^d)$ under the assumption $k = \omega(d\log w)$, we have that $C = \Omega(kw/d^{O(d)})$.
	The lemma follows by observing that error reduction from $1/3$ down to $1/\poly(k)$ requires an $O(\log k)$-fold increase in communication.
\end{proof}


\section{A Lower Bound in Optimal Number of Passes}\label{sec:optimal}

In this section, we extend~\Cref{thm:lb-few} to more passes as shown in~\Cref{thm:lb-optimal}, fully proving~\Cref{res:main}.
Also, \Cref{res:xor}, formalized in~\Cref{thm:xor}, will be a direct consequence of~\Cref{lem:xor-direct-sum,lem:multi-xor}.

\begin{theorem}[Formal version of~\Cref{res:main}]\label{thm:lb-optimal}
	For $p = o(\frac{\log n}{\log\log n})$, any $p$-pass dynamic streaming algorithm for solving $\mst_n$ with probability $2/3$ requires $\Omega(\frac{n^{1+\frac{1}{2p-1}}}{p^5 \log^3 n})$ space.
\end{theorem}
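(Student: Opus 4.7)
The plan is to follow the same blueprint as the proof of \Cref{thm:lb-few} but replace the $2$-party XOR lemma of \cite{Yu22} (\Cref{lem:xor-yu}), which loses a factor of $r^{O(r)}$ in communication, by the multi-party XOR lemma \Cref{res:xor}, whose loss in $r$ is only polynomial. Every other ingredient from \Cref{sec:few} is reused as a black box: \Cref{clm:hint-atpc-maj-mst} reducing streaming MST to $\hintatpc^{\# k}_{2p-1,w}$, \Cref{clm:hint-atpc-xor-hint-atpc-maj} reducing majority to XOR with hint at the cost of a $\Theta(1/\sqrt{k})$ additive shift in advantage, and \Cref{clm:atpc-xor-hint-atpc-xor} removing the hint by public guessing at the cost of a $1/(kw^d)$ multiplicative factor in advantage. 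These reductions are generic in the underlying multi-party protocol, so they transfer verbatim.

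Concretely, I would fix $d = 2p-1$, $k = (n-1)/2$, and $w = (n-k-1)^{1/d}$, as in \Cref{thm:lb-few}. Starting from a hypothetical $p$-pass, $S$-space algorithm for $\mst_n$, the cascade of the three reductions above produces a $d$-round, multi-party protocol for $\atpc^{\oplus k}_{d,w}$ over $\cD^k_{d,w}$ with communication $O(dS + kd\log w)$ and advantage roughly $\Theta(1/(k^{3/2} w^d))$ over $1/2$. The remaining task is to lower bound this communication. Plugging the single-instance bound from \Cref{lem:lb-atpc}, namely $\Omega(w/d)$, into the multi-party XOR lemma then yields a communication lower bound of roughly $\Omega(kw/\poly(d))$ for $\atpc^{\oplus k}_{d,w}$ at advantage $2^{-\Omega(k/d)}$. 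Matching the required advantage amounts to the constraint $2^{-\Omega(k/d)} \lesssim 1/(k^{3/2} w^d)$, i.e.\ $k = \Omega(d^2 \log w) = \Omega(p^3 \log n)$, which holds comfortably since $k = \Theta(n)$ and $p = o(\log n / \log\log n)$. After rearranging and absorbing an $O(\log n)$ factor for boosting the majority-solving success probability from $2/3$ to $1 - 1/\poly(k)$, one obtains $S = \Omega(n^{1 + 1/(2p-1)} / (p^5 \log^3 n))$, which is exactly the bound claimed.

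The main obstacle, and the content of the rest of \Cref{sec:optimal}, is to prove the multi-party XOR lemma \Cref{res:xor} with only $\poly(r)$ loss. The approach sketched in \Cref{sec:multi-xor} is to separate \emph{amplification of communication} from \emph{amplification of advantage}. For the first, one uses a round-preserving compression scheme (in the spirit of \cite{JainPY12, BravermanRWY13}) to show that the $k_1$-fold XOR requires $\tilde{\Omega}_{\epsilon}(k_1 C)$ communication to preserve an existing advantage. For the second, one adapts the streaming XOR lemma of \cite{AssadiN21} to the circular, interleaved multi-party communication model, proving that the $k_2$-fold XOR at the same total communication budget as a single instance can only achieve advantage $\exp(-\Omega(k_2))$. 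Choosing $k_1 = \Theta(k/\log k)$ and $k_2 = \Theta(\log k)$ composes these into the advertised multi-party XOR lemma. The technically delicate step will be the adaptation of \cite{AssadiN21} to the multi-party setting where Alice $i$ and Bob $i$ do not speak consecutively; one must argue that information cost decomposes properly across the circular speaking order and that the two-phase ``all Alices first, then all Bobs'' structure used in the MST reduction is compatible with the information-theoretic bounds obtained.
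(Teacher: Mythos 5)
Your high-level plan is the paper's plan---reuse the reduction chain from \Cref{sec:few} and swap the $r^{O(r)}$-loss XOR lemma of \cite{Yu22} for the multi-party XOR lemma---but the parameter accounting in the application has a genuine gap, and as written your intermediate bounds cannot all be obtained from \Cref{thm:xor}. The crucial missing choice is the split of the $k=\Theta(n)$ instances into $k_2$ groups of $k_1=k/k_2$ instances each, handled by $k_2$ \emph{pairs of parties}, with $k_2=\Theta(d\log n)$. In \Cref{thm:xor} the exponential advantage decay is in the number of party pairs (your ``$k$'' in $\epsilon_1,\epsilon_2$), while the communication amplification factor is the number of copies \emph{per pair}. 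You cannot have both ``communication lower bound $\Omega(kw/\poly(d))$'' and ``advantage $2^{-\Omega(k/d)}$'' for the same $k$: taking $k$ party pairs with one copy each gives the exponentially small advantage but only an $\Omega(w/d^2)$ communication bound, whereas taking one pair with $k$ copies gives the $\Omega(kw/\poly(d))$ bound but only constant advantage. The paper's choice $k_2=\Theta(d\log n)$ is exactly what makes $\epsilon_2=\epsilon^{\Omega(\epsilon k_2/d)}=1/\poly(n)$ small enough to beat the $\Theta(1/(k^{3/2}w^d))$ advantage coming from the hint-guessing and majority-to-XOR steps, while retaining a $k_1=k/k_2$ communication amplification. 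Your constraint ``$k=\Omega(d^2\log w)$'' is the Section~4 condition transplanted verbatim; in Section~5 the analogous condition constrains $k_2$, not $k$.

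Relatedly, your communication upper bound $O(dS+kd\log w)$ for the reduced protocol is the two-party figure. In the multi-party reduction (\Cref{clm:multi-hint-atpc-maj-mst}) each of the $k_2$ pairs must relay the streaming algorithm's memory state, so the communication is $(2p-1)k_2S$, not $(2p-1)S$. This factor matters: combining $C=k_2dS$ (times $O(\log n)$ for boosting the success probability to $1-1/\poly(n)$, which is needed before applying \Cref{clm:multi-hint-atpc-xor-hint-atpc-maj}) with the lower bound $\Omega(k_1w/d^2)=\Omega(kw/(k_2d^2))$ is precisely what produces $S=\Omega(kw/(k_2^2d^3\log n))=\Omega(n^{1+\frac{1}{2p-1}}/(p^5\log^3 n))$. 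Your stated intermediate bounds would instead yield $S=\Omega(kw/\poly(p))$ with no $\log^3 n$ term, which is not what \Cref{thm:xor} supports. Your final paragraph on proving the multi-party XOR lemma itself (separating amplification of communication via round-preserving compression from amplification of advantage \`a la \cite{AssadiN21}) matches the paper's route, though the choice $k_2=\Theta(\log k)$ quoted from the overview must likewise be replaced by $k_2=\Theta(r\log n)$ in this application.
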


\begin{theorem}[Formal version of~\Cref{res:xor}]\label{thm:xor}
	There exists $\epsilon_0 > 0$ such that for $n,r \ge 1$, $k \in [1,n]$, $\epsilon \in (0,\epsilon_0)$, Boolean function $f$, and input distribution $\mu$, it holds that
	\[
		\bD^{(r),k}_{\mu^n,\frac{1}{2}+\min(\epsilon_1,\epsilon_2)}(f^{\oplus n}) = \Omega\paren{\frac{n}{k} \cdot \paren{\frac{\epsilon}{r} \cdot \bD^{(r)}_{\mu,\frac{1}{2}+\epsilon}(f) - O(r)}},
	\]
	where $\epsilon_1 = (r\epsilon)^{\Omega(k/r)}$ and $\epsilon_2 = \epsilon^{\Omega(\epsilon k/r)}$.
\end{theorem}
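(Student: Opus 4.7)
The plan is to decompose the multi-party XOR lemma of~\Cref{thm:xor} into two independent sublemmas, exactly as the overview suggests: amplification of communication (\Cref{lem:xor-direct-sum}) and amplification of advantage (\Cref{lem:multi-xor}). The setup splits the $n$ instances of $f$ evenly across the $k$ Alice--Bob pairs, giving $n/k$ copies per pair. Amplification of communication handles each individual pair as a $2$-party subproblem, showing that solving their local $(n/k)$-fold XOR requires roughly $(n/k) \cdot \bD^{(r)}_{\mu,1/2+\epsilon}(f)/r$ communication at a constant level of advantage. Amplification of advantage then handles the XOR across the $k$ pairs, showing that under this per-pair communication budget the total advantage decays exponentially in $k/r$ (or in the $\epsilon k/r$ regime, depending on parameters). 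The ordering is forced: known round-preserving compression schemes incur a $1/\epsilon$ overhead per $\epsilon$-simulation, so communication amplification must precede advantage amplification — otherwise the compression blowup would dwarf any advantage already driven close to $1/2$.

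I would prove \Cref{lem:xor-direct-sum} by a round-preserving information-theoretic compression in the style of~\cite{JainPY12,BravermanRWY13}. Given a $2$-party $r$-round protocol $\pi$ for $f^{\oplus m}$ with advantage $\delta$ over $\mu^m$, the average per-coordinate internal information is at most $\ic(\pi)/m \le \cc(\pi)/m$; compression yields an $r$-round protocol for a single instance of $f$ with communication $\ic(\pi)/m + O(r)$ and only a $\poly(1/m)$ loss in advantage. Rearranging this against the single-copy lower bound $\bD^{(r)}_{\mu,1/2+\epsilon}(f) \ge C$ produces the desired per-pair tradeoff. For \Cref{lem:multi-xor}, I would lift the streaming XOR lemma of~\cite{AssadiN21} to the $2k$-party model; their argument already treats a streaming algorithm as a sequence of parties exchanging a blackboard, so the main information-theoretic lever — a per-round additive drop in advantage depending on the incoming communication divided by the number of copies — transcribes directly, yielding the $\exp(-\Omega(k/r))$ decay. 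The one twist is the interleaved input order (Alice $1, \ldots,$ Alice $k$ speak before any Bob does), which I would handle by regrouping the accounting so that each round's contribution is charged to the correct pair; the interleaved order in fact only simplifies the dependency structure, since a fixed prefix of the transcript remains independent of all the $\mathsf{Y}$-inputs.

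Combining the two, I would set $m = n/k$, apply \Cref{lem:xor-direct-sum} per pair to lower bound the local cost of $f^{\oplus n/k}$, and apply \Cref{lem:multi-xor} across the $k$ pairs to drive the global advantage down to $\min(\epsilon_1,\epsilon_2)$; the two advantage bounds $\epsilon_1=(r\epsilon)^{\Omega(k/r)}$ and $\epsilon_2=\epsilon^{\Omega(\epsilon k/r)}$ arise from two parameter regimes where one or the other sub-step dominates the loss. The main obstacle I foresee is the bookkeeping at the junction of the two lemmas: the per-pair communication budget passed from \Cref{lem:multi-xor} into \Cref{lem:xor-direct-sum} must be large enough that the compression overhead does not wipe out the advantage gain, and the $r$-dependence emerging from compression, round-elimination, and the $k/r$-fold decay must cancel cleanly so as to produce the clean $\epsilon/r$ factor and $-O(r)$ additive loss in the final bound. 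Once both sublemmas are in place this last calculation is mostly routine, but carefully tracking constants, logarithmic factors, and the distinction between private and public randomness through the nested composition is where the bulk of the care lies.
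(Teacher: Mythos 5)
Your proposal matches the paper's proof exactly: Theorem~\ref{thm:xor} is obtained precisely by viewing the $n$-fold XOR as the $k$-fold XOR (one per Alice--Bob pair) of the base function $f^{\oplus n/k}$, applying the multi-party advantage-amplification lemma (\Cref{lem:multi-xor}) across pairs and the round-preserving compression-based direct sum (\Cref{lem:xor-direct-sum}) within a pair, in that order, and your proposed proofs of the two sublemmas (compression via \cite{JainPY12,BravermanRWY13}; lifting the \cite{AssadiN21} blackboard argument to the $2k$-party model) are also the paper's. The only quibble is that both advantage bounds $\epsilon_1$ and $\epsilon_2$ come from the multi-party XOR lemma itself (two different event thresholds in its proof), not from a tradeoff between the two sub-steps, but this does not affect the argument.
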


As mentioned in~\Cref{sec:few}, we take a two-step approach by amplifying first communication and then error probability.
The first step uses the following XOR-direct-sum result\footnote{A similar direct-sum result also holds for $f^k$. It is however subsumed by the direct-product result of~\cite{JainPY12}.}, tight up to a factor of $r$, for bounded-round communication complexity, which is implicitly implied by~\cite{JainPY12}.

\begin{lemma}[Bounded-round XOR direct sum]\label{lem:xor-direct-sum}
	For $k,r \ge 1$, $\epsilon \in [0,1]$, $\delta \in (0,\epsilon)$, Boolean function $f$, and input distribution $\mu$, it holds that
	\[
		\bD^{(r)}_{\mu^k,\epsilon}(f^{\oplus k}) = \Omega\paren{k \cdot \paren{\frac{\delta}{r} \cdot \bD^{(r)}_{\mu,\epsilon-\delta}(f) - O(r)}}.
	\]
\end{lemma}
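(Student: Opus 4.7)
The plan is to prove the lemma by the standard two-step recipe: first use an embedding/direct-sum argument to extract from any $r$-round protocol $\pi$ for $f^{\oplus k}$ a single-coordinate protocol for $f$ with small internal information cost, and then apply a round-preserving compression scheme to convert low information cost into low communication. Fix an $r$-round protocol $\pi$ for $f^{\oplus k}$ on $\mu^k$ with success probability at least $\epsilon$ and communication $C := \bD^{(r)}_{\mu^k,\epsilon}(f^{\oplus k})$. Given a single input $(X,Y) \sim \mu$, the embedding proceeds as follows: publicly sample $J \in [k]$ uniformly; for each $i < J$, publicly sample $Y_i$ from the $\cY$-marginal of $\mu$ and have Alice privately sample $X_i \sim \mu(\cdot \mid Y_i)$; symmetrically, for each $i > J$, publicly sample $X_i$ and have Bob privately sample $Y_i$; set $(X_J,Y_J)=(X,Y)$. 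Now simulate $\pi$ on $(X_{\le k}, Y_{\le k})$. Alice can compute $f(X_i,Y_i)$ for every $i < J$ (she knows both coordinates) and Bob for every $i > J$, so a single bit appended to the last message of $\pi$ suffices to XOR out the known bits and recover $f(X,Y)$, yielding an $r$-round protocol $\pi'$ for $f$ with the same success probability $\epsilon$ on $\mu$.

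By a chain-rule argument on $\mu^k$, and because the embedding aligns the public/private randomness with the coordinate structure of $\mu^k$, the internal information cost of $\pi'$ averaged over the choice of $J$ satisfies
\[
\Exp_J\!\bracket{\ic_\mu(\pi')} \le \frac{\ic_{\mu^k}(\pi) + O(1)}{k} \le \frac{C + O(1)}{k},
\]
with the $O(1)$ accounting for the extra XOR bit. Fixing $J$ and the public randomness to a realization attaining the average yields an $r$-round protocol for $f$ with success probability at least $\epsilon$ and internal information cost $I := O(C/k) + O(1)$. The bounded-round compression theorem of Jain--Pankratov--Yao then converts this into an $r$-round protocol for $f$ with error at most $1 - \epsilon + \delta$ and communication $O(rI/\delta + r^2/\delta)$. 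By the definition of distributional complexity,
\[
\bD^{(r)}_{\mu,\epsilon-\delta}(f) = O\!\paren{\frac{rC}{\delta k} + \frac{r^2}{\delta}},
\]
and rearranging gives $C = \Omega(k \cdot (\frac{\delta}{r}\bD^{(r)}_{\mu,\epsilon-\delta}(f) - O(r)))$, as claimed.

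The main obstacle is the invocation of the compression step with the correct round- and error-dependence: I would need to verify that the bounded-round compression scheme genuinely achieves a communication bound of the form $O(rI/\delta + r^2/\delta)$, as opposed to the weaker $\sqrt{r I C}$-style bound of Barak--Braverman--Chen--Rao; this is precisely the content of the Jain--Pankratov--Yao result for protocols with bounded rounds. The embedding, the chain-rule information-cost calculation, and the one-bit XOR-extraction are otherwise routine manipulations in this literature.
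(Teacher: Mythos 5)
Your proposal is correct and follows essentially the same route as the paper: the paper invokes Theorem 5.1 of Barak--Braverman--Chen--Rao as a black box for the embedding step (which you instead spell out explicitly, with the public/private sampling split and the one-bit XOR correction), and then, exactly as you do, applies the round-preserving compression of Jain--Pankratov--Yao with its $O(r\cdot\ic/\delta + r^2/\delta)$ bound before rearranging. Your concern about needing the Jain--Pankratov--Yao bound rather than the $\sqrt{rIC}$-style compression is well placed and is precisely why the paper cites that specific lemma.
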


For the second step, we prove an XOR-lemma-type result for the multi-party model.

\begin{lemma}[Multi-party XOR lemma]\label{lem:multi-xor}
	There exists $\epsilon_0 > 0$ such that for $k,r \ge 1$, $\epsilon \in (0,\epsilon_0)$, Boolean function $f$, and input distribution $\mu$, it holds that
	\[
		\bD^{(r),k}_{\mu^k,\frac{1}{2}+\min(\epsilon_1,\epsilon_2)}(f^{\oplus k}) \ge \bD^{(r)}_{\mu,\frac{1}{2}+\epsilon}(f),
	\]
	where $\epsilon_1 = (r\epsilon)^{\Omega(k/r)}$ and $\epsilon_2 = \epsilon^{\Omega(\epsilon k/r)}$.
\end{lemma}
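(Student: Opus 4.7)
The plan is to prove the lemma by a contrapositive reduction: starting from a $2k$-party, $r$-round protocol $\Pi$ for $f^{\oplus k}$ over $\mu^k$ with communication $C$ and advantage $\min(\epsilon_1,\epsilon_2)$, we construct a two-party, $r$-round protocol $\pi$ for $f$ over $\mu$ with communication at most $C$ and advantage at least $\epsilon$. The baseline reduction is by embedding: Alice and Bob publicly sample an index $i \in [k]$ together with inputs $(X_j,Y_j)\sim\mu$ for $j\ne i$; they then set $X_i := x$ and $Y_i := y$ and simulate $\Pi$, with the two parties playing the roles of Alice~$i$ and Bob~$i$. The crucial observation is that round count is preserved: within each multi-party round, every party's message other than those of Alice~$i$ and Bob~$i$ is a deterministic function of public randomness and the current blackboard, so exactly one two-party exchange (Alice~$i$'s message, followed by Bob~$i$'s) per multi-party round suffices. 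Protocol $\pi$ outputs $\Pi$'s output XORed with $\bigoplus_{j\ne i} f(X_j,Y_j)$, which both parties can compute from public randomness.

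The heart of the proof is amplifying $\pi$'s advantage from $\delta := \min(\epsilon_1,\epsilon_2)$---which it naively inherits from $\Pi$---up to $\epsilon$. The plan is to adapt the multi-party information-theoretic machinery of~\cite{AssadiN21}, originally devised for a streaming XOR lemma, to the present communication setting. Let $T$ denote the blackboard transcript of $\Pi$, write $Z_j := f(X_j,Y_j)$, and define the per-copy conditional bias $\beta_j(T) := \Ex[(-1)^{Z_j}\mid T]$. A bias-multiplication calculation shows that the XOR advantage of $\Pi$ satisfies $\delta \lesssim \Ex[\prod_{j\in[k]} |\beta_j(T)|]$, provided the $Z_j$ are approximately conditionally independent given $T$. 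If the per-copy advantage $\eta_j := \Ex[|\beta_j(T)|]$ were at most $2\epsilon$ for every $j$, the product would be at most $(2\epsilon)^{\Omega(k)}$; taking the contrapositive, the hypothesis $\delta \ge (r\epsilon)^{\Omega(k/r)}$ forces some $\eta_j \ge 2\epsilon$, and choosing this particular $j$ for the embedding yields the desired advantage for $\pi$ via the probabilistic method.

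The main obstacle is obtaining the exponent $\Omega(k/r)$ rather than the weaker $\Omega(k/r^{O(r)})$ one would get by iterating the two-party XOR lemma of~\cite{Yu22}. The loss arises from the conditional-independence correction---the $Z_j$'s become correlated through the shared blackboard---and controlling it is where the multi-party structure is essential. Since each of the $2k$ parties speaks only once per round, a chain-rule decomposition of $I(T;X_j,Y_j\mid X_{-j},Y_{-j})$ across the $r$ rounds yields an \emph{average} per-copy information of $O(C/k)$, rather than $O(C)$. Combined with Pinsker's inequality, this bounds the correction to the bias-multiplication identity by $2^{O(C/k)}$, which is negligible in the parameter regime of interest. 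Finally, the two thresholds $\epsilon_1$ and $\epsilon_2$ correspond to two different regimes for converting the per-copy information bound into a per-copy bias bound: a direct Pinsker-style estimate yielding $\epsilon_1 = (r\epsilon)^{\Omega(k/r)}$, and a finer KL-divergence argument appropriate for the small-advantage regime (where Pinsker is lossy) yielding $\epsilon_2 = \epsilon^{\Omega(\epsilon k/r)}$.
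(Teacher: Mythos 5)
There is a genuine gap at the central step of your argument. You bound the XOR advantage by $\Exp_T\bigl[\prod_{j\in[k]}|\beta_j(T)|\bigr]$ (which is correct, and is exactly the paper's Claim about the advantage, using an \emph{exact} rectangle property of deterministic multi-party protocols -- no ``approximate conditional independence'' correction is needed), but you then argue that if every per-copy quantity $\eta_j=\Exp_T[|\beta_j(T)|]$ is at most $2\epsilon$, the product expectation is at most $(2\epsilon)^{\Omega(k)}$. This does not follow: the $|\beta_j(T)|$ are all functions of the \emph{same} transcript $T$ and can be arbitrarily correlated, so $\Exp[\prod_j|\beta_j|]$ is not controlled by $\prod_j\Exp[|\beta_j|]$. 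Concretely, if with probability $2\epsilon$ over $T$ all $k$ biases equal $1$ and otherwise all equal $0$, then every $\eta_j=2\epsilon$ yet $\Exp[\prod_j|\beta_j|]=2\epsilon$, not $(2\epsilon)^k$. Ruling out precisely this correlated scenario is the entire difficulty of the lemma, and your proposal does not address it.

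The paper's proof handles this by an event-based analysis that your outline is missing: it defines the event $\event(S,B)$ that all biases indexed by a large set $S$ are simultaneously above a threshold, and bounds $\Pr(\event)$ directly. The key mechanism is an embedding at a uniformly random coordinate of $S$ whose sampling is \emph{conditioned on $\event$}; a per-round KL-divergence accounting shows this conditioning costs only $\tfrac{r+2}{|S|}\log\tfrac{1}{\Pr(\event)}$ in total, so (via the large-divergence form of Pinsker's inequality) the resulting two-party protocol has advantage roughly $\tfrac{t}{4}\Pr(\event)^{(r+2)/|S|}$, where $t$ is the bias threshold. Since that protocol has communication $C<\bD^{(r)}_{\mu,1/2+\epsilon}(f)$, one gets $\Pr(\event)\lesssim(\epsilon/t)^{|S|/(r+2)}$, and the product expectation is then split as $\Pr(\exists\,\text{large }S\text{ with }\event(S,\rB))+t^{\,(\text{number of small biases})}$ with a union bound over $S$. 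This is also where the $r$-dependence and the two regimes $\epsilon_1,\epsilon_2$ actually come from (two choices of threshold $t\in\{\sqrt{\epsilon},\epsilon^{1/(2r)}\}$ paired with set sizes $(1-\tfrac{1}{10r})k$ and $(1-\epsilon)k$), not from a chain-rule bound on $I(T;X_j,Y_j\mid X_{-j},Y_{-j})$ as you suggest. Your opening embedding and the final ``some coordinate has large bias, so embed there'' step are fine in spirit, but without the conditioned-embedding argument bounding the probability of simultaneously large biases, the proof does not go through.
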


We remark that~\Cref{lem:multi-xor} is a weaker XOR lemma than the one of~\cite{Yu22} in the sense that the decrease in advantage is worse and it only applies to the multi-party model.
It nevertheless meets our needs as we will eventually work in the streaming model.
On the positive side, \Cref{lem:multi-xor} no longer suffers a factor of $r^{O(r)}$ in communication, which is exactly the only barrier towards $o(\log n / \log\log n)$ passes as identified in~\Cref{sec:few}.

Formal proofs of~\Cref{lem:xor-direct-sum,lem:multi-xor} are deferred to~\Cref{sec:proof}.
We now show that they indeed imply~\Cref{thm:lb-optimal}.
This is done via multi-party variants of the problems in~\Cref{sec:few}.

\begin{definition}
	For $k_1,k_2,d,w \ge 1$, $\atpc^{\oplus (k_1,k_2)}_{d,w}$ denotes the $k_1 k_2$-fold XOR version of $\atpc_{d,w}$ in the $2k_2$-party model, where each pair of Alice $i$ and Bob $i$ is given as input $k_1$ instances of $\atpc_{d,w}$, where $i \in [k_2]$.
	Similarly, $\atpc^{\# (k_1,k_2)}_{d,w}$ denotes the $k_1 k_2$-fold majority version of $\atpc_{d,w}$ in the $2k_2$-party model.

	When the hint, i.e., the total sum of targets of all $k_1 k_2$ instances, is given to each of the $2k_2$ parties as part of its input, the resulting problems are denoted by $\hintatpc^{\oplus (k_1,k_2)}_{d,w}$ and $\hintatpc^{\# (k_1,k_2)}_{d,w}$.
\end{definition}

Multi-party analogues of~\Cref{clm:atpc-xor-hint-atpc-xor,clm:hint-atpc-xor-hint-atpc-maj,clm:hint-atpc-maj-mst} are given below.
The proofs are almost identical and omitted here, as the same reductions still apply.

\begin{claim}\label{clm:multi-hint-atpc-maj-mst}
	For $k_1,k_2,p,w,S \ge 1$ and $\epsilon \in [0,1]$, if there exists a $p$-pass, $S$-space dynamic streaming algorithm for solving $\mst_{k_1 k_2 + w^{2p-1} + 1}$ with probability $\epsilon$, then there also exists a $(2p-1)$-round, $(2p-1)k_2 S$-communication protocol for solving $\hintatpc^{\# (k_1,k_2)}_{2p-1,w}$ with probability $\epsilon$ over $\cD^{k_1 k_2}_{2p-1,w}$.
\end{claim}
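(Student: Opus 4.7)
The plan is to adapt the $2$-party reduction of \Cref{clm:hint-atpc-maj-mst} to the $2k_2$-party model, keeping the graph construction intact and only distributing the stream across parties. Fix a $p$-pass, $S$-space dynamic streaming algorithm for $\mst_{k_1 k_2 + w^{2p-1} + 1}$, let $d = 2p-1$, and on input consisting of $k_2$ blocks, where block $i$ gives Alice $i$ the inputs $a^{(d)}_{j,i}$ and Bob $i$ the inputs $b^{(d)}_{j,i}$ for $j \in [k_1]$ (plus the shared hint $T$), construct exactly the stream described in the proof of \Cref{clm:hint-atpc-maj-mst}: the weight-$0$ clique on $[k_1 k_2 + 1, k_1 k_2 + w^d]$, the special edge $(1, k_1 k_2 + w^d + 1)$ of weight $2 k_1 k_2 w^d - 2T + 1$, and, for each of the $k_1 k_2$ instances, the insertions produced by $\sender(d,w,a^{(d)}_{j,i},0)$ and the deletions produced by $\receiver(d,w,b^{(d)}_{j,i},1)$.

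The natural responsibility assignment is that Alice $i$ emits the insertions for her $k_1$ instances, Bob $i$ emits the deletions for his $k_1$ instances, and the clique edges together with the special edge (which depend only on the public hint $T$) are folded into Alice $1$'s contribution. Because the resulting stream has the simplest form, with all insertions preceding all deletions, the circular order Alice $1, \ldots,$ Alice $k_2$, Bob $1, \ldots,$ Bob $k_2$ aligns exactly with the order in which stream segments must be fed to the algorithm. To simulate one pass, the parties pass the $S$-bit memory state sequentially along this order, each party reading the current blackboard, feeding its portion of the stream into the streaming algorithm, and posting the updated memory state. After $p$ passes, Alice $k_2$ has posted $p$ times and Bob $k_2$ has posted $p-1$ times (her final turn is replaced by returning the streaming algorithm's output), which gives exactly $2p-1$ rounds by the definition of the multi-party model, with total communication bounded by $(2p-1) k_2 S$.

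Correctness is inherited verbatim from \Cref{clm:sender-receiver}: the MST weight of the constructed graph equals $2 k_1 k_2 w^d + w^d + k_1 k_2 - \sum_{j,i} z_{j,i}$, where $z_{j,i}$ is the output of the corresponding $\atpc_{d,w}$ instance, so feeding the streaming algorithm the threshold $2 k_1 k_2 w^d + w^d + k_1 k_2 - \floor{k_1 k_2 / 2}$ makes its decision coincide with $\hintatpc^{\# (k_1,k_2)}_{d,w}$, and the success probability is preserved. There is no new mathematical obstacle; the only point requiring care is the round and communication bookkeeping in the multi-party model, and verifying that Alice $i$ can compute her insertions without knowing any Bob's inputs and that Bob $i$ can compute his deletions without knowing any Alice's inputs, both of which follow immediately from the definitions of $\sender$ and $\receiver$ and the structure of the inputs.
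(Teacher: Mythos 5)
Your proposal is correct and matches the paper's intended argument exactly: the paper omits this proof precisely because it is the same stream construction as \Cref{clm:hint-atpc-maj-mst}, with the insertions and deletions distributed over the $k_2$ Alice--Bob pairs and the $S$-bit memory state passed along the circular order, so that each pass costs two rounds, Bob $k_2$'s final message is replaced by the output, and the round and communication counts come out as claimed. One cosmetic slip: the clique edges in the construction of \Cref{clm:hint-atpc-maj-mst} have weight $1$, not $0$, which is exactly what produces the $w^d-1$ term in the MST-weight formula you quote.
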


\begin{claim}\label{clm:multi-atpc-xor-hint-atpc-xor}
	For $k_1,k_2,d,w \ge 1$, and $\epsilon \in [0,1/2]$, it holds that
	\[
		\bD^{(d),k_2}_{\cD^{k_1 k_2}_{d,w},\frac{1}{2}+\frac{\epsilon}{k_1 k_2 w^{d}}}(\atpc^{\oplus (k_1,k_2)}_{d,w}) \le \bD^{(d),k_2}_{\cD^{k_1 k_2}_{d,w},\frac{1}{2}+\epsilon}(\hintatpc^{\oplus (k_1,k_2)}_{d,w}) + k_1 k_2 d\log w.
	\]
\end{claim}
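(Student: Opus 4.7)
The plan is to carry out the same reduction used in the proof of Claim~\ref{clm:atpc-xor-hint-atpc-xor}, lifted to the $2k_2$-party setting with $k$ replaced by $k_1 k_2$. Start from an optimal $d$-round, $2k_2$-party protocol $\pi$ for $\hintatpc^{\oplus (k_1,k_2)}_{d,w}$ over $\cD^{k_1 k_2}_{d,w}$ with success probability $1/2 + \epsilon$. I will construct a $d$-round, $2k_2$-party protocol $\tau$ for the unhinted problem $\atpc^{\oplus (k_1,k_2)}_{d,w}$ as follows. First, the parties use public randomness to sample a guess $T' \in [k_1 k_2 w^d]$ uniformly; this range contains every possible value of the true total sum $T$ of the $k_1 k_2$ targets, since each target lies in $[w^d]$. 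Then they simulate $\pi$ on the same inputs augmented with $T'$ as the hint.

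To enable verification of the guess at the end, I interleave pointer broadcasts with $\pi$'s round messages. In each round $r \in [d]$, the party owning the depth-$r$ pointer of an instance posts that pointer to the blackboard alongside the round-$r$ message prescribed by $\pi$. Since every party has its own slot in every round of the multi-party model, this piggybacking neither increases the number of rounds nor changes the circular turn order. Each pointer is $\log w$ bits and there are $k_1 k_2$ instances, so the added communication is $k_1 k_2 \log w$ bits per round, giving $k_1 k_2 d \log w$ extra bits over all $d$ rounds. After round $d$, every pointer of every instance has been posted, so the last party can assemble all $k_1 k_2$ targets, compute $T$, and check whether $T' = T$. If yes, $\tau$ returns $\pi$'s output; otherwise it returns a uniformly random bit.

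For the success analysis, the guess $T'$ is independent of the input, so $\Pr[T' = T] \ge 1/(k_1 k_2 w^d)$. Conditioned on $T' = T$, the hint given to $\pi$ is correct, so $\pi$ succeeds with probability $1/2 + \epsilon$ and $\tau$ inherits this; conditioned on $T' \neq T$, $\tau$ outputs an unbiased bit and succeeds with probability exactly $1/2$. Combining the two cases gives
\[
\frac{1}{k_1 k_2 w^d}\paren{\frac{1}{2}+\epsilon} + \left(1 - \frac{1}{k_1 k_2 w^d}\right) \cdot \frac{1}{2} = \frac{1}{2} + \frac{\epsilon}{k_1 k_2 w^d},
\]
which matches the stated bound.

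I do not anticipate any genuine obstacle: the only item beyond the two-party argument is scheduling the pointer broadcasts within the circular multi-party round structure, and since for a fixed instance and fixed depth the pointer owner is a single predetermined party, that party simply appends the pointer to its round-$r$ posting. This is precisely the reason the excerpt notes that ``almost identical reductions still apply'' and omits the proofs of the multi-party analogues of Claims~\ref{clm:atpc-xor-hint-atpc-xor},~\ref{clm:hint-atpc-xor-hint-atpc-maj}, and~\ref{clm:hint-atpc-maj-mst}.
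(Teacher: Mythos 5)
Your proposal is correct and follows exactly the route the paper intends: it is the two-party reduction of Claim~\ref{clm:atpc-xor-hint-atpc-xor} (publicly guess the hint $T'$, piggyback the pointers onto the protocol's messages so that the last party can reconstruct all $k_1k_2$ targets and verify the guess, and output a random bit on a wrong guess) lifted to the $2k_2$-party model, which is precisely why the paper omits the proof as ``almost identical.'' The only nit is an off-by-one in your scheduling---the owner of the depth-$r$ pointer can only identify it after the depth-$(r-1)$ pointer has appeared on the blackboard, so it goes out with that party's \emph{next} posting (mirroring the paper's two-party version, where the sender of round $r$ transmits the depth-$(r-1)$ pointers)---but since a $d$-round protocol has $d+1$ posting sweeps for the $d$ pointer levels, this changes nothing in the round count or in the $k_1k_2 d\log w$ communication accounting.
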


\begin{claim}\label{clm:multi-hint-atpc-xor-hint-atpc-maj}
	For $k_1,k_2,d,w \ge 1$ and $\epsilon \in [0,1/2]$, it holds that
	\[
		\bD^{(d),k_2}_{\cD^{k_1 k_2}_{d,w},\frac{1}{2}-\epsilon+\Theta(\frac{1}{\sqrt{k_1 k_2}})}(\hintatpc^{\oplus (k_1,k_2)}_{d,w}) \le \bD^{(d),k_2}_{\cD^{k_1 k_2}_{d,w},1-\epsilon}(\hintatpc^{\# (k_1,k_2)}_{d,w}).
	\]
\end{claim}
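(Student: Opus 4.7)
The plan is to mirror the proof of Claim \ref{clm:hint-atpc-xor-hint-atpc-maj} verbatim, observing that the reduction used there is purely a black-box, post-processing transformation from a majority-solver to an XOR-solver that is oblivious to the communication model (two-party versus $2k_2$-party). In particular, I would fix a $d$-round, $2k_2$-party protocol $\pi$ for $\hintatpc^{\# (k_1,k_2)}_{d,w}$ with success probability $1-\epsilon$ over $\cD^{k_1 k_2}_{d,w}$ and optimal communication, and build a protocol $\tau$ for $\hintatpc^{\oplus (k_1,k_2)}_{d,w}$ as follows: every party feeds its input (including the shared hint $T$) into $\pi$ and runs $\pi$ unchanged; whichever party speaks last reads off $\pi$'s bit $b$ and then outputs the parity of $\lfloor k_1 k_2 / 2 \rfloor + b$.

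Since $\tau$ invokes $\pi$ once and the post-processing is performed locally by the last party, $\tau$ is a $d$-round, $2k_2$-party protocol of the same communication complexity as $\pi$. For correctness, the marginal distribution of each of the $k_1 k_2$ target bits under $\cD^{k_1 k_2}_{d,w}$ is uniform and independent, so the number of instances whose answer equals $1$ is distributed as $\mathrm{Bin}(k_1 k_2, 1/2)$. The same identity as in the two-party case then gives, assuming $\pi$ were perfect,
\[
\Pr[\tau \text{ correct}] \;=\; \frac{1}{2} + \frac{\binom{k_1 k_2 - 1}{\lfloor k_1 k_2 / 2 \rfloor}}{2^{k_1 k_2}} \;=\; \frac{1}{2} + \Theta\!\left(\frac{1}{\sqrt{k_1 k_2}}\right)
\]
by Stirling's approximation, and a union bound degrades this by at most $\epsilon$ when $\pi$ errs, yielding the stated inequality.

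The only steps that truly require checking are: (i) the hint $T$ is held identically by \emph{every} party in both the XOR and majority variants, so feeding the same inputs to $\pi$ preserves the joint distribution $\cD^{k_1 k_2}_{d,w}$ together with the hint; and (ii) the post-processing is executed by the final speaker in the circular schedule of Alice $1,\ldots,k_2$, Bob $1,\ldots,k_2$, which is permitted because the model allows the last party to return an output without posting a message. I do not anticipate any real obstacle — the multi-party structure neither helps nor hinders the argument, since the combinatorial identity driving the $\Theta(1/\sqrt{k_1 k_2})$ advantage depends only on the distribution of the $k_1 k_2$ target bits, not on who holds them.
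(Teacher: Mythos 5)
Your proposal is correct and matches the paper's approach: the paper explicitly omits this proof, stating that the reduction from Claim~\ref{clm:hint-atpc-xor-hint-atpc-maj} carries over verbatim to the multi-party setting, which is exactly what you execute (same local post-processing by the last speaker, same binomial identity giving the $\Theta(1/\sqrt{k_1 k_2})$ advantage, same union bound for the error of $\pi$).
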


We are now ready to prove~\Cref{thm:lb-optimal}.

\begin{proof}[Proof of~\Cref{thm:lb-optimal}]
	Fix a $p$-pass dynamic streaming algorithm for solving $\mst_n$ with probability $2/3$ that has space $S$.
	Let $k=(n-1)/2$, $k_2 = c d\log n$ and $k_1=k/k_2$ for some sufficiently large constant $c > 0$.
	Also let $d=2p-1$, $w=(n-k-1)^{1/d}$, and $C=k_2 dS$.
	Applying the reduction of~\Cref{clm:multi-hint-atpc-maj-mst}, we get a $d$-round protocol for solving $\hintatpc^{\# (k_1,k_2)}_{d,w}$ with probability $2/3$ over $\cD^k_{d,w}$ that has communication $C$.
	The success probability can be boosted to $1-1/\poly(n)$ by $O(\log n)$ parallel repetitions.

	On the other hand, \Cref{lem:lb-atpc}, together with~\Cref{thm:xor} for some sufficiently small constant $\epsilon > 0$, implies that
	\[
		\bD^{(d),k_2}_{\cD^k_{d,w},\frac{1}{2}+\frac{1}{\poly(n)}}(\atpc^{\oplus (k_1,k_2)}_{d,w}) = \Omega\paren{\frac{k_1 w}{d^2}}.
	\]
	Applying~\Cref{clm:multi-atpc-xor-hint-atpc-xor,clm:multi-hint-atpc-xor-hint-atpc-maj} in sequence,
	we further get
	\[
		\bD^{(d),k_2}_{\cD^k_{d,w},1-\frac{1}{\poly(n)}}(\hintatpc^{\# (k_1,k_2)}_{d,w}) = \Omega\paren{\frac{k_1 w}{d^2}}.
	\]
	Combining the above arguments, we finally have
	\[
		C\log n = \Omega\paren{\frac{k_1 w}{d^2}}.
	\]
	The theorem follows by rearranging the terms.
\end{proof}

We remark that the above proof uses the $\epsilon_2$ case in~\Cref{lem:multi-xor} with $\epsilon=\Theta(1)$.
It is also possible to prove~\Cref{thm:lb-optimal} using the $\epsilon_1$ case with $\epsilon=\Theta(1/r)$.
However, this results in slightly worse dependence on $p$ for the derived space lower bound on streaming algorithms.
Both cases of~\Cref{lem:multi-xor} are provided just in case the result may be of independent interest to some readers.

\subsection*{Acknowledgement}

Sepehr Assadi is supported in part by an Alfred P. Sloan Fellowship, a University of Waterloo startup grant, an NSF CAREER grant CCF-2047061, and a gift from Google Research.
Gillat Kol is supported by a National Science Foundation CAREER award CCF-1750443 and by a BSF grant No. 2018325.

\clearpage

\bibliographystyle{alpha}
\bibliography{general}
\clearpage

\appendix

\part*{Appendix}


\section{Basic Tools From Information Theory}\label{sec:info}

We introduce some definitions from information theory that are needed in this paper.
For a random variable $\rA$, we use $\supp{\rA}$ to denote the support of $\rA$ and $\distribution{\rA}$ to denote its distribution.
When it is clear from context, we may abuse the notation and use $\rA$ directly instead of $\distribution{\rA}$, e.g., write
$A \sim \rA$ to mean $A \sim \distribution{\rA}$, i.e., $A$ is sampled from the distribution of the random variable $\rA$.

We denote the \emph{Shannon entropy} of a random variable $\rA$ by
$\en{\rA}$, which is defined as:
\begin{align*}
\en{\rA} = \sum_{A \in \supp{\rA}} \Pr\paren{\rA = A} \cdot \log\frac{1}{\Pr\paren{\rA = A}}.
\end{align*}
The \emph{conditional entropy} of $\rA$ conditioned on $\rB$ is denoted by $\en{\rA \mid \rB}$ and defined as:
\begin{align*}
\en{\rA \mid \rB} = \Ex_{B \sim \rB} \bracket{\en{\rA \mid \rB = B}},
\end{align*}
where $\en{\rA \mid \rB = B}$ is defined in a standard way by using the distribution of $\rA$ conditioned on the event $\rB = B$ in the previous equation.
We denote the \emph{mutual information} between two random variables $\rA$ and $\rB$ is by
$\mi{\rA}{\rB}$, which is defined as:
\begin{align*}
\mi{\rA}{\rB} = \en{\rA} - \en{\rA \mid  \rB} = \en{\rB} - \en{\rB \mid  \rA}.
\end{align*}
The \emph{conditional mutual information} $\mi{\rA}{\rB \mid \rC}$ is defined to be $\en{\rA \mid \rC} - \en{\rA \mid \rB,\rC}$ and hence by linearity of expectation:
\begin{align*}
\mi{\rA}{\rB \mid \rC} = \Ex_{C \sim \rC} \bracket{\mi{\rA}{\rB \mid \rC = C}}.
\end{align*}

We refer the interested readers to the excellent textbook by Cover and Thomas~\cite{CoverT06} for an introduction to the field of information theory.

\subsection{Useful Properties of Entropy and Mutual Information}\label{sec:prop-en-mi}

We use the following basic properties of entropy and mutual information throughout.

\begin{fact}[cf.~\cite{CoverT06}]\label{fact:it-facts}
  Let $\rA$, $\rB$, $\rC$, and $\rD$ be four (possibly correlated) random variables.
   \begin{enumerate}
  \item \label{part:uniform} $0 \leq \en{\rA} \leq \log{\card{\supp{\rA}}}$. The right equality holds
    iff $\distribution{\rA}$ is uniform.
  \item \label{part:info-zero} $\mi{\rA}{\rB}[\rC] \geq 0$. The equality holds iff $\rA$ and
    $\rB$ are \emph{independent} conditioned on $\rC$.
  \item \label{part:cond-reduce} \emph{Conditioning on a random variable reduces entropy}:
    $\en{\rA \mid \rB,\rC} \leq \en{\rA \mid  \rB}$.  The equality holds iff $\rA \perp \rC \mid \rB$.
    \item \label{part:sub-additivity} \emph{Subadditivity of entropy}: $\en{\rA,\rB \mid \rC}
    \leq \en{\rA \mid C} + \en{\rB \mid  \rC}$.
   \item \label{part:ent-chain-rule} \emph{Chain rule for entropy}: $\en{\rA,\rB \mid \rC} = \en{\rA \mid \rC} + \en{\rB \mid \rC,\rA}$.
  \item \label{part:chain-rule} \emph{Chain rule for mutual information}: $\mi{\rA,\rB}{\rC \mid \rD} = \mi{\rA}{\rC \mid \rD} + \mi{\rB}{\rC \mid  \rA,\rD}$.
  \item \label{part:data-processing} \emph{Data processing inequality}: for a function $f(\rA,\rC)$, $\mi{f(\rA,\rC)}{\rB \mid \rC} \leq \mi{\rA}{\rB \mid \rC}$.
   \end{enumerate}
\end{fact}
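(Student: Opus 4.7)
These seven items are the standard building blocks of elementary information theory (cf.\ Cover--Thomas, Chapter~2), so my plan is to derive all of them from just two primitives: the chain rule for conditional probabilities, and Jensen's inequality applied to the concave function $\log$ (equivalently, non-negativity of KL divergence, sometimes called Gibbs' inequality).

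First I would establish parts~(5) and~(6), the chain rules, as these are purely syntactic. Part~(5) comes from expanding $\Pr[A,B \mid C] = \Pr[A \mid C] \cdot \Pr[B \mid A,C]$, taking $-\log$, and averaging. Part~(6) follows by plugging~(5) into the definition $\mi{\rA,\rB}{\rC \mid \rD} = \en{\rC \mid \rD} - \en{\rC \mid \rA,\rB,\rD}$ and rearranging. Next I would prove parts~(1) and~(2) using Jensen. For~(1), the lower bound is immediate from $-p \log p \geq 0$ for $p \in [0,1]$; for the upper bound, writing $\en{\rA} = \Ex_{A}\!\left[\log(1/\Pr[\rA=A])\right]$ and applying Jensen to concave $\log$ gives $\en{\rA} \leq \log \card{\supp{\rA}}$, with equality iff $\distribution{\rA}$ is uniform. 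For~(2), non-negativity of KL divergence follows from $\log x \leq x - 1$ (giving $\sum_i p_i \log(q_i/p_i) \leq \sum_i (q_i - p_i) = 0$), and conditional mutual information is simply an expectation of such KL terms.

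The remaining items are short structural corollaries. Part~(3) rewrites as $\en{\rA \mid \rB} - \en{\rA \mid \rB, \rC} = \mi{\rA}{\rC \mid \rB} \geq 0$ by~(2), with equality iff $\rA \perp \rC \mid \rB$. Part~(4) combines~(5) and~(3): $\en{\rA, \rB \mid \rC} = \en{\rA \mid \rC} + \en{\rB \mid \rA, \rC} \leq \en{\rA \mid \rC} + \en{\rB \mid \rC}$. For the data processing inequality~(7), I would apply the chain rule~(6) to decompose
\[
\mi{\rA}{\rB \mid \rC} = \mi{f(\rA,\rC)}{\rB \mid \rC} + \mi{\rA}{\rB \mid f(\rA,\rC),\rC},
\]
and drop the second term using non-negativity from~(2); this split is legitimate precisely because $f(\rA,\rC)$ is a deterministic function of $(\rA,\rC)$, so the mutual information decomposes consistently along the coarsening induced by $f$.

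Honestly, none of these steps poses a real obstacle --- all of the analytic content is concentrated in Jensen's inequality for $\log$ and the non-negativity of KL divergence, while the rest is algebraic bookkeeping with the chain rule. The only mildly delicate bit is tracking conditioning sets in~(7) to make sure one is applying~(6) in a well-defined way. Given the textbook status of all seven parts, the cleanest presentation is simply to cite~\cite{CoverT06} and omit proofs entirely, which is what the paper opts to do.
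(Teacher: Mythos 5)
Your derivation is correct and is the standard textbook argument; the paper itself gives no proof of this fact, simply citing~\cite{CoverT06}, exactly as you anticipate in your closing remark. All seven parts check out, including the slightly delicate step in~(7), where your two-way expansion of $\mi{\rA,f(\rA,\rC)}{\rB\mid\rC}$ via the chain rule (using that $\mi{f(\rA,\rC)}{\rB\mid\rA,\rC}=0$) is the right way to justify the decomposition.
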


%
%
%
%

\subsection{Measures of Distance Between Distributions}\label{sec:prob-distance}

We also use the following standard measures of distance (or divergence) between distributions.

\paragraph{KL-divergence.} For two distributions $\mu$ and $\nu$, the \emph{Kullback-Leibler divergence} between $\mu$ and $\nu$ is denoted by $\kl{\mu}{\nu}$ and defined as:
\[
	\kl{\mu}{\nu} = \Ex_{a \sim \mu}\bracket{\log\frac{\mu(a)}{\nu(a)}}.
\]
The \emph{conditional KL-divergence} $\kl{\mu(\rA \mid \rB)}{\nu(\rA \mid \rB)}$ between two conditional distributions $\mu(\rA \mid \rB)$ and $\nu(\rA \mid \rB)$ is defined to be:
\[
	\kl{\mu(\rA \mid \rB)}{\nu(\rA \mid \rB)} = \Exp_{b \sim \mu(\rB)}\bracket{\kl{\mu(\rA \mid \rB=b)}{\nu(\rA \mid \rB=b)}}.
\]

\noindent
We use the following basic properties of KL-divergence.

\begin{fact}\label{fact:kl-event}
	Suppose $\mu$ is a distribution and $\event$ is an event, then,
	\[
		\kl{\mu \mid \event}{\mu} \le \log\paren{\frac{1}{\mu(\event)}}.
	\]
\end{fact}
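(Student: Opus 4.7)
The plan is to unfold the definition of the conditional distribution $\mu \mid \event$ and simply compute $\kl{\mu \mid \event}{\mu}$ directly. By definition, $\mu \mid \event$ has support contained in $\event$, and for every $a \in \event$, $(\mu \mid \event)(a) = \mu(a)/\mu(\event)$. So for every $a$ in the support of $\mu \mid \event$, the log-density ratio reduces to
\[
\log\frac{(\mu \mid \event)(a)}{\mu(a)} \;=\; \log\frac{\mu(a)/\mu(\event)}{\mu(a)} \;=\; \log\frac{1}{\mu(\event)},
\]
which is a constant (not depending on $a$).

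Because the log-ratio is constant on the support, taking the expectation under $\mu \mid \event$ is trivial: the KL divergence equals exactly $\log(1/\mu(\event))$, giving not only the desired inequality but in fact equality. One should of course note implicitly the standard convention $0 \log 0 = 0$, so that points outside $\event$ (where $(\mu \mid \event)(a)=0$) contribute nothing to the sum and there is no issue with $\mu(a)=0$ since in that case $\mu \mid \event$ also puts zero mass there.

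There is essentially no obstacle here: the statement is a one-line consequence of the definitions once one writes the conditional density as $\mu(a)/\mu(\event)$ on $\event$. The only thing worth highlighting in the write-up is that the bound is actually tight; phrasing it as an inequality in the statement is purely for the way the bound is later invoked in the paper.
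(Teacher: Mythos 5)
Your computation is correct, and the paper in fact states this as a \emph{Fact} without proof, so there is nothing to compare against: under the reading that $\event$ is a subset of the sample space of $\mu$ (which is what the notation $\mu(\event)$ indicates), the log-density ratio is the constant $\log(1/\mu(\event))$ on the support of $\mu\mid\event$, and the KL-divergence equals $\log(1/\mu(\event))$ exactly. One small caveat on your remark that the bound is tight: if one instead lets $\event$ be an event on a larger probability space that is merely correlated with (rather than determined by) the sampled point $a$, then $(\mu\mid\event)(a)=\Pr(\event\mid a)\,\mu(a)/\Pr(\event)$, the pointwise log-ratio becomes $\log\bigl(\Pr(\event\mid a)/\Pr(\event)\bigr)\le\log(1/\Pr(\event))$, and only the inequality survives; this is presumably why the fact is stated as an inequality, even though in the paper's applications the protocols are deterministic and the events are determined by the inputs, so your equality holds there as well.
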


\noindent
The following states the relation between mutual information and KL-divergence.

\begin{fact}\label{fact:kl-info}
	For random variables $\rA,\rB,\rC$,
	\[\mi{\rA}{\rB \mid \rC} = \Ex_{(b,c) \sim {(\rB,\rC)}}\bracket{ \kl{\distribution{\rA \mid \rB=b,\rC=c}}{\distribution{\rA \mid \rC=c}}}.\]
\end{fact}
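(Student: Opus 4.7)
The plan is to derive the claimed identity by unfolding the definition of conditional mutual information via the entropy identity $\mi{\rA}{\rB \mid \rC} = \en{\rA \mid \rC} - \en{\rA \mid \rB,\rC}$ and then rewriting both sides in terms of conditional probability mass functions until they combine into the expected KL-divergence. This is a purely calculational proof, so the ``plan'' really is just a bookkeeping exercise; nevertheless I would lay it out in three explicit steps.

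First, I would expand each entropy as an expectation. By definition,
\[
\en{\rA \mid \rC} = \Ex_{c \sim \rC} \sum_{a} \Pr[\rA=a \mid \rC=c] \log \frac{1}{\Pr[\rA=a \mid \rC=c]},
\]
and likewise for $\en{\rA \mid \rB,\rC}$ with an outer expectation over $(b,c) \sim (\rB,\rC)$. The first term does not mention $b$, so I would next lift it to an expectation over $(b,c)$ by writing $\Pr[\rA=a \mid \rC=c] = \sum_b \Pr[\rB=b \mid \rC=c]\Pr[\rA=a \mid \rB=b,\rC=c]$, which after reversing the order of summation turns the expression into
\[
\Ex_{(b,c) \sim (\rB,\rC)} \sum_{a} \Pr[\rA=a \mid \rB=b,\rC=c] \log \frac{1}{\Pr[\rA=a \mid \rC=c]}.
\]
This alignment is the only non-obvious step, and it is simply the law of total probability.

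Second, with both entropies now expressed as expectations over the same $(b,c)$ and the same inner distribution $\distribution{\rA \mid \rB=b,\rC=c}$, I would subtract them inside the expectation. The two $\log$ terms merge into $\log \frac{\Pr[\rA=a \mid \rB=b,\rC=c]}{\Pr[\rA=a \mid \rC=c]}$, so the inner sum becomes exactly $\kl{\distribution{\rA \mid \rB=b,\rC=c}}{\distribution{\rA \mid \rC=c}}$ by the definition of KL-divergence given just above in the paper. This yields the desired equality.

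There is no real obstacle here; the only mild care needed is in the measure-zero cases where $\Pr[\rB=b,\rC=c]=0$, which can be handled by restricting all sums and expectations to the supports of the relevant random variables (so that all conditional probabilities are well-defined), following the standard convention $0 \log 0 = 0$. If the reader prefers, the identity can also be read off directly from the definition $\mi{\rA}{\rB\mid\rC=c} = \kl{\distribution{(\rA,\rB)\mid\rC=c}}{\distribution{\rA \mid \rC=c} \otimes \distribution{\rB \mid \rC=c}}$ followed by the chain rule for KL-divergence, but the hand computation above is self-contained and just as short.
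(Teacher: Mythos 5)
Your proof is correct. The paper states this as a standard background fact without providing a proof, and your calculation (expanding $\mi{\rA}{\rB \mid \rC} = \en{\rA \mid \rC} - \en{\rA \mid \rB,\rC}$, re-expressing the first entropy as an expectation over $(b,c)$ via the law of total probability, and merging the logarithms into a KL-divergence) is exactly the standard derivation one would write down, including the correct handling of measure-zero conditioning events.
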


\noindent
We also use the following chain rule of KL-divergence.

\begin{fact}\label{fact:kl-chain-rule}
	Suppose $\mu$ and $\nu$ are two distributions for $\rA,\rB$, then,
	\[
		\kl{\mu(\rA,\rB)}{\nu(\rA,\rB)} = \kl{\mu(\rA)}{\nu(\rA)} + \kl{\mu(\rB \mid \rA)}{\nu(\rB \mid \rA)}.
	\]
\end{fact}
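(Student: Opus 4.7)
The plan is to reduce everything to the definition of KL-divergence and apply the chain rule for probability distributions, which factorizes joint distributions as $\mu(a,b) = \mu(a) \cdot \mu(b \mid a)$ and likewise for $\nu$. Starting from
\[
\kl{\mu(\rA,\rB)}{\nu(\rA,\rB)} = \Ex_{(a,b) \sim \mu(\rA,\rB)}\!\left[\log \tfrac{\mu(a,b)}{\nu(a,b)}\right],
\]
I would substitute the factorization, use the identity $\log(xy) = \log x + \log y$ to split the logarithm, and then apply linearity of expectation to break the right-hand side into two separate terms.

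The first term, $\Ex_{(a,b) \sim \mu}[\log(\mu(a)/\nu(a))]$, has an integrand that depends only on $a$, so it simplifies by marginalizing over $b$ to $\Ex_{a \sim \mu(\rA)}[\log(\mu(a)/\nu(a))]$, which is by definition $\kl{\mu(\rA)}{\nu(\rA)}$. The second term, $\Ex_{(a,b) \sim \mu}[\log(\mu(b \mid a)/\nu(b \mid a))]$, can be rewritten via the tower property as $\Ex_{a \sim \mu(\rA)}\!\left[\Ex_{b \sim \mu(\rB \mid \rA = a)}[\log(\mu(b \mid a)/\nu(b \mid a))]\right]$; the inner expectation is precisely $\kl{\mu(\rB \mid \rA = a)}{\nu(\rB \mid \rA = a)}$, and the outer expectation then matches the definition of conditional KL-divergence $\kl{\mu(\rB \mid \rA)}{\nu(\rB \mid \rA)}$ given earlier in this section.

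This is a textbook identity and does not really have a ``hard step''; the entire argument is essentially one line of algebra once the notation is unfolded. The only mild subtlety is the standard support convention, namely the edge case where $\nu(a,b) = 0$ while $\mu(a,b) > 0$: by convention both sides are then $+\infty$, and the decomposition still holds formally because at least one of $\nu(a) = 0$ or $\nu(b \mid a) = 0$ must hold, forcing the corresponding summand on the right to diverge as well. No other case distinctions are required.
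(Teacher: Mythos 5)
Your proof is correct and is exactly the standard argument: the paper states this chain rule as a background fact (alongside the other tools from \cite{CoverT06}) without proof, and your expansion of the definition, factorization of the joint distribution, and splitting of the logarithm is precisely the derivation being implicitly invoked. The handling of the $\nu(a,b)=0$ edge case is also right, so there is nothing to add.
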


\paragraph{Total variation distance.} We denote the total variation distance between two distributions $\mu$ and $\nu$ on the same support $\Omega$ by $\tvd{\mu}{\nu}$, defined as:
\[
	\tvd{\mu}{\nu} = \max_{\Omega' \subseteq \Omega} \paren{\mu(\Omega')-\nu(\Omega')} = \frac{1}{2} \cdot \sum_{x \in \Omega} \card{\mu(x) - \nu(x)}.
\]

\noindent
We use the following basic properties of total variation distance.

\begin{fact}\label{fact:tvd-small}
	Suppose $\mu$ and $\nu$ are two distributions for a non-negative random variable $\rA$, then,
	\[
		\card{\Exp_\mu\bracket{\rA} - \Exp_\nu\bracket{\rA}} \le \tvd{\mu}{\nu} \cdot \max_{a \in \supp{\rA}} a.
	\]
\end{fact}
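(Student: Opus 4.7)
The plan is elementary and standard. Set $M := \max_{a \in \supp{\rA}} a$. First I would write out the difference of expectations as a single sum,
\[
    \Exp_\mu\bracket{\rA} - \Exp_\nu\bracket{\rA} = \sum_{a \in \supp{\rA}} a \cdot (\mu(a) - \nu(a)),
\]
and partition the support into $S^+ := \set{a \in \supp{\rA} : \mu(a) \ge \nu(a)}$ and $S^- := \set{a \in \supp{\rA} : \mu(a) < \nu(a)}$. Since $\mu$ and $\nu$ are both probability measures, the positive excess of $\mu$ over $\nu$ on $S^+$ and the positive excess of $\nu$ over $\mu$ on $S^-$ must cancel, and each equals $\tvd{\mu}{\nu}$ by the standard ``splitting'' identity for total variation distance.

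Next, I would split the displayed sum according to this partition, writing
\[
    \Exp_\mu\bracket{\rA} - \Exp_\nu\bracket{\rA} = P - N,
\]
where $P := \sum_{a \in S^+} a \cdot (\mu(a) - \nu(a))$ and $N := \sum_{a \in S^-} a \cdot (\nu(a) - \mu(a))$. Non-negativity of $\rA$ immediately gives $P, N \ge 0$, and upper bounding each $a$ appearing in these sums by $M$ together with the identities from the previous paragraph yields $P \le M \cdot \tvd{\mu}{\nu}$ and $N \le M \cdot \tvd{\mu}{\nu}$. The claim then follows from $\card{P - N} \le \max(P, N)$.

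There is no real obstacle here; the only mild pitfall is that the naive triangle-inequality bound
\[
    \card{\Exp_\mu\bracket{\rA} - \Exp_\nu\bracket{\rA}} \le \sum_{a \in \supp{\rA}} a \cdot \card{\mu(a) - \nu(a)} \le 2M \cdot \tvd{\mu}{\nu}
\]
loses a factor of $2$. Using non-negativity of $\rA$ via the signed decomposition above is precisely what restores the tight constant $M$.
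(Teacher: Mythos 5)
Your argument is correct. The paper states \Cref{fact:tvd-small} without proof, and your derivation is the canonical one: the splitting identity $\sum_{a\in S^+}(\mu(a)-\nu(a))=\sum_{a\in S^-}(\nu(a)-\mu(a))=\tvd{\mu}{\nu}$ is exactly the paper's definition of total variation distance, the bounds $0\le P,N\le M\cdot\tvd{\mu}{\nu}$ follow from non-negativity of $\rA$, and $\card{P-N}\le\max(P,N)$ closes the argument. Your remark about why the naive triangle inequality only gives $2M\cdot\tvd{\mu}{\nu}$, and how non-negativity restores the constant, is also accurate.
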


\noindent
The total variation distance between two distributions can be bounded in terms of their KL-divergence by Pinsker's inequality.

\begin{fact}[Pinsker's inequality]\label{fact:pinskers}
	For distributions $\mu$ and $\nu$,
	\[
		\tvd{\mu}{\nu} \leq \sqrt{\frac{1}{2} \cdot \kl{\mu}{\nu}}.
	\]
\end{fact}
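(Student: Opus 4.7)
The plan is to establish Pinsker's inequality by a two-step argument: first reduce the general statement to the two-point (Bernoulli) case via the data processing inequality for KL-divergence, and then verify the resulting scalar inequality $D(p\|q) \ge 2(p-q)^2$ by elementary calculus.

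For the reduction, let $A = \{x \in \Omega : \mu(x) \ge \nu(x)\}$ and set $p = \mu(A)$, $q = \nu(A)$. From the definition $\tvd{\mu}{\nu} = \max_{\Omega' \subseteq \Omega}(\mu(\Omega') - \nu(\Omega'))$, the maximizing set is precisely $A$, so $\tvd{\mu}{\nu} = p - q$. Partitioning $\Omega$ into $A$ and $\Omega \setminus A$ is a coarsening of both $\mu$ and $\nu$, and the log-sum inequality applied termwise (equivalently, data processing for KL-divergence) gives
\[
\kl{\mu}{\nu} \;\ge\; p\log\frac{p}{q} + (1-p)\log\frac{1-p}{1-q} \;=:\; D(p\|q).
\]
Thus it suffices to prove the scalar statement $D(p\|q) \ge 2(p-q)^2$ for all $p,q \in [0,1]$.

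For the scalar step, fix $p \in [0,1]$ and consider $g(q) := D(p\|q) - 2(p-q)^2$ on the open interval $(0,1)$. A direct differentiation shows $g(p) = 0$ and $g'(p) = -\frac{p}{p} + \frac{1-p}{1-p} + 0 = 0$. For the second derivative,
\[
g''(q) \;=\; \frac{p}{q^2} + \frac{1-p}{(1-q)^2} - 4.
\]
Minimizing the first two terms over $p \in [0,1]$ (a linear function of $p$ attains its minimum value $1/(q(1-q))$ at the boundary, but in fact the unconstrained minimum over $p$ happens in the interior; alternatively, Cauchy--Schwarz gives $\frac{p}{q^2} + \frac{1-p}{(1-q)^2} \ge \frac{(\sqrt{p}+\sqrt{1-p})^2}{q(1-q)} \ge \frac{1}{q(1-q)}$), and then using $q(1-q) \le 1/4$, we obtain $g''(q) \ge 4 - 4 = 0$. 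Combining $g(p) = 0$, $g'(p) = 0$, and $g''\ge 0$ by a Taylor expansion with integral remainder yields $g(q) \ge 0$, which is exactly $D(p\|q) \ge 2(p-q)^2$. Taking square roots and using $\tvd{\mu}{\nu} = p - q \ge 0$ completes the proof.

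The main (minor) obstacle is remembering to invoke the data processing step: without collapsing to a binary event, one would need a bound depending on the full distribution rather than on a single scalar, and the calculus argument would not directly apply. Once this coarsening is in place, everything reduces to the convexity of Bernoulli KL-divergence around the diagonal $p=q$, which is a straightforward second-derivative calculation.
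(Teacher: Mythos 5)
The paper states Pinsker's inequality as a standard cited fact and gives no proof, so the only question is whether your argument stands on its own. Your overall strategy---reduce to the two-point case via the partition $\set{A,\Omega\setminus A}$ with $A=\set{x:\mu(x)\ge\nu(x)}$ and the log-sum inequality, then prove the Bernoulli bound $D(p\|q)\ge 2(p-q)^2$---is the textbook route, and the reduction step is correct. The verification of the scalar inequality, however, contains a genuine error: the claim $g''(q)=\frac{p}{q^2}+\frac{1-p}{(1-q)^2}-4\ge 0$ is false. Take $p=0$ and $q=1/10$: then $\frac{p}{q^2}+\frac{1-p}{(1-q)^2}=\frac{1}{0.81}\approx 1.23<4$. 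Both justifications you offer fail: the minimum over $p$ of this linear function of $p$ is attained at an endpoint and equals $\min\paren{1/q^2,\,1/(1-q)^2}=1/\max(q,1-q)^2$, not $1/(q(1-q))$; and the asserted Cauchy--Schwarz bound $\frac{p}{q^2}+\frac{1-p}{(1-q)^2}\ge\frac{(\sqrt{p}+\sqrt{1-p})^2}{q(1-q)}$ is refuted by the same example. So $g$ is not convex near $q=p$ in general, and the Taylor-with-remainder conclusion collapses.

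The standard fix is a first-derivative (monotonicity) argument rather than a second-derivative one. Fixing $p$ and setting $g(q)=D(p\|q)-2(p-q)^2$, one computes
\[
g'(q)=-\frac{p}{q}+\frac{1-p}{1-q}+4(p-q)=(q-p)\cdot\paren{\frac{1}{q(1-q)}-4},
\]
and since $q(1-q)\le 1/4$ the second factor is nonnegative, so $g'$ has the sign of $q-p$; hence $g$ decreases for $q<p$ and increases for $q>p$, attains its minimum at $q=p$ where $g(p)=0$, and is therefore nonnegative everywhere. With that substitution your proof goes through. (One further small point: with the constant $\tfrac12$ as stated, the KL-divergence must be measured in nats, as your calculus implicitly assumes; for base-$2$ logarithms the constant becomes $\tfrac{\ln 2}{2}$.)
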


\noindent
An alternative bound for Pinsker's inequality is used for distributions with large KL-divergence.

\begin{fact}[\cite{Tsybakov09}, Equation 2.25]\label{fact:alt-pinskers}
	For distributions $\mu$ and $\nu$,
	\[
		\tvd{\mu}{\nu} \le 1 - \frac{1}{2} \cdot \exp(-\kl{\mu}{\nu}).
	\]
\end{fact}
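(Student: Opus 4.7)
The stated inequality is the weakened form of the Bretagnolle--Huber inequality, $\tvd{\mu}{\nu} \le \sqrt{1-\exp(-\kl{\mu}{\nu})}$. My plan is to first establish this stronger form via the Bhattacharyya coefficient (equivalently, the Hellinger affinity) $\mathrm{BC}(\mu,\nu) := \sum_x \sqrt{\mu(x)\nu(x)}$, and then weaken it using the elementary bound $\sqrt{1-t} \le 1 - t/2$ valid for $t \in [0,1]$, which follows immediately by squaring both sides.

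The first step is to prove the lower bound $\mathrm{BC}(\mu,\nu) \ge \exp\!\left(-\tfrac{1}{2}\kl{\mu}{\nu}\right)$. Rewriting $\mathrm{BC}(\mu,\nu) = \Ex_{x \sim \mu}\!\left[\sqrt{\nu(x)/\mu(x)}\right]$ and applying Jensen's inequality to the concave function $\log$ yields $\log \mathrm{BC}(\mu,\nu) \ge \Ex_{x \sim \mu}\!\left[\tfrac{1}{2}\log\!\left(\nu(x)/\mu(x)\right)\right] = -\tfrac{1}{2}\kl{\mu}{\nu}$, which gives the claim after exponentiating. (If $\mu$ is not absolutely continuous with respect to $\nu$ then $\kl{\mu}{\nu} = \infty$ and the bound is vacuous, so this case can be disposed of first.)

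The second step is the upper bound $\tvd{\mu}{\nu} \le \sqrt{1-\mathrm{BC}(\mu,\nu)^2}$. Here I would start from the factorization $|\mu(x)-\nu(x)| = |\sqrt{\mu(x)} - \sqrt{\nu(x)}| \cdot (\sqrt{\mu(x)} + \sqrt{\nu(x)})$ and apply Cauchy--Schwarz to the two factors, so that $2\,\tvd{\mu}{\nu} = \sum_x |\mu(x)-\nu(x)|$ is bounded by the geometric mean of $\sum_x (\sqrt{\mu(x)} - \sqrt{\nu(x)})^2 = 2 - 2\,\mathrm{BC}(\mu,\nu)$ and $\sum_x (\sqrt{\mu(x)} + \sqrt{\nu(x)})^2 = 2 + 2\,\mathrm{BC}(\mu,\nu)$. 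The product of these two sums is $4(1-\mathrm{BC}(\mu,\nu)^2)$, giving the claimed bound after dividing by $2$. Chaining the two steps produces Bretagnolle--Huber, and the final weakening lemma yields the inequality as stated.

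The main obstacle is essentially bookkeeping: verifying that the Cauchy--Schwarz step correctly produces the Bhattacharyya coefficient on both sides, and handling the absolute value of $\sqrt{\mu(x)} - \sqrt{\nu(x)}$ carefully (which disappears once squared inside the sum). No conceptual barrier arises, since the statement is a classical textbook fact, attributed in the excerpt to Tsybakov (Eq.~2.25). A conceptually distinct alternative is to apply the data-processing inequality for KL divergence to the Bernoulli variable induced by the event $A = \{\mu \ge \nu\}$ (for which $\mu(A) - \nu(A) = \tvd{\mu}{\nu}$), reducing the problem to a one-dimensional inequality between binary KL and TVD; this route requires essentially the same final elementary lemma on $e^{-D}$ and offers no real savings.
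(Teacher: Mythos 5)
Your proof is correct. Note, however, that the paper does not prove this statement at all: it is imported verbatim as Equation 2.25 of Tsybakov's textbook, so there is no in-paper argument to compare against. Your derivation is a sound self-contained substitute. Both halves check out: Jensen applied to $\log$ on $\mathrm{BC}(\mu,\nu)=\Ex_{x\sim\mu}[\sqrt{\nu(x)/\mu(x)}]$ gives $\mathrm{BC}(\mu,\nu)\ge \exp(-\tfrac{1}{2}\kl{\mu}{\nu})$, the Cauchy--Schwarz factorization gives $2\,\tvd{\mu}{\nu}\le\sqrt{(2-2\,\mathrm{BC})(2+2\,\mathrm{BC})}$, hence the Bretagnolle--Huber bound $\tvd{\mu}{\nu}\le\sqrt{1-\exp(-\kl{\mu}{\nu})}$, and the elementary inequality $\sqrt{1-t}\le 1-t/2$ for $t\in[0,1]$ finishes it; the degenerate case $\kl{\mu}{\nu}=\infty$ is correctly disposed of up front. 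For what it is worth, Tsybakov's own argument is a slight variant of yours: he writes $1-\tvd{\mu}{\nu}=\sum_x\min(\mu(x),\nu(x))$ and bounds $\mathrm{BC}(\mu,\nu)=\sum_x\sqrt{\min(\mu,\nu)\max(\mu,\nu)}\le\sqrt{\sum_x\min(\mu,\nu)\cdot\sum_x\max(\mu,\nu)}$ with $\sum_x\max(\mu,\nu)\le 2$, yielding $1-\tvd{\mu}{\nu}\ge\tfrac{1}{2}\mathrm{BC}(\mu,\nu)^2$ directly, without passing through the (strictly stronger) square-root form. Your route costs one extra weakening step but also establishes Bretagnolle--Huber along the way; either is acceptable.
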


\noindent
We also use the following chain rule of total variation distance.

\begin{fact}\label{fact:tvd-chain-rule}
	Suppose $\mu$ and $\nu$ are two distributions for $\rA,\rB$, then,
	\[
		\tvd{\mu(\rA,\rB)}{\nu(\rA,\rB)} \le \tvd{\mu(\rA)}{\nu(\rA)} + \Exp_{A \sim \mu(\rA)}\bracket{\tvd{\mu(\rB \mid \rA=A)}{\nu(\rB \mid \rA=A)}}.
	\]
\end{fact}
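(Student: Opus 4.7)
The plan is to prove this by unpacking the definition of total variation distance as an $\ell_1$ distance, factoring the joint distributions via the chain rule for probability, and then applying the triangle inequality. Concretely, for any realizations $a,b$, I would write the joint densities as $\mu(a,b) = \mu(\rA = a)\,\mu(\rB = b \mid \rA = a)$ and $\nu(a,b) = \nu(\rA=a)\,\nu(\rB = b \mid \rA = a)$, then insert the hybrid term $\mu(\rA=a)\,\nu(\rB=b\mid \rA=a)$ and apply the triangle inequality
\[
|\mu(a,b) - \nu(a,b)| \le \mu(\rA=a) \cdot |\mu(\rB=b\mid \rA=a) - \nu(\rB=b\mid \rA=a)| + \nu(\rB=b\mid \rA=a) \cdot |\mu(\rA=a) - \nu(\rA=a)|.
\]

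Next I would sum over $b$ in the support of $\rB$. The first term on the right becomes $2\mu(\rA=a)\cdot \tvd{\mu(\rB\mid \rA=a)}{\nu(\rB\mid \rA=a)}$ by the definition of total variation distance, while the second becomes $|\mu(\rA=a) - \nu(\rA=a)|$ after using the fact that $\sum_b \nu(\rB = b\mid \rA=a) = 1$. Summing now over $a$ and dividing by $2$ yields, on the left, $\tvd{\mu(\rA,\rB)}{\nu(\rA,\rB)}$, and on the right, the desired sum of $\Exp_{A \sim \mu(\rA)}\bracket{\tvd{\mu(\rB\mid \rA=A)}{\nu(\rB\mid \rA=A)}}$ and $\tvd{\mu(\rA)}{\nu(\rA)}$.

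There is no real obstacle here; the only mild subtlety is the asymmetric roles of $\mu$ and $\nu$ in the hybrid step, which is precisely what forces the expectation on the right-hand side to be taken with respect to $\mu(\rA)$ rather than $\nu(\rA)$ (either choice works by symmetry, and one could even average the two). If $\rA$ or $\rB$ is continuous, the same argument goes through with sums replaced by integrals and with the conditional distributions defined via a regular conditional probability. In the discrete setting considered in the paper, the argument is essentially three lines of manipulation of the $\ell_1$-distance definition.
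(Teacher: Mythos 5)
Your proof is correct, and it is the standard argument: the paper states this chain rule as a basic fact in the appendix without providing a proof, so there is no alternative argument in the paper to compare against. Your decomposition via the hybrid term $\mu(\rA=a)\,\nu(\rB=b\mid\rA=a)$, followed by the triangle inequality and summation over $b$ and then $a$, is exactly the calculation one would write down; the only point worth a remark is the degenerate case $\nu(\rA=a)=0$, where $\nu(\rB\mid\rA=a)$ is undefined, but there any choice of conditional distribution makes the per-term inequality hold, so the argument goes through unchanged.
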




\section{Missing Proofs in~\Cref{sec:optimal}}\label{sec:proof}

We denote the \emph{bias} of a Boolean random variable $\rA$ by $\bias(\rA)$, which is defined as
\[
	\bias(\rA) = \card{\Pr(\rA=0) - \Pr(\rA=1)}.
\]
We use the following basic property regarding the biases of independent Boolean random variables.

\begin{fact}\label{fact:bias-xor}
	For independent Boolean random variables $\rA,\rB$,
	\[
		\bias(\rA \oplus \rB) = \bias(\rA) \cdot \bias(\rB).
	\]
\end{fact}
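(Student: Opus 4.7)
The plan is to prove this fact by direct computation using the parameterization $p_A := \Pr(\rA = 1)$ and $p_B := \Pr(\rB = 1)$. Under this parameterization, the bias of a Boolean random variable $\rX$ with $\Pr(\rX = 1) = p$ is simply $|1 - 2p|$, since $\bias(\rX) = |\Pr(\rX=0) - \Pr(\rX=1)| = |(1-p) - p|$. So the goal reduces to showing an identity about $|1 - 2p_A|$, $|1 - 2p_B|$, and the analogous quantity for $\rA \oplus \rB$.

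Next, I would compute $\Pr(\rA \oplus \rB = 1)$ by exploiting independence: the XOR is $1$ exactly when one of the two variables is $1$ and the other is $0$, giving $\Pr(\rA \oplus \rB = 1) = p_A(1-p_B) + (1-p_A)p_B$. Substituting this into $|1 - 2\Pr(\rA \oplus \rB = 1)|$ and expanding, the expression factors cleanly as $(1-2p_A)(1-2p_B)$, so taking absolute values yields $\bias(\rA)\cdot \bias(\rB)$ as desired.

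There is no real obstacle here; the statement is essentially a one-line algebraic identity, and the only ingredient beyond arithmetic is the independence of $\rA$ and $\rB$, which is used to split the joint probabilities into products. An alternative presentation would be to write each variable in the $\pm 1$ basis by setting $\sigma_A := 1 - 2\rA$ and $\sigma_B := 1 - 2\rB$ (so that $\sigma_A \sigma_B$ corresponds to the XOR under the $\pm 1$ encoding); then $\bias(\rA) = |\Exp[\sigma_A]|$ and the identity becomes $|\Exp[\sigma_A \sigma_B]| = |\Exp[\sigma_A]| \cdot |\Exp[\sigma_B]|$, which is immediate from independence. I would likely include both the direct computation and this $\pm 1$ viewpoint as a one-line remark for clarity.
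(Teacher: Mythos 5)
Your proof is correct; both the direct computation (the factorization $1-2\Pr(\rA\oplus\rB=1)=(1-2p_A)(1-2p_B)$) and the $\pm 1$-encoding argument are valid, with independence used exactly where needed. The paper states this fact without proof, so there is nothing to compare against; your argument is the standard one and either of your two presentations suffices.
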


\subsection{Proof of~\Cref{lem:xor-direct-sum}}

We use (simplified versions of) Theorem 5.1 in~\cite{BarakBCR13}\footnote{Technically, Theorem 5.1 of~\cite{BarakBCR13} is for unbounded-round protocols. However, the bounded-round version also holds since the simulation protocol in their proof is round-preserving.} and Lemma 3.4 in~\cite{JainPY12}.

\begin{lemma}[\cite{BarakBCR13}, Theorem 5.1]\label{lem:bbcr}
	For $k,r \ge 1$, $\epsilon \in [0,1]$, Boolean function $f$, and input distribution $\mu$, there exists an $r$-round protocol $\pi$ for solving $f$ with probability $\epsilon$ over $\mu$ that has information cost $\ic_\mu(\pi) \le \bD^{(r)}_{\mu^k,\epsilon}(f^{\oplus k})/k+2$.
\end{lemma}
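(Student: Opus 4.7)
The plan is to prove Lemma~\ref{lem:bbcr} via the standard random-embedding argument: fix an optimal $r$-round protocol $\Pi$ for $f^{\oplus k}$ over $\mu^k$ with communication $C:=\bD^{(r)}_{\mu^k,\epsilon}(f^{\oplus k})$, and construct from it a single-copy $r$-round protocol $\pi$ for $f$ whose information cost averages over the embedding to at most $C/k+2$. On input $(x,y)\sim\mu$, the parties first publicly sample an index $j\in[k]$ uniformly together with an i.i.d.\ prefix $X_{<j}\sim\mu_X^{j-1}$ and an i.i.d.\ suffix $Y_{>j}\sim\mu_Y^{k-j}$; then Alice privately completes the suffix by drawing $X_i\sim\mu_{X\mid Y=Y_i}$ for each $i>j$ and Bob privately completes the prefix by drawing $Y_i\sim\mu_{Y\mid X=X_i}$ for each $i<j$, after which they set $(X_j,Y_j)=(x,y)$. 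A coordinate-by-coordinate check shows the induced joint distribution is exactly $\mu^k$; meanwhile Alice knows $(X_i,Y_i)$ for every $i\ge j$ and Bob knows $(X_i,Y_i)$ for every $i\le j$. The parties then run $\Pi$ on $(X,Y)$, with Alice piggybacking the single bit $A_f:=\bigoplus_{i>j}f(X_i,Y_i)$ onto her first message and Bob piggybacking $B_f:=\bigoplus_{i<j}f(X_i,Y_i)$ onto his; the last party to receive a message outputs $\mathrm{out}(\Pi)\oplus A_f\oplus B_f$, which equals $f(x,y)$ whenever $\Pi$ is correct. Since piggybacking preserves the round structure, $\pi$ is an $r$-round protocol solving $f$ with probability at least $\epsilon$.

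The next step is to bound $\ic_\mu(\pi)$. Let $R=(j,X_{<j},Y_{>j})$ denote the public randomness and $\tau$ the transcript of $\Pi$. Since $A_f$ and $B_f$ are each a single bit, together they contribute at most $2$ to the information cost, reducing the claim to showing
\[
\Ex_j\big[I(\tau;X_j\mid Y_j,R)+I(\tau;Y_j\mid X_j,R)\big]\le C/k.
\]
The key inputs are the conditional-independence facts $X_j\perp Y_{<j}\mid Y_j$ and $Y_j\perp X_{>j}\mid X_j$, which hold because $\mu^k$ is a product distribution. Combined with the fact that conditioning reduces entropy, they imply that inserting $Y_{<j}$ into the conditioning of the first term and $X_{>j}$ into that of the second weakly increases them:
\[
I(\tau;X_j\mid Y_j,X_{<j},Y_{>j})\le I(\tau;X_j\mid Y,X_{<j}),\quad I(\tau;Y_j\mid X_j,X_{<j},Y_{>j})\le I(\tau;Y_j\mid X,Y_{>j}).
\]
Summing over $j$ and applying the usual chain rule yields $\sum_j I(\tau;X_j\mid Y,X_{<j})=I(\tau;X\mid Y)$ and $\sum_j I(\tau;Y_j\mid X,Y_{>j})=I(\tau;Y\mid X)$, whose sum equals $\ic_{\mu^k}(\Pi)\le\cc(\Pi)\le C$. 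Dividing by $k$ gives the displayed bound, and adding the two piggybacked bits yields $\ic_\mu(\pi)\le C/k+2$ as claimed.

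The only genuinely delicate point---and the main obstacle to a simpler proof---is that $\mu$ is \emph{not} assumed to be a product distribution. A naive embedding in which each party privately samples its own side of the $k-1$ non-target coordinates would fail to reproduce $\mu^k$, breaking the chain-rule step above. The asymmetric public/private split is tailored so that (i) the joint distribution of $(X,Y)$ is exactly $\mu^k$, (ii) each party locally holds all pairs on the ``correct'' side of $j$, so that one bit of correction per party suffices to extract $f(x,y)$ from $\mathrm{out}(\Pi)$, and (iii) the public randomness contains precisely those pieces ($X_{<j}$ and $Y_{>j}$) that make the conditioning inequalities go in the favorable direction. Once this setup is in place, the remainder of the proof is routine chain-rule bookkeeping.
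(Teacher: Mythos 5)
Your proof is correct and is essentially the standard embedding argument of Barak--Braverman--Chen--Rao; the paper itself does not prove this lemma but simply cites \cite{BarakBCR13} (Theorem 5.1), noting in a footnote only that the simulation there is round-preserving, which your construction visibly is. The asymmetric public/private sampling, the two piggybacked XOR-correction bits accounting for the $+2$, and the chain-rule computation bounding $\Ex_j[I(\tau;X_j\mid Y_j,R)+I(\tau;Y_j\mid X_j,R)]$ by $\ic_{\mu^k}(\Pi)/k \le \cc(\Pi)/k$ are exactly the ingredients of the cited proof.
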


\begin{lemma}[\cite{JainPY12}, Lemma 3.4]\label{lem:jpy}
	For $r \ge 1$, $\epsilon \in (0,1)$, and input distribution $\mu$, any $r$-round protocol $\pi$ can be $\epsilon$-simulated (in $r$ rounds) with communication $O(r/\epsilon \cdot \ic_\mu(\pi) + r^2/\epsilon)$.
\end{lemma}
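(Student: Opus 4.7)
The plan is to compress $\pi$ round by round, preserving the round count. Let $M_i$ denote the $i$-th message of $\pi$, let $P_i$ be the sender's distribution over $M_i$ given their input and the prior transcript $M_{<i}$, and let $Q_i$ be the receiver's corresponding distribution. By the chain rule for mutual information,
\[
    \ic_\mu(\pi) = \sum_{i=1}^{r} \Exp\bracket{\kl{P_i}{Q_i}},
\]
where the expectation is over the inputs and prior transcript. This decomposition is what lets us hope to simulate each round at a per-round cost controlled by $\kl{P_i}{Q_i}$ and use chain-rule-style averaging to bound the total by $\ic_\mu(\pi)$.

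The per-round tool I would invoke is a one-shot correlated-sampling subroutine in the style of Harsha--Jain--McAllester--Radhakrishnan / Braverman--Rao: given $(P,Q)$ with $P$ known only to the sender, $Q$ known only to the receiver, and shared public randomness, the sender can transmit a sample exactly distributed as $P$ using $\kl{P}{Q} + O(\log \kl{P}{Q}) + O(1)$ bits \emph{in expectation}. Concretely, both parties publicly sample an i.i.d.\ sequence $(R_j, T_j)$ with $R_j \sim Q$ and $T_j \sim \textnormal{Unif}[0,1]$; the sender accepts the first index $j$ whose ratio $P(R_j)/Q(R_j)$ beats an adaptive dyadic threshold, and transmits a logarithmic number of auxiliary bits to name the threshold band. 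The expected number of trials until acceptance is $\Theta(2^{\kl{P}{Q}})$, which converts to expected $\kl{P}{Q} + O(\log \kl{P}{Q})$ bits for naming the accepted index. Aborting once the budget $O(\kl{P}{Q}/\delta + \log(1/\delta))$ is exhausted turns this into a worst-case subroutine with per-round TV error at most $\delta$.

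Running this subroutine in each of the $r$ rounds with per-round error $\delta := \epsilon/(4r)$, the per-round failures contribute at most $\epsilon/4$ to the total simulation error, the round count of $\pi$ is preserved, and the worst-case communication is
\[
    \sum_{i=1}^{r} O\paren{\frac{\kl{P_i}{Q_i} \cdot r}{\epsilon} + \frac{r}{\epsilon}}
    = O\paren{\frac{r}{\epsilon} \cdot \sum_{i=1}^{r} \kl{P_i}{Q_i} + \frac{r^2}{\epsilon}}.
\]
The random variable $\sum_i \kl{P_i}{Q_i}$ has expectation $\ic_\mu(\pi)$, so a final Markov step shows that it exceeds $4\ic_\mu(\pi)/\epsilon$ with probability at most $\epsilon/4$; pre-committing to abort on overflow adds another $\epsilon/4$ to the simulation error. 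The result is a round-preserving protocol with total TV error at most $\epsilon$ and worst-case communication $O(r/\epsilon \cdot \ic_\mu(\pi) + r^2/\epsilon)$, as required.

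The hard part will be the per-round subroutine itself, and in particular arranging the expected cost to be $\kl{P}{Q} + O(\log)$ rather than the naive $D_\infty(P \Vert Q)$, as well as verifying that the Markov-style truncation needed to turn expected cost into worst-case communication at rate $1/\delta$ introduces only $\delta$ TV error. The delicate estimate here is the tail bound $\Pr_{X \sim P}[\log(P(X)/Q(X)) > K]$, which is \emph{not} a direct Markov bound on $\log(P/Q)$ (which can be negative) and has to be handled by splitting on the sign of $\log(P/Q)$ and exploiting $x\log x \ge -1/e$ to control the negative part. A secondary bookkeeping concern is strict round preservation: the public randomness must be fixed before the protocol begins, and each round's index transmission must fit inside a single round of the compressed protocol; both properties are automatic for the subroutine but should be stated explicitly when composing across the $r$ rounds.
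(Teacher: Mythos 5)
This lemma is imported by the paper directly from \cite{JainPY12} without proof, so there is no in-paper argument to compare against; your round-by-round compression strategy (one-shot correlated sampling per message, truncation to convert expected cost to worst case, chain rule $\ic_\mu(\pi)=\sum_{i=1}^r\Exp\bracket{\kl{P_i}{Q_i}}$ to assemble the rounds) is indeed the standard route and the one underlying the cited result. However, your accounting does not deliver the stated bound. You apply \emph{two} Markov truncations in series: first, per round, you truncate the sampler at budget $O(\kl{P_i}{Q_i}/\delta+\log(1/\delta))$ with $\delta=\epsilon/(4r)$, a budget proportional to the \emph{realized} divergence; second, you cap the random variable $\sum_i\kl{P_i}{Q_i}$ at $4\ic_\mu(\pi)/\epsilon$ via a global Markov step. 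Composing these gives worst-case communication $O\paren{\frac{r}{\epsilon}\cdot\frac{\ic_\mu(\pi)}{\epsilon}+\frac{r^2}{\epsilon}}=O\paren{\frac{r\cdot\ic_\mu(\pi)}{\epsilon^2}+\frac{r^2}{\epsilon}}$, which is a factor $1/\epsilon$ worse on the leading term than the lemma claims. (There is also a well-definedness wrinkle: a per-round abort threshold proportional to $\kl{P_i}{Q_i}$ is not computable by either party, since $P_i$ is private to the sender and $Q_i$ to the receiver; only your final global cap rescues this, and that is exactly the step that costs the extra $1/\epsilon$.)

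The fix is to spend only one Markov step per round, against a \emph{fixed} budget. The unconditional expected cost of the round-$i$ subroutine (over inputs, prior transcript, and the sampler's internal randomness) is $I_i+O(\log(I_i+1))+O(1)$, where $I_i:=\Exp\bracket{\kl{P_i}{Q_i}}$ is the information revealed in round $i$ and is a deterministic quantity determined by $\pi$ and $\mu$. Markov then gives that round $i$ exceeds the fixed budget $\frac{4r}{\epsilon}\cdot\paren{I_i+O(1)}$ with probability at most $\epsilon/(4r)$; a union bound over rounds contributes total error $\epsilon/4$, and summing the budgets using $\sum_i I_i=\ic_\mu(\pi)$ yields $O\paren{\frac{r}{\epsilon}\cdot\ic_\mu(\pi)+\frac{r^2}{\epsilon}}$ directly, with no second truncation needed. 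With that change (and the per-round sampler analysis you already flag as the technical core), the argument goes through.
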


\begin{proof}[Proof of~\Cref{lem:xor-direct-sum}]
	By~\Cref{lem:bbcr}, there exists an $r$-round protocol $\pi$ for solving $f$ with probability $\epsilon$ over $\mu$ that has information cost $\ic_\mu(\pi) \le \bD^{(r)}_{\mu^k,\epsilon}(f^{\oplus k})/k+2$.
	Furthermore, $\pi$ can be $\delta$-simulated with communication $O(r/(\delta k) \cdot \bD^{(r)}_{\mu^k,\epsilon}(f^{\oplus k})+r^2/\delta)$ by~\Cref{lem:jpy}.
	The same bound holds for $\bD^{(r)}_{\mu,\epsilon-\delta}(f)$.
	Rearranging the terms concludes the proof.
\end{proof}

\subsection{Proof of~\Cref{lem:multi-xor}}

This section constitutes a proof of~\Cref{lem:multi-xor}.
Fix an $r$-round, $2k$-party protocol $\pi$ for solving $f^{\oplus k}$ over $\mu^k$ that has communication $C$.
Suppose for the sake of contradiction that $C < \bD^{(r)}_{\mu,1/2+\epsilon}(f)$.
In the following, we show that $\pi$ succeeds with probability less than $1/2+\min(\epsilon_1,\epsilon_2)$ over $\mu^k$, hence proving the lemma.
Without loss of generality, assume $\pi$ is deterministic.

We first introduce some random variables with respect to $\pi$ when input is drawn from $\mu^k$.
\begin{itemize}
	\item $\rX_i$: The input to Alice $i$ for $i \in [k]$;
	\item $\rY_i$: The input to Bob $i$ for $i \in [k]$;
	\item $\rM^{(j)}_i$: The $\ceil{j/2}$-th message posted by Alice $i$ if $j$ is odd, and the $\ceil{j/2}$-th message posted by Bob $i$ if $j$ is even, for $i \in [k]$ and $j \in [r+1]$;
	\item $\rB^{(j)}_i$: The blackboard after $\rM^{(j)}_i$ is posted for $i \in [k]$ and $j \in [r+1]$.
\end{itemize}
For convenience, we slightly abuse the notation and write $\rB^{(0)}_k = \perp$ and $\rB^{(j)}_0 = \rB^{(j-1)}_k$ for $j \in [r+1]$.
Let $\rB = \rB^{(r+1)}_k$.\footnote{Recall that for consistency with the standard $2$-party model, the multi-party model is defined so that the last party, who returns an output, does not post a message to the blackboard. Thus, there is an ``$(r+1)$-th round'' for an $r$-round protocol. To minimize confusion, we avoid any reference to the ``$j$-th round'' throughout.}
For blackboard $B$, $\bias(B)$ is defined to be $\bias(f^{\oplus k}(\rX,\rY) \mid \rB = B)$, and for $i \in [k]$, define $\bias_i(B)$ as $\bias(f(\rX_i,\rY_i) \mid \rB = B)$.
A couple of events are considered as well.
\begin{itemize}
	\item $\event_1(i,B)$: The event $\bias_i(B) \ge \sqrt{\epsilon}$ for $i \in [k]$ and blackboard $B$;
	\item $\event_1(S,B)$: The event $\bigwedge_{i \in S} \event_1(i,B)$ for $S \subseteq [k]$ and blackboard $B$;
	\item $\event_1(B)$: The event that there exists $S \subseteq [k]$ of size $(1-1/(10r)) \cdot k$ such that $\event_1(S,B)$ holds, for blackboard $B$;
	\item $\event_2(i,B)$: The event $\bias_i(B) \ge \epsilon^{1/(2r)}$ for $i \in [k]$ and blackboard $B$;
	\item $\event_2(S,B)$: The event $\bigwedge_{i \in S} \event_2(i,B)$ for $S \subseteq [k]$ and blackboard $B$;
	\item $\event_2(B)$: The event that there exists $S \subseteq [k]$ of size $(1-\epsilon) \cdot k$ such that $\event_2(S,B)$ holds, for blackboard $B$.
\end{itemize}

We will use the following basic properties of the protocol $\pi$.
Specifically, \Cref{clm:rectangle} proves a rectangle property and~\Cref{clm:advantage} bounds the success probability of $\pi$ in terms of biases.

\begin{claim}\label{clm:rectangle}
	For $i \in [k]$, $j \in [r+1]$, and \emph{fixed} blackboard $\rB^{(j)}_i$, it holds that
	\[
		\distribution{\rX,\rY \mid \rB^{(j)}_i} = \bigtimes_{i' \in [k]} \distribution{\rX_{i'},\rY_{i'} \mid \rB^{(j)}_i}.
	\]
\end{claim}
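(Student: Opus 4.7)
The plan is to establish the rectangle property by induction on the sequence of messages, indexed by $(j,i)$ in lexicographic order (matching the circular schedule of Alice $1,\ldots,$Alice $k$, Bob $1,\ldots,$Bob $k$). The base case $(j,i) = (0,k)$ is immediate: $\rB^{(0)}_k = \perp$ is constant, so conditioning on it is vacuous, and $\distribution{\rX,\rY} = \bigtimes_{i' \in [k]} \distribution{\rX_{i'},\rY_{i'}}$ holds by the very definition of the input distribution $\mu^k$.

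For the inductive step, let $\rB^{\mathrm{prev}}$ denote the blackboard state just before the message $\rM^{(j)}_i$ is posted, and $\rB^{\mathrm{new}}$ the state immediately after. By the multi-party model, $\rM^{(j)}_i$ is posted by exactly one party, either Alice $i$ or Bob $i$, and therefore it is a (possibly randomized) function of $\rB^{\mathrm{prev}}$ together with that party's input, which belongs entirely to the $i$-th pair. Fix any realization $B^{\mathrm{new}}$ of $\rB^{\mathrm{new}}$; this also fixes $\rB^{\mathrm{prev}} = B^{\mathrm{prev}}$ and the new message value $M$. The event $\{\rB^{\mathrm{new}} = B^{\mathrm{new}}\}$ is then equivalent to the intersection of $\{\rB^{\mathrm{prev}} = B^{\mathrm{prev}}\}$ with an event of the form $\{g(\rX_i) = M\}$ (or $\{g(\rY_i) = M\}$ in the symmetric case), where $g$ depends only on $B^{\mathrm{prev}}$ and the speaker's private randomness, the latter of which is independent of all inputs by the definition of the multi-party model.

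By the inductive hypothesis, conditioning on $\rB^{\mathrm{prev}} = B^{\mathrm{prev}}$ leaves the $k$ pairs $\{(\rX_{i'},\rY_{i'})\}_{i' \in [k]}$ mutually independent. The additional constraint $g(\rX_i) = M$ is a statement about the $i$-th pair alone, so it re-weights only the marginal of that pair and cannot introduce any correlation across different $i'$. Hence the distribution conditioned on $\rB^{\mathrm{new}} = B^{\mathrm{new}}$ is again a product across pairs, closing the induction.

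The main ``obstacle'' is essentially cosmetic: some care is needed when speakers use private randomness, since then $g$ is randomized. This is handled either by first proving the statement for deterministic protocols and averaging over the independent private coins, or by folding the speaker's coins into $g$ directly as above; in either case, no correlation across pairs is introduced, because each party's private coins are independent of all inputs and of every other party's coins. Apart from that bookkeeping, this is exactly the standard rectangle property of communication protocols lifted to the multi-party setting, with the key observation that every individual message is ``owned'' by a single pair $(\rX_{i^*}, \rY_{i^*})$ and therefore only reshapes the marginal of that pair.
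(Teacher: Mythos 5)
Your proof is correct, and its skeleton is the same as the paper's: an induction over the messages in chronological order $(j,i)$, driven by the single key observation that each message $\rM^{(j)}_i$ is a function of the current blackboard together with the input of exactly one pair. The difference is purely in the formalism of the inductive step. The paper restates the claim as $\mi{\rX_{i'},\rY_{i'}}{\rX_{-i'},\rY_{-i'} \mid \rB^{(j)}_i}=0$ for every $i'$ and proves, via the chain rule and the data processing inequality, that this conditional mutual information can only decrease as each new message is appended to the blackboard, then unrolls down to the unconditional quantity, which is $0$. You instead argue directly on the conditional distributions: fixing the new blackboard fixes the old one plus the new message, and the extra event $\{g(\rX_i)=M\}$ is measurable with respect to the $i$-th pair alone, so it re-weights one factor of a product measure and preserves the product structure. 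Both arguments are standard; yours is the more elementary and arguably more transparent one, while the paper's information-theoretic phrasing fits naturally with the KL-divergence manipulations it needs later (e.g., in \Cref{clm:small-kl-message}, the rectangle property is invoked precisely as a vanishing conditional mutual information). One small caution on your closing remark about private randomness: simply ``averaging over the independent private coins'' does not by itself recover the claim for randomized protocols, since a mixture of product distributions need not be a product distribution, and the speaker's coins may already be correlated with the inputs once one conditions on the earlier blackboard; the clean fix is to carry the coins along as part of the $i$-th pair in the induction. This is immaterial here because the paper assumes, without loss of generality for the distributional lower bound, that $\pi$ is deterministic before this claim is ever used.
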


\begin{proof}
	The claim can be equivalently stated as that for $i,i' \in [k]$ and $j \in [r+1]$, it holds that
	\[
		\mi{\rX_{i'},\rY_{i'}}{\rX_{-i'},\rY_{-i'} \mid \rB^{(j)}_i} = 0.
	\]
	To this end, fix $i,i' \in [k]$ and $j \in [r+1]$.
	Suppose $i' = i$.
	Observe that
	\begin{align*}
		\mi{\rX_{i'},\rY_{i'}}{\rX_{-i'},\rY_{-i'} \mid \rB^{(j)}_i}
		& \le \mi{\rM^{(j)}_i,\rX_i,\rY_i}{\rX_{-i},\rY_{-i} \mid \rB^{(j)}_{i-1}} \tag{as $\rB^{(j)}_i = (\rB^{(j)}_{i-1},\rM^{(j)}_i)$}\\
		& = \mi{\rX_i,\rY_i}{\rX_{-i},\rY_{-i} \mid \rB^{(j)}_{i-1}}. \tag{as $\rM^{(j)}_i$ is a function of $\rB^{(j)}_{i-1},\rX_i,\rY_i$}
	\end{align*}
	For $i' \ne i$, also observe that
	\begin{align*}
		\mi{\rX_{i'},\rY_{i'}}{\rX_{-i'},\rY_{-i'} \mid \rB^{(j)}_i}
		& \le \mi{\rX_{i'},\rY_{i'}}{\rM^{(j)}_i,\rX_{-i'},\rY_{-i'} \mid \rB^{(j)}_{i-1}} \tag{as $\rB^{(j)}_i = (\rB^{(j)}_{i-1},\rM^{(j)}_i)$}\\
		& = \mi{\rX_{i'},\rY_{i'}}{\rX_{-i'},\rY_{-i'} \mid \rB^{(j)}_{i-1}}. \tag{as $\rM^{(j)}_i$ is a function of $\rB^{(j)}_{i-1},\rX_{-i'},\rY_{-i'}$}
	\end{align*}
	In both cases, we get
	\[
		\mi{\rX_{i'},\rY_{i'}}{\rX_{-i'},\rY_{-i'} \mid \rB^{(j)}_i} \le \mi{\rX_{i'},\rY_{i'}}{\rX_{-i'},\rY_{-i'} \mid \rB^{(j)}_{i-1}}.
	\]
	Consequently, an induction in increasing order of $(j,i)$ implies that
	\[
		\mi{\rX_{i'},\rY_{i'}}{\rX_{-i'},\rY_{-i'} \mid \rB^{(j)}_i} \le \mi{\rX_{i'},\rY_{i'}}{\rX_{-i'},\rY_{-i'}}.
	\]
	The claim follows as $\rX_{i'},\rY_{i'} \perp \rX_{-i'},\rY_{-i'}$.
\end{proof}

\begin{claim}\label{clm:advantage}
	The protocol $\pi$ succeeds with probability at most
	\[
		\frac{1}{2} + \frac{1}{2} \cdot \Exp_\rB\bracket{\prod_{i \in [k]} \bias_i(\rB)}.
	\]
\end{claim}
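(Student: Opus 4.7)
The plan is to combine the rectangle property established in \Cref{clm:rectangle} with the multiplicative behavior of bias under XOR of independent random variables (\Cref{fact:bias-xor}). Since $\pi$ is deterministic, its output is a function of the final blackboard $\rB$. Therefore, conditioned on $\rB = B$, the probability that $\pi$ outputs $f^{\oplus k}(\rX, \rY)$ correctly is at most $\frac{1}{2} + \frac{1}{2}\bias(B)$, by the standard observation that the best deterministic guess of a Boolean random variable succeeds with probability $\frac{1}{2} + \frac{1}{2}\bias$. Taking expectation over $\rB$, it suffices to show that $\Exp_\rB[\bias(\rB)] = \Exp_\rB[\prod_{i \in [k]} \bias_i(\rB)]$, and in fact the pointwise identity $\bias(B) = \prod_{i \in [k]} \bias_i(B)$ for every blackboard $B$.

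To establish this pointwise identity, I would first invoke \Cref{clm:rectangle} with $(j,i) = (r+1,k)$, which gives that conditioned on $\rB = B$, the pairs $\{(\rX_i, \rY_i)\}_{i \in [k]}$ are mutually independent. Consequently the Boolean random variables $\{f(\rX_i,\rY_i)\}_{i \in [k]}$ are also mutually independent conditioned on $\rB = B$. Then a straightforward induction on $k$, using \Cref{fact:bias-xor} at each step, yields
\[
    \bias(B) \;=\; \bias\Bigl(\bigoplus_{i \in [k]} f(\rX_i,\rY_i) \;\Big|\; \rB = B\Bigr) \;=\; \prod_{i \in [k]} \bias\bigl(f(\rX_i,\rY_i) \mid \rB = B\bigr) \;=\; \prod_{i \in [k]} \bias_i(B).
\]

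Combining the two parts, the success probability of $\pi$ is at most
\[
    \Exp_\rB\!\left[\tfrac{1}{2} + \tfrac{1}{2}\bias(\rB)\right] \;=\; \tfrac{1}{2} + \tfrac{1}{2}\Exp_\rB\!\left[\prod_{i \in [k]} \bias_i(\rB)\right],
\]
as claimed. There is no substantive obstacle here: the content of the claim is essentially that the rectangle structure of multi-party protocols makes per-coordinate biases multiplicative, which is precisely what the rectangle property of \Cref{clm:rectangle} was set up to deliver. The only point to be a little careful about is that the blackboard $\rB$ includes all posted messages (so Alice $1$ through Bob $k$ have all spoken, modulo the final output), which is exactly the endpoint $(r+1,k)$ to which \Cref{clm:rectangle} applies.
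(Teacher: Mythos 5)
Your proof is correct and follows exactly the paper's argument: bound the conditional success probability by $\frac{1}{2}+\frac{1}{2}\bias(B)$ using determinism of $\pi$, then factor $\bias(B)=\prod_{i\in[k]}\bias_i(B)$ via the conditional independence from \Cref{clm:rectangle} together with \Cref{fact:bias-xor}, and take expectation over $\rB$. No differences worth noting.
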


\begin{proof}
	Observe that given blackboard $B$, since the protocol $\pi$ is deterministic, it succeeds with probability at most
	\[
		\max_{b \in \set{0,1}} \Pr(f^{\oplus k}(\rX,\rY)=b \mid \rB=B) = \frac{1}{2} + \frac{1}{2} \cdot \bias(B).
	\]
	Summing over $B$, we get that the success probability of $\pi$ is upper bounded by
	\begin{align*}
		\frac{1}{2} + \frac{1}{2} \cdot \Exp_\rB\bracket{\bias(\rB)}
		& = \frac{1}{2} + \frac{1}{2} \cdot \Exp_\rB\bracket{\bias(f^{\oplus k}(\rX,\rY) \mid \rB)}\\
		& = \frac{1}{2} + \frac{1}{2} \cdot \Exp_\rB\bracket{\bias(\bigoplus_{i \in [k]} f(\rX_i,\rY_i) \mid \rB)}\\
		& = \frac{1}{2} + \frac{1}{2} \cdot \Exp_\rB\bracket{\prod_{i \in [k]} \bias(f(\rX_i,\rY_i) \mid \rB)} \tag{by~\Cref{fact:bias-xor} and~\Cref{clm:rectangle}}\\
		& = \frac{1}{2} + \frac{1}{2} \cdot \Exp_\rB\bracket{\prod_{i \in [k]} \bias_i(\rB)},
	\end{align*}
	as claimed.
\end{proof}

\begin{claim}\label{clm:small-prob-event}
	For $S \subseteq [k]$, it holds that
	\[
		\Pr_\rB(\event_1(S,\rB)) < (4\sqrt{\epsilon})^{\frac{|S|}{r+2}},
	\]
	and
	\[
		\Pr_\rB(\event_2(\rB)) < (6\epsilon^{1-\frac{1}{2r}})^{\frac{k}{r+2}}.
	\]
\end{claim}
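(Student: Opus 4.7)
The plan is to prove both probability bounds by contradiction with the hypothesis $C < \bD^{(r)}_{\mu,\frac{1}{2}+\epsilon}(f)$: assuming the probability in question exceeds the claimed bound, I would construct from $\pi$ a two-party $r$-round protocol for a single copy of $f$ that uses at most $C$ bits of communication yet achieves advantage strictly greater than $\epsilon$, contradicting the single-copy hardness.

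The central construction is an embedding simulation of $\pi$. For a fixed $S \subseteq [k]$, given a single-copy input $(x,y) \sim \mu$, Alice and Bob use public randomness to sample a coordinate $i \in S$ uniformly together with the remaining inputs $(X_{-i}, Y_{-i}) \sim \mu^{k-1}$, plant $(x,y)$ as $(X_i, Y_i)$, and each simulates all $k$ of her or his counterparts in $\pi$. This is a valid two-party $r$-round simulation with communication $C$ because each Alice's (resp.\ Bob's) message in $\pi$ depends only on the public blackboard and her (resp.\ his) own input. Upon termination, both parties compute the Bayes-optimal guess for $f(X_i, Y_i)$ given the final blackboard $\rB$, which is well-defined and computable by either party thanks to the rectangle property of \Cref{clm:rectangle}. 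The advantage of this simulation equals $\Exp_{i \in S, \rB}[\bias_i(\rB)]$, which is at least $\sqrt{\epsilon} \cdot \Pr(\event_1(S,\rB))$ in the first setting and $\epsilon^{1/(2r)} \cdot \Pr(\event_2(S,\rB))$ in the second. Combined with single-copy hardness, this immediately yields the crude bounds $\Pr(\event_1(S,\rB)) < \sqrt{\epsilon}$ and $\Pr(\event_2(S,\rB)) < \epsilon^{1-1/(2r)}$, matching the bases of the two claimed inequalities.

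To sharpen the exponent to $|S|/(r+2)$, I would enhance the embedding with a round-partitioning boost. For any blackboard $B$ realizing $\event_1(S, B)$, the coordinates of $S$ split into at most $r+2$ classes indexed by the round $j \in \set{0, 1, \ldots, r+1}$ at which $\bias_i(\rB^{(j)}_k)$ first crosses $\sqrt{\epsilon}$; by pigeonhole, some class $T(B) \subseteq S$ has size at least $|S|/(r+2)$. The plan is to publicly guess the critical round $j$, condition on the partial blackboard $\rB^{(j-1)}_k$, and iteratively embed the single-copy input into a coordinate predicted to belong to $T(B)$. The rectangle property, applied iteratively across conditional distributions on partial blackboards, supports this iteration and extracts a multiplicative bias boost at each step, aggregating into an effective advantage of $\sqrt{\epsilon} \cdot \Pr(\event_1(S,\rB))^{(r+2)/|S|}$. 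The constant $4$ absorbs the overhead from guessing $j$ uniformly, while the constant $6$ in the second bound also absorbs a union bound over the $\binom{k}{\epsilon k} \le (e/\epsilon)^{\epsilon k}$ subsets $S$ of size $(1-\epsilon)k$ that can witness $\event_2(\rB)$.

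The main obstacle is making the round-partitioning boost rigorous. The critical class $T(B)$ is a random set determined by the full blackboard, so the two-party simulation cannot commit in advance to embedding into a coordinate of $T(B)$ — the candidate round $j$ and the predicted membership in $T(B)$ must be produced without access to the blackboard beyond round $j-1$. Reconciling the iterative embedding with the rectangle property across these conditional distributions on partial blackboards, while tracking the multiplicative advantage accumulated across $|S|/(r+2)$ iterations and ensuring the overheads from guessing collapse into the promised constant in the base, is the principal technical difficulty.
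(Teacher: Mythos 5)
Your framing is right (argue by contradiction and embed a single copy of $f$ into a random coordinate), and your derivation of the crude bound $\Pr(\event_1(S,\rB))=O(\sqrt\epsilon)$ is fine. But the core of the claim — the exponent $|S|/(r+2)$ — is exactly where your proposal breaks down, and you acknowledge as much in your last paragraph. The ``round-partitioning boost'' you sketch (partition $S$ by first-crossing round, publicly guess the round, iteratively embed into coordinates expected to lie in the random critical class $T(B)$) is not what the paper does, and as you note yourself, it is unclear how to realize it: the coordinate to embed into must be chosen before the blackboard is revealed, yet membership in $T(B)$ depends on the entire transcript. That circularity is not an implementation detail; it is a genuine obstruction, and an ``iterative'' embedding with a multiplicative per-step boost is not supported by the rectangle property in the way you hope.

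The paper sidesteps this entirely with a different mechanism. The protocol $\tau$ (\Cref{alg:xor}) does not just plant $(x,y)$ at a uniform coordinate and simulate; it explicitly \emph{conditions the sampled transcript on the event $\event$} (i.e. on $\event_1(S,\cdot)$ or $\event_2(\cdot)$). The cost of that conditioning is then controlled information-theoretically: \Cref{clm:small-kl-input} and \Cref{clm:small-kl-message} show that, at each of the $r+2$ ``sampling steps'' (initial input plus $r+1$ rounds of message extensions), the KL-divergence between what $\tau$ samples and what $\pi$ conditioned on $\event$ would produce is at most $\tfrac{1}{|T|}\log\tfrac{1}{\Pr(\event)}$. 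Summing by the chain rule (\Cref{fact:kl-chain-rule}) gives total divergence $\tfrac{r+2}{|T|}\log\tfrac{1}{\Pr(\event)}$, and then — crucially — the \emph{alternative} Pinsker bound $\tvd{\mu}{\nu}\le 1-\tfrac12\exp(-\kl{\mu}{\nu})$ (\Cref{fact:alt-pinskers}, which you do not invoke) converts that into $\tvd{\nu_\pi}{\nu_\tau}\le 1-\tfrac12\Pr(\event)^{(r+2)/|T|}$. Plugging this into the advantage of $\tau$ and using $C<\bD^{(r)}_{\mu,1/2+\epsilon}(f)$ yields, after rearranging, the bound $\Pr(\event)<\bigl(2+4\epsilon/t-2\Pr_{\nu_\pi}(\bias_\rI(\rB)\ge t)\bigr)^{|T|/(r+2)}$, from which both inequalities follow by choosing $(T,\event,t)=(S,\event_1(S,\rB),\sqrt\epsilon)$ and $(T,\event,t)=([k],\event_2(\rB),\epsilon^{1/(2r)})$. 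So the exponent $|T|/(r+2)$ comes from the $r+2$-fold KL chain rule combined with the exponential form of the alternative Pinsker inequality, not from any iterated embedding.

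Two smaller corrections. First, the constant $6$ in the second bound does not ``absorb a union bound over subsets''; it is simply the arithmetic $2+4\epsilon^{1-1/(2r)}-2(1-\epsilon)\le 6\epsilon^{1-1/(2r)}$. A union bound over subsets does appear in this paper, but only later, in the derivation of $\epsilon_1$ in the proof of \Cref{lem:multi-xor}, not inside \Cref{clm:small-prob-event}. Second, for the $\event_2$ bound the paper takes $T=[k]$, not a random or fixed subset $S$; the subsets only enter through the definition of the event $\event_2(B)$ itself.
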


The proof of~\Cref{lem:multi-xor} is done with the help of the above technical claim.
Assume its correctness for now.
We show that it indeed implies~\Cref{lem:multi-xor}.
Let $\cB_1$ be the subset of blackboards $B$ such that $\event_1(B)$ holds.
By~\Cref{clm:advantage}, the success probability of $\pi$ is at most
\begin{align*}
	& \frac{1}{2}+\frac{1}{2} \cdot \Exp_\rB\bracket{\prod_{i \in [k]} \bias_i(\rB)}\\
	& \hspace{1cm} = \frac{1}{2}+\frac{1}{2} \cdot \bracket{\sum_{B \in \cB_1} \Pr(\rB=B) \cdot \prod_{i \in [k]} \bias_i(B) + \sum_{B \not\in \cB_1} \Pr(\rB=B) \cdot \prod_{i \in [k]} \bias_i(B)}\\
	& \hspace{1cm} \le \frac{1}{2}+\frac{1}{2} \cdot \bracket{\Pr_\rB(\event_1(\rB)) + \sum_{B \not\in \cB_1} \Pr(\rB=B) \cdot \prod_{i \in [k]} \bias_i(B)} \tag{as $\bias_i(B) \in [0,1]$}\\
	& \hspace{1cm} \le \frac{1}{2}+\frac{1}{2} \cdot \bracket{\Pr_\rB(\event_1(\rB)) + (\sqrt{\epsilon})^{\frac{k}{10r}}}, \tag{as $\Pr_\rB(\overline{\event_1(\rB)}) \le 1$}
\end{align*}
where we use the observation that for $B \notin \cB_1$, $\bias_i(B) < \sqrt{\epsilon}$ for more than $k/(10r)$ indices $i \in [k]$.
By a union bound, we further upper bound the success probability of $\pi$ by
\begin{align*}
	\frac{1}{2}+\frac{1}{2} \cdot \Exp_\rB\bracket{\prod_{i \in [k]} \bias_i(\rB)}
	& \le \frac{1}{2}+\frac{1}{2} \cdot \bracket{\sum_{\substack{S \subseteq [k] : \\ \card{S}=(1-\frac{1}{10r}) \cdot k}}\Pr_\rB(\event_1(S,\rB)) + \epsilon^{\frac{k}{20r}}}\\
	& < \frac{1}{2}+\frac{1}{2} \cdot \bracket{\binom{k}{(1-\frac{1}{10r}) \cdot k} \cdot (4\sqrt{\epsilon})^{(1-\frac{1}{10r}) \cdot k \cdot \frac{1}{r+2}} + \epsilon^{\frac{k}{20r}}} \tag{by~\Cref{clm:small-prob-event}}\\
	& = \frac{1}{2}+\frac{1}{2} \cdot \bracket{\binom{k}{\frac{k}{10r}} \cdot (16\epsilon)^{\frac{10r-1}{20r} \cdot \frac{k}{r+2}} + \epsilon^{\frac{k}{20r}}}\\
	& \le \frac{1}{2}+\frac{1}{2} \cdot \bracket{\paren{\frac{ek}{k/(10r)}}^{\frac{k}{10r}} \cdot (16\epsilon)^{\frac{10r-1}{20r} \cdot \frac{k}{r+2}} + \epsilon^{\frac{k}{20r}}}\\
	& \le \frac{1}{2}+\epsilon_1,
\end{align*}
for sufficiently small $\epsilon < \epsilon_0$.

Meanwhile, let $\cB_2$ be the subset of blackboards $B$ such that $\event_2(B)$ holds.
Similarly by~\Cref{clm:advantage}, we can also upper bound the success probability of $\pi$ by
\[
	\frac{1}{2}+\frac{1}{2} \cdot \Exp_\rB\bracket{\prod_{i \in [k]} \bias_i(\rB)} \le \frac{1}{2}+\frac{1}{2} \cdot \bracket{\Pr_\rB(\event_2(\rB)) + (\epsilon^{\frac{1}{2r}})^{\epsilon k}},
\]
since for $B \not\in \cB_2$, $\bias_i(B) < \epsilon^{1/(2r)}$ for more than $\epsilon k$ indices $i \in [k]$.
By~\Cref{clm:small-prob-event}, the success probability of $\pi$ is at most
\[
	\frac{1}{2}+\frac{1}{2} \cdot \Exp_\rB\bracket{\prod_{i \in [k]} \bias_i(\rB)} < \frac{1}{2}+\frac{1}{2} \cdot \bracket{(6\epsilon^{1-\frac{1}{2r}})^{\frac{k}{r+2}} + \epsilon^{\frac{\epsilon k}{2r}}} \le \frac{1}{2}+\epsilon_2,
\]
for sufficiently small $\epsilon < \epsilon_0$.
This concludes the proof of~\Cref{lem:multi-xor}.
It now remains to show the correctness of~\Cref{clm:small-prob-event}.
Suppose not, we will construct an $r$-round, $2$-party protocol $\tau$ for solving $f$ with probability $1/2+\epsilon$ over $\mu$ that has communication $C$, contradicting the assumption $C < \bD^{(r)}_{\mu,1/2+\epsilon}(f)$.
On input $(X,Y)$, $\tau$ will simulate $\pi$ as shown in~\Cref{alg:xor}, where all random variables are with respect to $\pi$.
The parameters $T \subseteq [k]$ and event $\event$ are to be determined and are supposed to be such that $\bias_\rI(\rB)$ is large with high probability, conditioned on $\event$, for $\rI \sim T$.

\begin{Algorithm}\label{alg:xor}
	The protocol $\tau$ for solving $f$ on input $(X,Y)$.

	\textbf{Parameters:} $T \subseteq [k]$ and event $\event$.
	\begin{enumerate}
		\item Alice and Bob publicly sample $\rI$ uniformly at random from $T$.
		\item Alice and Bob publicly sample $\rX_{< \rI},\rY_{< \rI}$ conditioned on $\rI,\event$.
		\item Alice sets $\rX_\rI = X$ and Bob sets $\rY_\rI = Y$.
		\item For $j \in [r+1]$, if $j$ is odd,
			\begin{enumerate}
				\item Alice simulates $\pi$ on $\rX_{\le \rI}$ and computes $\rM^{(j)}_{\le \rI}$, given $\rB^{(j-1)}_k$.
				\item Alice privately samples $\rM^{(j)}_{> \rI}$ conditioned on $\rB^{(j)}_\rI,\rX_{< \rI},\rY_{< \rI},\rI,\event$.
				\item Alice sends $\rM^{(j)}$ to Bob if $j \le r$, and otherwise outputs
					\[
						\argmax_{b \in \set{0,1}} \Pr_{(x,y) \sim \mu^k \mid \rB}(f(x_i,y_i) = b);
					\]
			\end{enumerate}
			if $j$ is even,
			\begin{enumerate}
				\item Bob simulates $\pi$ on $\rY_{\le \rI}$ and computes $\rM^{(j)}_{\le \rI}$, given $\rB^{(j-1)}_k$.
				\item Bob privately samples $\rM^{(j)}_{> \rI}$ conditioned on $\rB^{(j)}_\rI,\rX_{< \rI},\rY_{< \rI},\rI,\event$.
				\item Bob sends $\rM^{(j)}$ to Alice if $j \le r$, and otherwise outputs
					\[
						\argmax_{b \in \set{0,1}} \Pr_{(x,y) \sim \mu^k \mid \rB}(f(x_i,y_i) = b).
					\]
			\end{enumerate}
	\end{enumerate}
\end{Algorithm}

It can be verified that the current transcript always reflects the up-to-date blackboard and thus $\tau$ is indeed an $r$-round protocol for solving $f$ that has communication $C$.
Regarding the success probability of $\tau$, the following two technical claims show that all the random variables as sampled in $\tau$ almost perfectly follow their distribution in $\pi$ conditioned on $\event$.

\begin{claim}\label{clm:small-kl-input}
	For $T \subseteq [k]$ and event $\event$, it holds that
	\[
		\Exp_{\rX_{< \rI},\rY_{< \rI},\rI \mid \event}\kl{\distribution{\rX_\rI,\rY_\rI \mid \rX_{< \rI},\rY_{< \rI},\rI,\event}}{\distribution{\rX_\rI,\rY_\rI}} \le \frac{1}{|T|} \cdot \log\frac{1}{\Pr(\event)}.
	\]
\end{claim}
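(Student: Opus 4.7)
The plan is to prove the bound via a chain-rule decomposition of a single KL divergence, namely $\kl{\distribution{\rX,\rY \mid \event}}{\distribution{\rX,\rY}}$, and exploit the product structure of $\distribution{\rX,\rY} = \mu^k$ together with the independence of the public index $\rI$ from everything else.

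First, I would open up the expectation on the left-hand side. Since $\rI$ is sampled publicly and uniformly from $T$ in $\tau$, independently of $(\rX,\rY)$ and hence of $\event$, we have $\distribution{\rI \mid \event} = \distribution{\rI}$, so
\[
\Exp_{\rX_{<\rI},\rY_{<\rI},\rI \mid \event}\kl{\distribution{\rX_\rI,\rY_\rI \mid \rX_{<\rI},\rY_{<\rI},\rI,\event}}{\distribution{\rX_\rI,\rY_\rI}}
= \frac{1}{|T|}\sum_{i \in T} \Exp_{\rX_{<i},\rY_{<i} \mid \event} \kl{\distribution{\rX_i,\rY_i \mid \rX_{<i},\rY_{<i},\event}}{\distribution{\rX_i,\rY_i}}.
\]

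Second, I would rewrite each summand so that the denominator is also conditioned on $\rX_{<i},\rY_{<i}$. This is the key step that uses the product structure of the input distribution: because $(\rX_i,\rY_i) \perp (\rX_{<i},\rY_{<i})$ under $\mu^k$, we have $\distribution{\rX_i,\rY_i \mid \rX_{<i},\rY_{<i}} = \distribution{\rX_i,\rY_i}$, so the summand equals $\Exp_{\rX_{<i},\rY_{<i} \mid \event}\kl{\distribution{\rX_i,\rY_i \mid \rX_{<i},\rY_{<i},\event}}{\distribution{\rX_i,\rY_i \mid \rX_{<i},\rY_{<i}}}$, which is exactly the $i$-th term of the chain-rule expansion of $\kl{\distribution{\rX,\rY \mid \event}}{\distribution{\rX,\rY}}$.

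Third, extending the sum from $i \in T$ to all $i \in [k]$ by non-negativity of KL and applying \Cref{fact:kl-chain-rule}, I would obtain
\[
\sum_{i \in T} \Exp_{\rX_{<i},\rY_{<i} \mid \event}\kl{\distribution{\rX_i,\rY_i \mid \rX_{<i},\rY_{<i},\event}}{\distribution{\rX_i,\rY_i \mid \rX_{<i},\rY_{<i}}} \le \kl{\distribution{\rX,\rY \mid \event}}{\distribution{\rX,\rY}}.
\]
Finally, \Cref{fact:kl-event} gives $\kl{\distribution{\rX,\rY \mid \event}}{\distribution{\rX,\rY}} \le \log(1/\Pr(\event))$, and dividing by $|T|$ yields the claimed bound.

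There is no real obstacle here beyond bookkeeping; the only subtle point is that one must use the fact that $\rI$ is a freshly drawn uniform random variable (independent of $\rX,\rY,\event$) to push the conditioning on $\rI$ through cleanly, and recognize that the bound follows by telescoping a chain-rule decomposition against a product marginal, together with the standard $\log(1/\Pr(\event))$ inflation that comes from conditioning on an event.
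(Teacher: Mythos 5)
Your proof is correct and follows essentially the same route as the paper's: both arguments use the independence of the uniform public index $\rI$ from $(\rX,\rY,\event)$ to reduce to a per-coordinate sum, the product structure of $\mu^k$ to turn the unconditional marginal $\distribution{\rX_i,\rY_i}$ into $\distribution{\rX_i,\rY_i \mid \rX_{<i},\rY_{<i}}$, the chain rule of KL-divergence to collapse the sum over $[k]$, and \Cref{fact:kl-event} for the final $\log(1/\Pr(\event))$ bound. The only difference is cosmetic ordering of the steps.
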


\begin{proof}
	Observe that
	\begin{align*}
		& \Exp_{\rX_{< \rI},\rY_{< \rI},\rI \mid \event}\kl{\distribution{\rX_\rI,\rY_\rI \mid \rX_{< \rI},\rY_{< \rI},\rI,\event}}{\distribution{\rX_\rI,\rY_\rI}}\\
		& \hspace{1cm} = \kl{\distribution{\rX_\rI,\rY_\rI \mid \rX_{< \rI},\rY_{< \rI},\rI,\event}}{\distribution{\rX_\rI,\rY_\rI \mid \rX_{< \rI},\rY_{< \rI},\rI}} \tag{as $\rX_\rI,\rY_\rI \perp \rX_{< \rI},\rY_{< \rI},\rI$}\\
		& \hspace{1cm} = \frac{1}{|T|} \cdot \sum_{i \in T} \kl{\distribution{\rX_i,\rY_i \mid \rX_{< i},\rY_{< i},\rI=i,\event}}{\distribution{\rX_i,\rY_i \mid \rX_{< i},\rY_{< i},\rI=i}} \tag{as $\rI$ is uniform over $T$}\\
		& \hspace{1cm} = \frac{1}{|T|} \cdot \sum_{i \in T} \kl{\distribution{\rX_i,\rY_i \mid \rX_{< i},\rY_{< i},\event}}{\distribution{\rX_i,\rY_i \mid \rX_{< i},\rY_{< i}}} \tag{as $\rX_{\le i},\rY_{\le i} \perp \rI=i \mid \event$ and $\rX_{\le i},\rY_{\le i} \perp \rI=i$}\\
		& \hspace{1cm} \le \frac{1}{|T|} \cdot \sum_{i \in [k]} \kl{\distribution{\rX_i,\rY_i \mid \rX_{< i},\rY_{< i},\event}}{\distribution{\rX_i,\rY_i \mid \rX_{< i},\rY_{< i}}} \tag{as $T \subseteq [k]$}\\
		& \hspace{1cm} = \frac{1}{|T|} \cdot \kl{\distribution{\rX,\rY \mid \event}}{\distribution{\rX,\rY}} \tag{by chain rule of KL-divergence (\Cref{fact:kl-chain-rule})}\\
		& \hspace{1cm} \le \frac{1}{|T|} \cdot \log\frac{1}{\Pr(\event)}, \tag{by~\Cref{fact:kl-event}}
	\end{align*}
	as claim.
\end{proof}

\begin{claim}\label{clm:small-kl-message}
	For $j \in [r+1]$, $T \subseteq [k]$, and event $\event$, it holds that
	\[
		\Exp_{\rB^{(j)}_\rI,\rX_{\le \rI},\rY_{\le \rI},\rI \mid \event}\kl{\distribution{\rM^{(j)}_{> \rI} \mid \rB^{(j)}_\rI,\rX_{\le \rI},\rY_{\le \rI},\rI,\event}}{\distribution{\rM^{(j)}_{> \rI} \mid \rB^{(j)}_\rI,\rX_{< \rI},\rY_{< \rI},\rI,\event}} \le \frac{1}{|T|} \cdot \log\frac{1}{\Pr(\event)}.
	\]
\end{claim}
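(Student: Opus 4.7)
The plan is to follow the three-step template of the preceding \Cref{clm:small-kl-input}: express the LHS as a conditional mutual information, use that $\rI$ is uniformly distributed over $T$ to average over the index, and then bound the resulting sum via a chain-rule decomposition of a KL-divergence that is controlled by \Cref{fact:kl-event}.

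First, I would apply \Cref{fact:kl-info}. Since the two distributions inside the KL differ only in whether $(\rX_\rI, \rY_\rI)$ is conditioned on, the expected KL equals the conditional mutual information
\[
\mi{\rM^{(j)}_{>\rI}}{\rX_\rI, \rY_\rI \mid \rB^{(j)}_\rI, \rX_{<\rI}, \rY_{<\rI}, \rI, \event}.
\]
Because $\rI$ is uniform on $T$ and independent of all other random variables (including the event $\event$, which is defined on the protocol's random variables), this decomposes as $\frac{1}{|T|} \sum_{i \in T} \mi{\rM^{(j)}_{>i}}{\rX_i, \rY_i \mid \rB^{(j)}_i, \rX_{<i}, \rY_{<i}, \event}$, exactly paralleling the analogous step in the proof of \Cref{clm:small-kl-input}.

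Second, I would extend the sum to all $i \in [k]$ by nonnegativity of mutual information, and upper bound it by $\kl{\distribution{\rX, \rY \mid \event}}{\distribution{\rX, \rY}}$ through a carefully chosen chain-rule decomposition of the latter KL divergence. Two structural facts drive the telescoping: the rectangle property of \Cref{clm:rectangle}, which ensures $\rM^{(j)}_{>i} \perp (\rX_i, \rY_i)$ given $(\rB^{(j)}_i, \rX_{<i}, \rY_{<i})$ under the unconditional distribution, so that each per-summand mutual information is genuinely due to the measure change induced by $\event$; and the determinism of the protocol (WLOG), which makes each message $\rM^{(j)}_l$ a function of $(\rB^{(j)}_{l-1}, \rX_l, \rY_l)$. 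Expanding $\kl{\distribution{\rX, \rY \mid \event}}{\distribution{\rX, \rY}}$ via chain rule in an ordering that interleaves the inputs $(\rX_i, \rY_i)$ with the round-$j$ messages, the message contributions collapse by determinism while the input contributions reassemble, via the rectangle property, into the mutual-information sum, producing the desired inequality. \Cref{fact:kl-event} then yields $\kl{\distribution{\rX, \rY \mid \event}}{\distribution{\rX, \rY}} \le \log(1/\Pr(\event))$, and combined with the $\frac{1}{|T|}$ factor from the first step, we recover the claimed bound.

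The main technical obstacle is achieving a tight telescoping in the second step. A naive per-summand bound---say, upper bounding each $\mi{\rM^{(j)}_{>i}}{\rX_i,\rY_i \mid \rB^{(j)}_i, \rX_{<i},\rY_{<i}, \event}$ by a joint KL via data processing through $(\rM^{(j)}_{>i}, \rX_i, \rY_i, \rB^{(j)}_i, \rX_{<i}, \rY_{<i})$---gives $\log(1/\Pr(\event))$ per term and loses a factor of $|T|$ upon summation. Matching the stated $\frac{1}{|T|} \log(1/\Pr(\event))$ bound requires the chain-rule ordering to align each mutual-information summand with a single conditional-KL contribution in the decomposition of $\kl{\distribution{\rX, \rY \mid \event}}{\distribution{\rX, \rY}}$, exploiting both the incremental structure of the blackboard (where $\rB^{(j)}_k$ equals $\rB^{(j)}_i$ concatenated with $\rM^{(j)}_{>i}$) and the vanishing of the unconditional per-term contributions. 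This parallels but is strictly more delicate than the telescoping of \Cref{clm:small-kl-input}, where the quantities being summed were directly chain-rule contributions of the KL rather than mutual informations derived from them.
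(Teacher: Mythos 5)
Your setup and endpoints are right: rewriting the expected KL as $\mi{\rM^{(j)}_{>\rI}}{\rX_\rI,\rY_\rI \mid \rB^{(j)}_\rI,\rX_{<\rI},\rY_{<\rI},\rI,\event}$, averaging over the uniform $\rI \sim T$, extending to all of $[k]$, and finishing with \Cref{fact:kl-event} all match the paper (and your final target $\kl{\distribution{\rX,\rY\mid\event}}{\distribution{\rX,\rY}}$ is a legitimate, even slightly cleaner, endpoint than the paper's $\kl{\distribution{\rX,\rY\mid\rB^{(j)}_k,\event}}{\distribution{\rX,\rY\mid\rB^{(j)}_k}}$, which the paper then has to massage with concavity of $\log$ and a Bayes identity). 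But the heart of the claim --- why the per-coordinate mutual informations, each conditioned on a \emph{different} partial blackboard $\rB^{(j)}_i$, telescope into a single global KL --- is exactly the step you leave as a sketch, and the mechanism you describe would not go through. An ``interleaved ordering'' of the inputs with the round-$j$ messages cannot be realized as a chain-rule decomposition of $\kl{\distribution{\rX,\rY\mid\event}}{\distribution{\rX,\rY}}$: the blackboard $\rB^{(j)}_i$ contains all messages from rounds $1,\ldots,j-1$, which depend on \emph{every} party's input, so it is not a prefix of any ordering of $(\rX_1,\rY_1),\ldots,(\rX_k,\rY_k)$; and even granting such an ordering, the chain-rule contribution of $(\rX_i,\rY_i)$ would be a KL of its distribution given a \emph{prefix} of the transcript, not the mutual information with the \emph{future} messages $\rM^{(j)}_{>i}$ under the $\event$-conditioned measure, which is what you actually need to absorb.

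The missing idea is a Bayes-rule reversal. The paper rewrites the log-ratio of message probabilities as a log-ratio of input probabilities, using $\rB^{(j)}_k = (\rB^{(j)}_\rI, \rM^{(j)}_{>\rI})$ and $\frac{\Pr(A\mid B)}{\Pr(A)} = \frac{\Pr(B\mid A)}{\Pr(B)}$, and then splits it (via $\log\frac{a}{b} = \log\frac{a}{c} - \log\frac{b}{c}$ with $c$ the \emph{unconditional} distribution of $(\rX_\rI,\rY_\rI)$ given the full blackboard) into
$\kl{\distribution{\rX_\rI,\rY_\rI \mid \rB^{(j)}_k,\rX_{<\rI},\rY_{<\rI},\rI,\event}}{\distribution{\rX_\rI,\rY_\rI \mid \rB^{(j)}_k,\rX_{<\rI},\rY_{<\rI},\rI}}$ minus a second term that becomes a genuine (hence nonnegative, hence droppable) KL precisely because \Cref{clm:rectangle} forces the two unconditional reference measures --- given $\rB^{(j)}_k$ versus given only $\rB^{(j)}_\rI$ --- to coincide. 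After this flip, every summand conditions on the \emph{same} full blackboard $\rB^{(j)}_k$, and only then does the chain rule of KL (\Cref{fact:kl-chain-rule}) over $i \in [k]$ telescope to $\kl{\distribution{\rX,\rY\mid\rB^{(j)}_k,\event}}{\distribution{\rX,\rY\mid\rB^{(j)}_k}}$. Your intuition that the rectangle property makes each summand ``genuinely due to the measure change induced by $\event$'' is correct, but it is this change-of-reference-measure decomposition, not determinism of the messages, that converts that intuition into the tight $\frac{1}{|T|}$ bound.
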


\begin{proof}
	Observe that
	\begin{align*}
		& \Exp_{\rB^{(j)}_\rI,\rX_{\le \rI},\rY_{\le \rI},\rI \mid \event}\kl{\distribution{\rM^{(j)}_{> \rI} \mid \rB^{(j)}_\rI,\rX_{\le \rI},\rY_{\le \rI},\rI,\event}}{\distribution{\rM^{(j)}_{> \rI} \mid \rB^{(j)}_\rI,\rX_{< \rI},\rY_{< \rI},\rI,\event}}\\
		& \hspace{1cm} = \Exp_{\rB^{(j)}_k,\rX_{\le \rI},\rY_{\le \rI},\rI \mid \event} \log\frac{\Pr(\rM^{(j)}_{> \rI} \mid \rB^{(j)}_\rI,\rX_{\le \rI},\rY_{\le \rI},\rI,\event)}{\Pr(\rM^{(j)}_{> \rI} \mid \rB^{(j)}_\rI,\rX_{< \rI},\rY_{< \rI},\rI,\event)} \tag{as $\rB^{(j)}_k = (\rB^{(j)}_\rI,\rM^{(j)}_{> \rI})$}\\
		& \hspace{1cm} = \Exp_{\rB^{(j)}_k,\rX_{\le \rI},\rY_{\le \rI},\rI \mid \event} \log\frac{\Pr(\rX_\rI,\rY_\rI \mid \rB^{(j)}_k,\rX_{< \rI},\rY_{< \rI},\rI,\event)}{\Pr(\rX_\rI,\rY_\rI \mid \rB^{(j)}_\rI,\rX_{< \rI},\rY_{< \rI},\rI,\event)} \tag{as $\frac{\Pr(A \mid B)}{\Pr(A)} = \frac{\Pr(B \mid A)}{\Pr(B)}$}\\
		& \hspace{1cm} = \Exp_{\rB^{(j)}_k,\rX_{\le \rI},\rY_{\le \rI},\rI \mid \event} \log\frac{\Pr(\rX_\rI,\rY_\rI \mid \rB^{(j)}_k,\rX_{< \rI},\rY_{< \rI},\rI,\event)}{\Pr(\rX_\rI,\rY_\rI \mid \rB^{(j)}_k,\rX_{< \rI},\rY_{< \rI},\rI)}\\
		& \hspace{2cm} - \Exp_{\rB^{(j)}_k,\rX_{\le \rI},\rY_{\le \rI},\rI \mid \event} \log\frac{\Pr(\rX_\rI,\rY_\rI \mid \rB^{(j)}_\rI,\rX_{< \rI},\rY_{< \rI},\rI,\event)}{\Pr(\rX_\rI,\rY_\rI \mid \rB^{(j)}_k,\rX_{< \rI},\rY_{< \rI},\rI)} \tag{as $\log\frac{a}{b} = \log\frac{a}{c} - \log\frac{b}{c}$}\\
		& \hspace{1cm} = \kl{\distribution{\rX_\rI,\rY_\rI \mid \rB^{(j)}_k,\rX_{< \rI},\rY_{< \rI},\rI,\event}}{\distribution{\rX_\rI,\rY_\rI \mid \rB^{(j)}_k,\rX_{< \rI},\rY_{< \rI},\rI}}\\
		& \hspace{2cm} - \Exp_{\rB^{(j)}_k,\rX_{\le \rI},\rY_{\le \rI},\rI \mid \event} \log\frac{\Pr(\rX_\rI,\rY_\rI \mid \rB^{(j)}_\rI,\rX_{< \rI},\rY_{< \rI},\rI,\event)}{\Pr(\rX_\rI,\rY_\rI \mid \rB^{(j)}_\rI,\rX_{< \rI},\rY_{< \rI},\rI)} \tag{as $\rX_\rI,\rY_\rI \perp \rM^{(j)}_{> \rI} \mid \rB^{(j)}_\rI,\rX_{< \rI},\rY_{< \rI},\rI$ by~\Cref{clm:rectangle}}\\
		& \hspace{1cm} = \kl{\distribution{\rX_\rI,\rY_\rI \mid \rB^{(j)}_k,\rX_{< \rI},\rY_{< \rI},\rI,\event}}{\distribution{\rX_\rI,\rY_\rI \mid \rB^{(j)}_k,\rX_{< \rI},\rY_{< \rI},\rI}}\\
		& \hspace{2cm} - \kl{\distribution{\rX_\rI,\rY_\rI \mid \rB^{(j)}_\rI,\rX_{< \rI},\rY_{< \rI},\rI,\event}}{\distribution{\rX_\rI,\rY_\rI \mid \rB^{(j)}_\rI,\rX_{< \rI},\rY_{< \rI},\rI}}\\
		& \hspace{1cm} \le \frac{1}{|T|} \cdot \sum_{i \in T} \kl{\distribution{\rX_i,\rY_i \mid \rB^{(j)}_k,\rX_{< i},\rY_{< i},\rI=i,\event}}{\distribution{\rX_i,\rY_i \mid \rB^{(j)}_k,\rX_{< i},\rY_{< i},\rI=i}} \tag{as $\rI$ is uniform over $T$}\\
		& \hspace{1cm} = \frac{1}{|T|} \cdot \sum_{i \in T} \kl{\distribution{\rX_i,\rY_i \mid \rB^{(j)}_k,\rX_{< i},\rY_{< i},\event}}{\distribution{\rX_i,\rY_i \mid \rB^{(j)}_k,\rX_{< i},\rY_{< i}}} \tag{as $\rB^{(j)}_k,\rX_{\le i},\rY_{\le i} \perp \rI=i \mid \event$ and $\rB^{(j)}_k,\rX_{\le i},\rY_{\le i} \perp \rI=i$}\\
		& \hspace{1cm} \le \frac{1}{|T|} \cdot \sum_{i \in [k]} \kl{\distribution{\rX_i,\rY_i \mid \rB^{(j)}_k,\rX_{< i},\rY_{< i},\event}}{\distribution{\rX_i,\rY_i \mid \rB^{(j)}_k,\rX_{< i},\rY_{< i}}} \tag{as $T \subseteq [k]$}\\
		& \hspace{1cm} = \frac{1}{|T|} \cdot \kl{\distribution{\rX,\rY \mid \rB^{(j)}_k,\event}}{\distribution{\rX,\rY \mid \rB^{(j)}_k}} \tag{by chain rule of KL-divergence (\Cref{fact:kl-chain-rule})}\\
		& \hspace{1cm} \le \frac{1}{|T|} \cdot \Exp_{\rB^{(j)}_k \mid \event} \log\frac{1}{\Pr(\event \mid \rB^{(j)}_k)} \tag{by~\Cref{fact:kl-event}}\\
		& \hspace{1cm} \le \frac{1}{|T|} \cdot \log\Exp_{\rB^{(j)}_k \mid \event} \frac{1}{\Pr(\event \mid \rB^{(j)}_k)} \tag{by concavity of $\log(\cdot)$}\\
		& \hspace{1cm} = \frac{1}{|T|} \cdot \log\sum_{\rB^{(j)}_k} \frac{\Pr(\rB^{(j)}_k \mid \event)}{\Pr(\event \mid \rB^{(j)}_k)}\\
		& \hspace{1cm} = \frac{1}{|T|} \cdot \log\sum_{\rB^{(j)}_k} \frac{\Pr(\rB^{(j)}_k)}{\Pr(\event)} \tag{as $\frac{\Pr(A \mid B)}{\Pr(B \mid A)} = \frac{\Pr(A)}{\Pr(B)}$}\\
		& \hspace{1cm} = \frac{1}{|T|} \cdot \log\frac{1}{\Pr(\event)}.
	\end{align*}
	This concludes the proof.
\end{proof}

We emphasize that Alice is able to compute $\rM^{(j)}_{\le \rI}$ exactly for odd $j \in [r+1]$ as it is fully determined by $\rB^{(j-1)}_k$, which is provided by the current transcript, and $\rX_{\le \rI}$.
Similarly, Bob is able to compute $\rM^{(j)}_{\le \rI}$ exactly for even $j \in [r+1]$ since he has full knowledge of $\rB^{(j)}_k$ and $\rY_{\le \rI}$.
We are now ready to prove~\Cref{clm:small-prob-event}.

\begin{proof}[Proof of~\Cref{clm:small-prob-event}]
	Fix the input distribution $\mu$.
	Let $\nu_\pi$ be the distribution of $\rB,\rX_{\le \rI},\rY_{\le \rI},\rI$ where $\rI$ is drawn from $T$ uniformly at random and $\rB,\rX_{\le \rI},\rY_{\le \rI}$ follow their distribution in $\pi$ conditioned on $\event$.
	Also let $\nu_\tau$ be the distribution of $\rB,\rX_{\le \rI},\rY_{\le \rI},\rI$ as sampled in $\tau$.
	By chain rule of KL-divergence (\Cref{fact:kl-chain-rule}), together with~\Cref{clm:small-kl-input,clm:small-kl-message}, we get that
	\[
		\kl{\nu_\pi}{\nu_\tau} \le \frac{r+2}{|T|} \cdot \log\frac{1}{\Pr(\event)}.
	\]
	Using an alternative bound for Pinsker's inequality (\Cref{fact:alt-pinskers}), we further have
	\[
		\tvd{\nu_\pi}{\nu_\tau} \le 1 - \frac{1}{2} \cdot \Pr(\event)^{\frac{r+2}{|T|}}.
	\]
	Fix a threshold value $t \in [0,1]$.
	Observe that the success probability of $\tau$ is
	\begin{align*}
		\frac{1}{2} + \frac{1}{2} \cdot \Exp_{\rB,\rI \sim \nu_\tau}\bracket{\bias_\rI(\rB)}
		& \ge \frac{1}{2} + \frac{t}{2} \cdot \Pr_{\rB,\rI \sim \nu_\tau}\paren{\bias_\rI(\rB) \ge t} \tag{as $\bias_\rI(\rB) \in [0,1]$}\\
		& \ge \frac{1}{2} + \frac{t}{2} \cdot \bracket{\Pr_{\rB,\rI \sim \nu_\pi}\paren{\bias_\rI(\rB) \ge t} - \tvd{\nu_\pi}{\nu_\tau}} \tag{by~\Cref{fact:tvd-small}}\\
		& \ge \frac{1}{2} + \frac{t}{2} \cdot \bracket{\Pr_{\rB,\rI \sim \nu_\pi}\paren{\bias_\rI(\rB) \ge t} + \frac{1}{2} \cdot \Pr(\event)^{\frac{r+2}{|T|}} - 1}.
	\end{align*}
	Recall that $\tau$ has communication $C$ and thus by assumption, it can only succeed with probability less than $1/2+\epsilon$.
	Rearranging the terms above, we get
	\[
		\Pr(\event) < \bracket{2+\frac{4\epsilon}{t} - 2 \cdot \Pr_{\rB,\rI \sim \nu_\pi}\paren{\bias_\rI(\rB) \ge t}}^{\frac{|T|}{r+2}}.
	\]
	For $S \subseteq [k]$, setting $T=S$, $\event=\event_1(S,\rB)$, and $t=\sqrt{\epsilon}$ implies
	\begin{align*}
		\Pr_\rB(\event_1(S,\rB))
		& < \bracket{2+4\sqrt{\epsilon}-2 \cdot \Pr_{\substack{\rB \mid \event_1(S,\rB),\\ \rI \sim S}}\paren{\bias_\rI(\rB) \ge \sqrt{\epsilon}}}^{\frac{|S|}{r+2}}\\
		& = \paren{2+4\sqrt{\epsilon}-2 \cdot 1}^{\frac{|S|}{r+2}}\\
		& = (4\sqrt{\epsilon})^{\frac{|S|}{r+2}},
	\end{align*}
	since conditioned on $\event_1(S,\rB)$, it always holds that $\bias_\rI(\rB) \ge \sqrt{\epsilon}$ for $\rI \in S$.
	Meanwhile, setting $T=[k]$, $\event=\event_2(\rB)$, and $t=\epsilon^{1/(2r)}$ implies
	\begin{align*}
		\Pr_\rB(\event_2(\rB))
		& < \bracket{2+4\epsilon^{1-\frac{1}{2r}}-2 \cdot \Pr_{\substack{\rB \mid \event_2(\rB),\\ \rI \sim [k]}}\paren{\bias_\rI(\rB) \ge \epsilon^{\frac{1}{2r}}}}^{\frac{k}{r+2}}\\
		& \le \paren{2+4\epsilon^{1-\frac{1}{2r}}-2 \cdot (1-\epsilon)}^{\frac{k}{r+2}}\\
		& \le (6\epsilon^{1-\frac{1}{2r}})^{\frac{k}{r+2}}, \tag{as $\epsilon \le \epsilon^{1-\frac{1}{2r}}$}
	\end{align*}
	where in the second step, we use the fact that conditioned on $\event_2(\rB)$, at least a $1-\epsilon$ fraction of indices $\rI \in [k]$ satisfy $\bias_\rI(\rB) \ge \epsilon^{1/(2r)}$.
	This concludes the proof.
\end{proof}

\end{document}